\documentclass[a4paper,12pt]{amsart}

\usepackage{amsmath, amsthm, amsfonts, amssymb} %basic mathematical fonts and symbols
\usepackage{graphicx} %inclusion of pictures
\usepackage{fnpct} %proper footnotes spacing after commas and dots
\usepackage{hyperref}						%references are clickable
\hypersetup{colorlinks=true, linkcolor=red, urlcolor=blue, citecolor=blue, pdfpagemode=UseNone, pdfstartview=FitH, bookmarksopen=true}
\hypersetup{pdftitle=Title}
\usepackage[utf8]{inputenc}

\usepackage{fdsymbol}

\newcommand{\R}{\mathbb{R}} 	%real numbers
\newcommand{\Z}{\mathbb{Z}} 	%integers
\newcommand{\Q}{\mathbb{Q}} 	%rational numbers
\newcommand{\F}{\mathcal{F}} 	%family of sets

\DeclareMathOperator\sd{sd}

\newcommand{\pth}[1]{\left( #1 \right)}
\newcommand{\skel}[2]{#2^{(#1)}}

\DeclareMathOperator{\conv}{conv}

\newtheorem{theorem}{Theorem}
\newtheorem{lemma}[theorem]{Lemma}
\newtheorem{proposition}[theorem]{Proposition}
\theoremstyle{definition}
\newtheorem{definition}[theorem]{Definition}

\begin{document}

%opening
\title{Helly-type problems from a topological perspective}
\author[Pavel Paták]{Pavel Paták}
\author[Zuzana Patáková]{Zuzana Patáková}
\address[Pavel Paták]{Faculty of Information Technology, Czech Technical University in Prague, Thákurova 9, 160~00~Praha, Czech Republic}
\email{patakpav@cvut.cz}
\address{Department of Applied Mathematics, Faculty of Mathematics and Physics, Charles University, Malostranské náměstí 25, 118 00 Praha, Czech Republic}
\email{patak@kam.mff.cuni.cz}
\address[Zuzana Patáková]{Department of Algebra, Faculty of Mathematics and Physics, Charles University, Sokolovská~83, 186 75 Praha, Czech Republic}
\email{patakova@karlin.mff.cuni.cz}
\thanks{Both authors were supported by GA\v{C}R grant 25-16847S}

\begin{abstract}
 We discuss recent progress on topological Helly-type theorems and their variants. We provide an overview of two different proof techniques, one based on the nerve lemma, while the other on non-embeddability.
\end{abstract}

\maketitle

\section{Introduction}

Helly's theorem, about 100 years old now, is one of the cornerstones of convex geometry and still attracts a lot of attention, generates interesting open problems \cite{Barany_Kalai} and has been applied in different areas; see \cite{DeLorea_Goaoc_Meunier_Mustafa, amenta2016hellystheoremnewvariations} for recent surveys.

\begin{theorem}[Helly's theorem~\cite{h-umkkm-23}]\label{thm:helly} Given a finite family $\mathcal F$ of convex sets in $\R^d$, either $\bigcap \mathcal F\neq \emptyset$, or one can find a subfamily $\mathcal G\subseteq \mathcal F$ that has at most $d+1$ elements and satisfies $\bigcap \mathcal G=\emptyset$.
\end{theorem}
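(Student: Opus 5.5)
The plan is the classical route through Radon's lemma, proved by induction on the size $n=|\mathcal F|$. It is convenient to prove the equivalent positive form: if every subfamily of at most $d+1$ members of $\mathcal F$ has nonempty intersection, then $\bigcap\mathcal F\neq\emptyset$. The contrapositive of this is exactly the dichotomy in Theorem~\ref{thm:helly}.

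The first ingredient is Radon's lemma: any $m\ge d+2$ points $x_1,\dots,x_m\in\R^d$ can be split into two disjoint index sets $I\sqcup J$ with $\conv\{x_i: i\in I\}\cap\conv\{x_j:j\in J\}\neq\emptyset$. To prove it, consider the $m$ vectors $(x_i,1)\in\R^{d+1}$. Since $m>d+1$ they are linearly dependent, so there are coefficients $\lambda_i$, not all zero, with $\sum\lambda_i x_i=0$ and $\sum \lambda_i=0$. Let $I=\{i:\lambda_i>0\}$, let $J$ be its complement, and put $S=\sum_{i\in I}\lambda_i>0$. Then the point $\sum_{i\in I}(\lambda_i/S)x_i=\sum_{j\in J}(-\lambda_j/S)x_j$ lies in both convex hulls.

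For the induction itself, the base case $n\le d+1$ is immediate. Now take $n\ge d+2$ and write $\mathcal F=\{C_1,\dots,C_n\}$. For each $i$, the subfamily $\mathcal F\setminus\{C_i\}$ still has the property that every $d+1$ of its members intersect, so by the induction hypothesis we can choose a point $x_i\in\bigcap_{j\neq i}C_j$. Apply Radon's lemma to $x_1,\dots,x_n$, obtaining $I\sqcup J$ and a point $p$ in both convex hulls. Fix any $k$. If $k\notin I$, then every $x_i$ with $i\in I$ lies in $C_k$, so convexity of $C_k$ gives $p\in\conv\{x_i:i\in I\}\subseteq C_k$. If instead $k\in I$, then $k\notin J$, and the same argument with $J$ gives $p\in C_k$. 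Hence $p\in\bigcap\mathcal F$.

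The only step with genuine content is Radon's lemma, specifically the normalisation $\sum\lambda_i=0$. This condition is what forces $I$ and $J$ to both be nonempty, so that the two sides give honest convex combinations. The step is handled by adding the extra coordinate $1$ before taking the linear dependence, as above. Finiteness of $\mathcal F$ is used only to run the induction, so no compactness assumption is needed.
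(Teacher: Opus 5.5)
Your proof is correct and is essentially the paper's argument. The paper also chooses points $p_i\in\bigcap_{j\neq i}G_j$, phrased as a minimal counterexample rather than as an induction on $|\mathcal F|$, and applies Radon's lemma so that the common point of the two convex hulls lies in every set, exactly as in your two-case argument; the only difference is that you prove Radon's lemma yourself via the affine dependence of the vectors $(x_i,1)$, whereas the paper cites it (stated for linear maps $\Delta_{d+1}\to\R^d$).
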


One readily observes that this property extends beyond convex sets, as it is preserved by any homeomorphism of the underlying space.
In this survey we study to which extent convexity in Helly's theorem can be replaced by topological conditions. 
More precisely, we focus on two different techniques for proving topological Helly-type results. Namely, Section \ref{sec:nerve}~provides an overview of Helly-type theorems that were proved using the nerve lemma or its variants. Section~\ref{sec:nerveprop} summarizes how similar methods have been used to prove colorful and fractional modifications of Helly's theorem. 
Section~\ref{sec:embed} is devoted to Helly-type results from the perspective of non-embeddability arguments. Since there has been significant progress in this direction in the past few years, this section goes into more detail and describes how various ideas fit together to enable these new and more general results. In Section \ref{sec:open} we list some related open problems.

\medskip

In order to present the topic concisely, we start with a few definitions that allow for a somewhat nicer formulation of Helly's theorem and the related results. 
\begin{definition}[Helly number \cite{Danzer_Grunbaum_Klee_Helly_number}]
Let $\mathcal C$ be a family of sets.
Then we define its \emph{Helly number} $h(\mathcal C)$ as the smallest natural number $h$ with the following property.
For each finite subfamily $\mathcal F$ of $\mathcal C$ with $\bigcap \mathcal F=\emptyset$ there exists $\mathcal G\subseteq\mathcal F$ with $\bigcap \mathcal G=\emptyset$ and $|\mathcal G|\leq h$.
If there is no such number, we set $h(\mathcal C):=+\infty$.
\end{definition}
With this definition, Helly's theorem states that the Helly number of convex sets in $\R^d$ is at most $d+1$. Moreover, this bound is tight as can be easily seen by considering hyperplanes in general position.

Given a positive integer $h$,
we want to find properties of $\mathcal C$ that ensure $h(\mathcal C)\leq h$.
As mentioned, we study this question in Sections~\ref{sec:nerve} and \ref{sec:embed}.
In Section~\ref{sec:nerve} we look at the results that are based on the so-called nerve lemma and its variants. The advantage of this approach is that the obtained bounds are usually tight. The disadvantage comes from the fact that the sets need to satisfy the assumptions of the nerve lemma, typically they need to be all open or all closed and their homology must be trivial in certain dimensions.

In Section~\ref{sec:embed} we look at topological Helly-type theorems that are derived from non-embeddability results.
The advantage of this approach is that the extra assumptions of the nerve lemma can be lifted. Thus, one can bound the Helly numbers even for sets that are neither open nor closed and have non-trivial homology groups. 
The drawback is that if we allow nontrivial homology, the obtained bounds on the Helly number become enormous very quickly and are probably far from being optimal.

\bigskip

Let us now present another definition and the related reformulation of Helly's theorem.
\begin{definition}[Nerve of a family of sets \cite{nervedef}]
Given a collection $\mathcal C$,
its \emph{nerve $\mathcal N(\mathcal C)$}
is defined as the collection of all finite subfamilies $\mathcal H$ of $\mathcal C$
with $\bigcap\mathcal H\neq\emptyset$.
Formally,
\[
\mathcal N(\mathcal C):=\left \{\mathcal H\subseteq \mathcal C\mid \bigcap \mathcal H\neq \emptyset, \mathcal H\text{ finite}\right\}.
\]
In this definition, we consider $\bigcap \emptyset$ to be always non-empty, i.e. $\emptyset$ always belongs to $\mathcal N(\mathcal C)$.
\end{definition}
Clearly, $\mathcal N(\mathcal C)$ is an abstract simplicial complex, that is, it is a collection of finite sets and if $\mathcal X\subseteq\mathcal Y$ and $\mathcal Y\in\mathcal N(\mathcal C)$, then also $\mathcal X\in \mathcal C$.

\begin{definition}[Missing face]
If $K$ is an abstract simplicial complex
and $H\notin K$ is a finite set with $|H|=k+1$ such that every proper subset $S$ of $H$ belongs to $K$, we say that $S$ is a \emph{missing $k$-face} of $K$.
\end{definition}
Using this definition, Helly's theorem can be restated in the following form.
If $\mathcal C$ denotes the family of all convex sets in $\R^d$, then $\mathcal N(\mathcal C)$ does not contain any missing $k$-face with $k>d$.

In Section~\ref{sec:nerveprop} we study the interaction between the topological properties of $\mathcal N(\mathcal C)$, the Helly number of $\mathcal C$ and other related parameters, such as the fractional Helly number or the colorful Helly number.

\subsection*{Notation and preliminaries.}
We write $\bigcap \mathcal F$ as a shorthand for $\bigcap_{F \in \mathcal F} F$.  By convention, we consider $\bigcap\emptyset$ to be always non-empty.
Similarly, $\bigcup \mathcal G$ means $\bigcup_{G \in \mathcal G} G$ and $\bigcup \emptyset = \emptyset$. For a set $S$ in an Euclidean space, $\operatorname{conv} S$ denotes its convex hull.

Throughout the text we use standard terminology from algebraic topology: The symbol $\Delta_n$ means the standard $n$-dimensional simplex with vertices $e_1,e_2,\ldots, e_{n+1}$.
If $K$ is a simplicial complex, $K^{(k)}$ denotes its $k$-skeleton, that is, the union of all its faces of dimension at most $k$. 
The symbol $\Z_2$ stands for $\Z/2\Z$, that is the ring of integers modulo two.
If $S$ is a subset of a topological space $X$,  $\partial S$ denotes its boundary.
The $i$th homology group of a topological space $X$ with coefficients in a ring $R$ is denoted $H_i(X;R)$ and its reduced version $\widetilde{H}_i(X;R)$. The most usual choices are $R=\Z$, $R=\Z_2$, or $R=\Q$.
If $R$ is a field, the dimension of $\widetilde{H}_i(X;R)$ as a vector space over $R$ is called the $i$th reduced Betti number and it is denoted $\widetilde{\beta}_i(X;R)$.
It follows from the definition of homotopy groups that the $k$-th homotopy group $\pi_k(X)$ of $X$ is trivial if every continuous map $f\colon \partial \Delta_{k+1}\to X$ can be extended to a continuous map $f\colon \Delta_{k+1}\to X$.
A topological space $X$ is called $d$-connected if $\pi_k(X)$ is trivial for all $k\leq d$.
If $X$ is $d$-connected, then $\widetilde{H}_i(X; R)=0$ for all $i\leq d$ and all rings $R$.

For a further introduction to algebraic topology, we refer to the classical texts~\cite{Hatcher:AlgebraicTopology-2002,Munkres:AlgebraicTopology-1984}.

\section{Topological Helly-type theorems from the nerve lemma}\label{sec:nerve}
The first Helly-type theorem from topological assumptions was given by Helly himself.
\begin{theorem}\cite{h-usvam-30}\label{thm:hellyTop}
Let $\mathcal C$ be a family of open sets in $\R^d$ such that the intersection of any subfamily is either contractible\footnote{A set is contractible if it can be continuously shrunken to a single point. } or empty, then $h(\mathcal C)\leq d+1$. 
\end{theorem}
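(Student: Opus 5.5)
The plan is to go through the nerve lemma. Fix a finite subfamily $\mathcal F=\{F_1,\dots,F_n\}\subseteq\mathcal C$ with $\bigcap\mathcal F=\emptyset$. We want $\mathcal G\subseteq \mathcal F$ with $|\mathcal G|\le d+1$ and $\bigcap\mathcal G=\emptyset$. Suppose not, so every subfamily of at most $d+1$ members has nonempty intersection. Pass to a minimal subfamily with empty intersection; by abuse of notation call it $\mathcal F$ again, with $n\ge d+2$ members. Minimality says that every proper subfamily has nonempty intersection. So $\mathcal N(\mathcal F)$ is exactly the boundary of the $(n-1)$-simplex, $\partial\Delta_{n-1}$, which is homeomorphic to the sphere $S^{n-2}$. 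In the language of the introduction, $\mathcal F$ is a missing $(n-1)$-face of $\mathcal N(\mathcal C)$.

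Next I apply the nerve lemma (Borsuk/Leray/Weil) to the open cover of $U:=\bigcup\mathcal F$ by the sets $F_i$. These sets are open, and every nonempty intersection of members of $\mathcal F$ is contractible by hypothesis, so the nerve lemma gives a homotopy equivalence $U\simeq \mathcal N(\mathcal F)\cong S^{n-2}$. Hence $\widetilde H_{n-2}(U;\Z)\cong\Z\neq 0$.

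Finally, $U$ is an open subset of $\R^d$, and I claim open subsets of $\R^d$ have vanishing homology in dimensions $\ge d$. This would contradict the previous paragraph, since $n-2\ge d$. To prove the claim:
\begin{itemize}
\item Every homology class is supported on a compact set. Cover that compact set by finitely many small closed cubes contained in $U$; their union is a compact cubical complex $P$ of dimension $\le d$, and the class lies in the image of $H_{n-2}(P)\to H_{n-2}(U)$.
\item If $n-2>d$, then $H_{n-2}(P)=0$ for dimension reasons, so the class vanishes.
\item If $n-2=d$, a cleaner route is to cite the standard fact $H_k(U)=0$ for $k\ge d$ and every open $U\subseteq\R^d$; this follows, for instance, from Alexander duality or Hatcher's Proposition~3.29.
\end{itemize}
Either way $\widetilde H_{n-2}(U)=0$, which is the desired contradiction.

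The main obstacle is this last step in the borderline case $n-2=d$, where the dimension count alone does not suffice. If I want a self-contained argument, I would let $\mathcal F'$ be $\mathcal F$ minus one member, so that $\bigcap\mathcal F'\neq\emptyset$. Then I would compare covers inductively through Mayer--Vietoris, or more simply observe that a noncompact connected open $d$-manifold has $H_d=0$. For the nerve lemma itself I take the open-cover version for paracompact spaces as known; its hypotheses (openness and contractible finite intersections) are exactly those of Theorem~\ref{thm:hellyTop}.
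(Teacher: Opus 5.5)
Your proof is correct and is the paper's argument. A minimal non-intersecting subfamily is a missing face of $\mathcal N(\mathcal C)$, with nerve $\partial\Delta_{n-1}\cong S^{n-2}$, so the nerve theorem forces $\widetilde H_{n-2}(\bigcup\mathcal F)\neq 0$, which contradicts the vanishing of $H_i$ for $i\ge d$ on open subsets of $\R^d$. The paper simply quotes that vanishing fact, so your compact-support and cubical-complex elaboration is optional; if you keep it, the cubes must come from a common grid for their union to be a cubical complex.
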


This can be easily seen as a direct consequence of the nerve theorem.
\begin{theorem}\cite{nerveLeray,nerveBorsuk,nerveWeil,nerveMcCord}\label{thm:nerveLeray}
Let $\mathcal C$ be a family of open sets in a Haussdorff topological space
such that the intersection of any non-empty finite subfamily $\mathcal D\subseteq \mathcal C$ is either empty or contractible,
then $\mathcal N(\mathcal C)$ is homotopy equivalent to $\bigcup \mathcal C$.
\end{theorem}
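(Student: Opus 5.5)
The natural route is through an intermediate ``blow-up'' space that projects onto both $\bigcup\mathcal C$ and (the geometric realization of) $\mathcal N(\mathcal C)$, and then to show that both projections are homotopy equivalences. Write $X=\bigcup\mathcal C$ and let $|\mathcal N|$ denote the geometric realization of the nerve. Define
\[
Z:=\bigcup_{\emptyset\neq\sigma\in\mathcal N(\mathcal C)} \Bigl(\bigcap\sigma\Bigr)\times|\sigma| \;\subseteq\; X\times|\mathcal N|,
\]
with the two coordinate projections $p\colon Z\to X$ and $q\colon Z\to|\mathcal N|$.

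For $p$, the fiber over $x\in X$ is the full simplex on the vertex set $\{C\in\mathcal C: x\in C\}$. This set is nonempty and finite under a local finiteness assumption, or in general it spans a full simplex of $|\mathcal N|$; in either case the fiber is contractible. To upgrade this to a homotopy equivalence, I would construct a section $s\colon X\to Z$ from a partition of unity $\{\varphi_C\}$ subordinate to the open cover $\mathcal C$, setting $s(x)=\bigl(x,\sum_C\varphi_C(x)\,C\bigr)$. Then $s$ is a genuine section, and $s\circ p\simeq \mathrm{id}_Z$ via the straight-line homotopy in the simplex coordinate. That homotopy stays inside $Z$ because both endpoints lie in the simplex spanned by the sets containing $x$.

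For $q$, the fiber over an interior point of $|\sigma|$ is $\bigcap\sigma$, which is contractible by hypothesis. I would argue by induction over the skeleta of $|\mathcal N|$. Over the $k$-skeleton, $Z$ is obtained from its restriction to the $(k-1)$-skeleton by gluing in the pieces $(\bigcap\sigma)\times|\sigma|$ along $(\bigcap\sigma)\times\partial|\sigma|$, and these are cofibrations. Since each $\bigcap\sigma$ is contractible, each such piece is homotopy equivalent, relative to the gluing data, to the corresponding cell $|\sigma|$ attached to the skeleton. The gluing lemma for homotopy equivalences then shows that $q$ restricted over each skeleton is a homotopy equivalence. Passing to the union (a colimit along cofibrations, or by compactness of spheres if only weak equivalence is needed) gives that $q$ is a homotopy equivalence. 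Composing, $X\simeq Z\simeq|\mathcal N|$.

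The main obstacle I expect is point-set: the existence of the partition of unity used for $p$. A mere Hausdorff space need not admit one, so the statement as given really needs paracompactness (as in Weil's and Borsuk's versions) or must be read as a weak homotopy equivalence (McCord's version). In the latter case I would replace the section argument by McCord's theorem that a map which is a weak equivalence over each set of a basis-like open cover is a weak equivalence. For the application to Theorem~\ref{thm:hellyTop}, this is harmless: subsets of $\R^d$ are metrizable, hence paracompact, and $\mathcal F$ is finite.
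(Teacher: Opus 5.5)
The paper gives no proof of this theorem. It only cites the classical sources and later points to the homotopy-colimit viewpoint of \cite{nerveSurvey}. Your two-projection argument is exactly that standard route (essentially the proof of Corollary~4G.3 in \cite{Hatcher:AlgebraicTopology-2002}).

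Your caveat is more than caution: with only a Hausdorff hypothesis, the statement as printed is false. Cover the closed long ray $L$ by its open initial segments $U_\alpha=[0,(\alpha,0))$, $1\le\alpha<\omega_1$. Each $U_\alpha$ is homeomorphic to $[0,1)$, and every finite intersection is again some $U_\alpha$. So the nerve is a full simplex on uncountably many vertices and hence contractible, whereas $L$ is weakly contractible but not contractible. Paracompactness, or a weak-equivalence conclusion as in \cite{nerveMcCord}, is therefore genuinely needed.

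One step does not hold as written. You give $Z$ the subspace topology from $X\times|\mathcal N|$. You then use that $Z$ is built skeleton by skeleton as a pushout along $(\bigcap\sigma)\times\partial|\sigma|\hookrightarrow(\bigcap\sigma)\times|\sigma|$. For the subspace topology this is false. Take $X=\R$, $A=(-\infty,1)$, $B=(0,\infty)$, and parametrize the edge $|\{A,B\}|$ by $t\in[0,1]$ with $t=0$ the vertex $A$. The points $(1/n^2,1/n)\in(A\cap B)\times[0,1]$ converge to $(0,0)\in A\times\{0\}$ in the subspace topology. However, the set $\bigl((-\tfrac12,\tfrac12)\times\{0\}\bigr)\cup\{(y,t): 0<y<\tfrac12,\ 0\le t<y\}$ is open in the pushout topology, contains $(0,0)$, and contains none of these points. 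So your gluing-lemma and cofibration claims are valid only for the realization topology, i.e.\ the quotient topology from $\coprod_\sigma(\bigcap\sigma)\times|\sigma|$.

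The repair is to use that topology from the start and recheck the $p$-half, which survives:
\begin{itemize}
\item Assume $\operatorname{supp}\varphi_C\subseteq C$ and that the supports are locally finite. Set $\tau_x=\{C: x\in\operatorname{supp}\varphi_C\}$. Then each $x$ has an open neighbourhood $U\subseteq\bigcap\tau_x$ on which every $\varphi_C$ with $C\notin\tau_x$ vanishes.
\item Thus $s|_U$ factors through the single piece $U\times|\tau_x|$.
\item The straight-line homotopy on $(U\cap\bigcap\sigma)\times|\sigma|\times[0,1]$ factors through the piece indexed by $\sigma\cup\tau_x$.
\item $Z\times[0,1]$ is again a quotient space, since $[0,1]$ is locally compact.
\end{itemize}
With this change your proof is complete for paracompact $X$.

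Without paracompactness, $q$ is still a homotopy equivalence, because that half never used a partition of unity. For $p$, your appeal to McCord's theorem works with the basis-like cover $\{\bigcap\sigma\}$. Over $\bigcap\sigma$, the map $p$ has the section $x\mapsto(x,C)$ for any $C\in\sigma$, and this section composed with $p$ is again straight-line homotopic to the identity.
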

We note that the original nerve theorems by Leray~\cite{nerveLeray} and Borsuk~\cite{nerveBorsuk} considered closed sets. 
The open sets were considered later in the works of Weil~\cite{nerveWeil} or McCord~\cite{nerveMcCord}. For a more detailed historical overview of the nerve theorem, we refer to~\cite{nerveSurvey}.

\begin{proof}[Proof of Theorem~\ref{thm:hellyTop} from Theorem~\ref{thm:nerveLeray}]
If there is a missing $k$-face $S$ in $\mathcal N(\mathcal C)$, the nerve theorem implies that the union of the sets from $S$ has a non-trivial $(k-1)$th homology as homotopically equivalent spaces have isomorphic homology groups. 
However, for any open set $G$ in $\R^d$, $H_i(G;\Z)=0$ for $i\geq d$.
This shows $k-1\leq d-1$, as desired.
\end{proof}
There have been several generalizations of the Helly-type results that were obtained by considering stronger forms of the nerve theorem. 
Probably the most general formulation in this direction is due to Montejano. 
\begin{theorem}[\cite{Montejano-Berge13}]\label{thm:montejano}
Let $X$ be a (paracompact Hausdorff) topological space with the property that $\widetilde{H}_t(U;\Z)=0$ for $t \geq d$ and for every open subset $U$ of $X$. Then a finite family $\F$ of open sets in $X$ has a non-empty intersection if for any $1\leq j\leq d+1$ and any $F_1, \ldots, F_j \in \F$, \[\widetilde{H}_{d-j}\left(F_1 \cap F_2 \cap \cdots \cap F_j;\Z\right)=0.\]
(Here we note that by definition $\widetilde{H}_{-1}(G;\Z)=0$ means that $G$ is non-empty.)
\end{theorem}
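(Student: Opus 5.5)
Let $\F=\{F_1,\dots,F_n\}$. I will prove, by induction on $m\geq 1$, the following statement $P(m)$: for every subfamily $\mathcal G\subseteq\F$ with $|\mathcal G|=m$ and every $i\geq 0$,
\[
\widetilde{H}_{d-i}\Big(\bigcap\mathcal H\cap \bigcup_{G\in\mathcal G}G;\Z\Big)=0
\]
whenever $\mathcal H\subseteq\F$ satisfies $|\mathcal H|+i\leq d+1$, $\mathcal H\neq\emptyset$, $|\mathcal H|\ge 1$ (the precise bookkeeping of indices to be fixed during the induction). The point is that the hypothesis gives the vanishing of $\widetilde H_{d-j}$ of $j$-fold intersections only for $j\le d+1$. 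The space $X$ gives the vanishing of $\widetilde H_t$ for $t\geq d$ of every open set, which substitutes for the hypothesis in the high-degree range. Open sets are closed under finite intersections and unions, so every space appearing is open in $X$ and this ambient vanishing is always available. I then apply a Mayer--Vietoris argument to unions, in the spirit of the standard homological proof of the topological Helly theorem.

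Concretely, the plan is as follows. First, prove by induction on $k$ that for any $k$ sets $A_1,\dots,A_k$ drawn from the family of all intersections of members of $\F$, the union $U_k=A_1\cup\dots\cup A_k$ satisfies $\widetilde H_{d-j-k+1}(U_k)=0$. Here $j$ is the maximal number of $F$'s used in each $A_l$, and the hypothesis supplies the case $k=1$. The inductive step uses the Mayer--Vietoris sequence for $U_{k-1}\cup A_k$ with intersection $\bigcup_l (A_l\cap A_k)$. This is open, since $X$ is paracompact Hausdorff and the sets are open, so Mayer--Vietoris holds for singular homology. The sequence
\[
\widetilde H_t(U_{k-1})\oplus\widetilde H_t(A_k)\to \widetilde H_t(U_k)\to \widetilde H_{t-1}\Big(\bigcup_{l<k}(A_l\cap A_k)\Big)
\]
propagates vanishing, since the intersection involves one more $F$ per piece and one fewer piece. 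Degrees $t\ge d$ are killed automatically by the assumption on $X$. Second, I argue by contradiction. Suppose $\bigcap\F=\emptyset$ and choose a minimal subfamily $\F'=\{F_1,\dots,F_m\}$ with empty intersection, so every proper subfamily intersects. If $m\leq d+1$, the hypothesis with $j=m$ says $\widetilde H_{-1}(\bigcap\F')=0$, i.e. $\bigcap\F'\ne\emptyset$, a contradiction. So $m\geq d+2$. Then consider the sets $A_l=\bigcap_{i\neq l}F_i$, which are nonempty and pairwise disjoint-free in the sense that any two of them intersect in $\bigcap\F'=\emptyset$. Their union $U$ is therefore a disjoint union of $m\geq 2$ nonempty open sets, so $\widetilde H_0(U)\neq 0$.

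Third, I derive the contradiction by showing $\widetilde H_0(U)=0$. I do this by computing the homology of $U=\bigcup_l A_l$ via the nerve-like Mayer--Vietoris spectral argument applied to the cover of $F_1\cup\dots\cup F_m$ by the $F_i$: the nerve of $\{F_i\}$ is the boundary of the $(m-1)$-simplex. The Mayer--Vietoris/Leray spectral sequence for this cover has $E^1_{p,q}=\bigoplus_{|S|=p+1}\widetilde H_q(\bigcap_S F_i)$. The hypotheses kill $E^1_{p,q}$ with $p+q=d-1$ for $p\le d$, and the ambient condition kills all $q\geq d$. The surviving class of the sphere $\partial\Delta_{m-1}$ lives in total degree $m-2\geq d$, where it must come from $\widetilde H_{m-2}(\bigcup F_i)=0$; this is where the ambient vanishing yields the contradiction. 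The main obstacle is exactly this step: organizing the spectral sequence or the iterated Mayer--Vietoris so that every differential hitting or leaving the nerve class lands in a group known to vanish. To handle it I will first reduce to the case $m=d+2$ by replacing $F_{d+2}$ with $F_{d+2}\cap\dots\cap F_m$, which is still open and still satisfies the hypotheses since its intersections are intersections of members of $\F$. With $m=d+2$ there is only one diagonal to control, and I will check the differentials there degree by degree.
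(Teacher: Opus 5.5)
\paragraph{A genuine gap.}
The gap is exactly at the step you flag as the main obstacle: minimal counterexamples with $m\ge d+3$ sets.

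You reduce to $m=d+2$ by replacing $F_{d+2},\dots,F_m$ with $G=F_{d+2}\cap\dots\cap F_m$, and justify this by saying the new intersections are intersections of members of $\F$. That inference is invalid, because the hypothesis is not monotone: an intersection of $i$ members is only assumed to have $\widetilde H_\ell=0$ for $\ell\le d-i$ (reading the hypothesis with repetitions allowed).

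\begin{itemize}
\item The family $\{F_1,\dots,F_{d+1},G\}$ needs $\widetilde H_{d-1}(G)=0$ and $\widetilde H_{d-j}(G\cap F_{i_1}\cap\dots\cap F_{i_{j-1}})=0$.
\item But $G$ is an intersection of $m-d-1\ge 2$ members. The hypothesis gives vanishing for it at most up to degree $2d+1-m\le d-2$.
\item These groups are among the targets $E^1_{d-r,r-1}$ of the differentials leaving the nerve class. So your spectral sequence argument loses its input.
\item Skipping the reduction does not help. For $m\ge d+3$ those differentials land in $\widetilde H_q$ of $(m-2-q)$-fold intersections with $1\le q\le d-1$. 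Neither the hypothesis nor the ambient condition controls these groups.
\end{itemize}

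The vanishings you need are not automatic. Obtaining them requires the ambient condition applied to \emph{unions}: for example, $\widetilde H_{d-1}(F_a\cap F_b)=0$ follows from Mayer--Vietoris precisely because $\widetilde H_d(F_a\cup F_b)=0$. It also requires the fact that unions of members are $(d-1)$-acyclic. This bookkeeping is the real content of the theorem, and it is missing from your proposal.

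Your other steps do not fill the hole. Your first induction tracks a single degree, and $A_l\cap A_k$ can involve up to $2j$ sets rather than $j+1$, so it does not close. The contradiction $\widetilde H_0(\bigcup_l A_l)\neq 0$ from your second step is never reached, because the third step computes the homology of $\bigcup_i F_i$, which is a different space.

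\paragraph{How to repair it.}
The paper gives no detailed proof. It attributes to Montejano a Mayer--Vietoris-and-induction argument, and remarks that the result also follows from the multinerve theorem. One clean way to carry out the Mayer--Vietoris route avoids minimal counterexamples altogether.

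\begin{enumerate}
\item \emph{Union lemma.} If all $j$-fold intersections are $(k-j)$-acyclic for $1\le j\le k+1$, then every union of members is $(k-1)$-acyclic. Prove this by induction on $k$ and on the number of sets, using $\bigl(\bigcup_{i<n}A_i\bigr)\cap A_n=\bigcup_{i<n}(A_i\cap A_n)$. The right-hand side is a union of a family that satisfies the hypothesis with $k-1$.
\item \emph{Induction on $d$.} Replace the ambient hypothesis by the inheritable condition ``$\widetilde H_t(\bigcup\F')=0$ for all $t\ge d$ and all subfamilies $\F'$''. The family $\{F_i\cap F_n\}_{i<n}$ satisfies the intersection hypotheses with $d-1$. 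Mayer--Vietoris for $\bigcup_{i\in I}F_i$ and $F_n$, using the union lemma in degree $d-1$, shows it also satisfies the union condition with $d-1$.
\item \emph{Conclusion.} Since $\bigcap_{i<n}(F_i\cap F_n)=\bigcap\F$, everything reduces to $d=0$. That case follows from the same Mayer--Vietoris step with augmented chains and induction on $|\F|$.
\end{enumerate}

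Your spectral sequence computation for $m=d+2$ is fine, provided you use unreduced $H_0$ in the bottom row and choose a component in each $(d+1)$-fold intersection. But that is the easy case.
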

We note that the result itself was known before Montejano published it. For example, it follows from Theorem~\ref{thm:multinerveTwo} proved in \cite{multinerve13}, whose first version predates Montejano's result by several years. What is novel is Montejano's approach, where he does not use nerve theorem directly. Instead, he computes the homology using the long exact sequence of Mayer-Vietoris and induction, which provides a much simpler proof.

So far, the theorems we presented only dealt with families in which the relevant homology groups were trivial. In the next theorem $\widetilde H_0$ is allowed to be non-zero. 
Following the authors of~\cite{multinerve13}, we first present a version of the result that is easier to read but does not attain full generality.
\begin{theorem}[\cite{multinerve13}]\label{thm:multinerveOne}
Let $X$ be a locally arc-wise connected\footnote{A set $U$ is \emph{arc-wise connected} if for every two points $u,v\in U$ there is a continuous map $\gamma\colon [0,1]\to U$ with $\gamma(0)=u$ and $\gamma(1)=v$ such that $\gamma$ is a homeomorphism of $[0,1]$ onto its image.
A topological space $X$ is \emph{locally arc-wise connected}, if for every point $p\in X$ and every open set $U\ni x$, there is an open set $V$ that is arc-wise connected and satisfies $p\in V\subseteq U$.} topological space and $d$ be the smallest integer such that for all $j\geq d$ every open set $G\subseteq X$ satisfies $\widetilde H_j(G;\Q)=0$.

Suppose that $\mathcal F$ is a family of open subsets of $X$, such that for each non-empty $\mathcal G\subseteq \mathcal F$,
$\bigcap \mathcal G$ has at most $r$ connected components and satisfies $\widetilde{H}_i(\bigcap \mathcal G;\Q)=0$ for all $i\geq 1$.
Then $h(\mathcal F)\leq r(d+1)$.
\end{theorem}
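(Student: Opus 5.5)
The plan is to prove the bound through a ``multinerve'': a simplicial complex that records connected components of intersections rather than just whether intersections are nonempty.

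\textbf{Step 1: the multinerve.} Fix a finite family $\mathcal F=\{F_1,\dots,F_n\}$ with $\bigcap\mathcal F=\emptyset$. The goal is a subfamily of size at most $r(d+1)$ with empty intersection. The vertices of the multinerve $M$ are pairs $(i,C)$, where $C$ is a connected component of $F_i$. A set of pairs $\{(i_0,C_0),\dots,(i_k,C_k)\}$ with distinct indices forms a simplex when $C_0\cap\dots\cap C_k$ contains a component of $F_{i_0}\cap\dots\cap F_{i_k}$, that is, when it is nonempty. Since $X$ is locally arc-wise connected, components of open sets are open. The components of $\bigcap\mathcal G$ are arc-connected open sets whose rational homology vanishes in positive degrees, i.e.\ they are $\Q$-acyclic. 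So the cover of $\bigcup\mathcal F$ by all components of all the $F_i$ has acyclic intersections. A homological nerve theorem (the Leray version with $\Q$ coefficients, or the argument from Theorem~\ref{thm:montejano} via Mayer--Vietoris) then gives $\widetilde H_j(M;\Q)\cong\widetilde H_j(\bigcup\mathcal F;\Q)=0$ for $j\ge d$. The same holds for every induced subcomplex $M_S$ on a subfamily $S\subseteq\mathcal F$. In other words, the multinerve is $d$-Leray over $\Q$.

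\textbf{Step 2: projection to the nerve.} There is a simplicial projection $p\colon M\to\mathcal N(\mathcal F)$, $(i,C)\mapsto i$. Each fiber over a vertex has at most $r$ points, and the fiber over any simplex $\sigma$ consists of at most $r$ ``sheets'' (one per component of $\bigcap\sigma$). The condition $\bigcap\mathcal F=\emptyset$ means $[n]$ is not a face of the nerve. Choose a minimal $S\subseteq[n]$ with $\bigcap_{i\in S}F_i=\emptyset$; then $S$ is a missing face of $\mathcal N(\mathcal F)$. The aim is to show $|S|\le r(d+1)$, which gives $h(\mathcal F)\le r(d+1)$.

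\textbf{Step 3: bounding the missing face (the main obstacle).} Over $\partial S$, the multinerve $M_S$ is an $r$-sheeted-ish branched cover of the sphere $\partial\Delta_{|S|-1}$, with no simplex covering $S$ itself. The approach is a transfer/counting argument to show that such a cover must carry nonzero rational homology in degree at least $\lfloor(|S|-1)/r\rfloor-1$ or so. Concretely, consider the chain map $\partial\Delta_S\to M_S$ obtained by summing all lifts of each face, weighted by $1/(\text{number of lifts})$. This transfer-like construction needs adjusting, because the number of lifts varies between faces. Alternatively, use the Mayer--Vietoris approach of Montejano: induct on $|S|$, splitting $S$ into $r$ blocks and using the fact that intersections with at most $r$ components force a component-level Helly number of $d+1$ within each compatible chain of components. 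The key combinatorial claim is that if $|S|>r(d+1)$, then by pigeonhole some choice of components yields $d+2$ sets, each pair of components consistent, forming a family of acyclic open sets. Theorem~\ref{thm:montejano}, applied to these components, gives a nonempty intersection, contradicting $\bigcap_S F_i=\emptyset$.

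Making this pigeonhole rigorous is where I expect the difficulty. The precise formulation is: for each proper subset $T\subsetneq S$ and each component of $\bigcap T$, track which component of $F_i$ contains it. Then show that the component-level intersection pattern has a missing face of size at most $d+1$ in $M$, by Step 1 and the argument of Theorem~\ref{thm:hellyTop}. Finally, argue that a missing face of $M$ projects to a set of size at most $r$ times bigger in the nerve. I would first attempt this last step via a Ramsey-free counting: each vertex of $S$ has at most $r$ components available, so a minimal non-face of size above $r(d+1)$ forces, via the distinct-index condition, a missing face of $M$ of dimension exceeding $d$, which is impossible.
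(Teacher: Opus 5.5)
There is a genuine gap, and it starts in Step~1. Your complex $M$ is not the paper's multinerve. A simplex of $M$ is determined by a tuple of components $C_0,\dots,C_k$ of the individual members, but $C_0\cap\dots\cap C_k$ is in general a union of \emph{several} components of $F_{i_0}\cap\dots\cap F_{i_k}$: two connected sets may meet in two components, like $A$ and $B$ in Figure~\ref{fig:multinerve}. So the cover by components of the $F_i$ does not have acyclic intersections, no nerve theorem applies, and $M$ need not be $d$-Leray.

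Concretely, in $\R^2$ with $p=(x,y)$ and small $\varepsilon,\delta>0$, let
$F_1=\{1<|p|<2,\ y>-\varepsilon\}$, $F_2=\{1<|p|<2,\ y<\varepsilon\}$, $F_3=\{|p|<1.9,\ x<\delta\}$ and $F_4=\{|p|<1.9,\ x>-\delta\}$. All members are connected. Every intersection has at most two contractible components. Any three members meet, but all four do not. Because the members are connected, your $M$ is just $\mathcal N(\mathcal F)=\partial\Delta_3$, so $\widetilde H_2(M;\Q)\neq 0$, whereas $d=2$ and $\bigcup\mathcal F$ is a disc.

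The missing idea is that the multinerve must contain a separate simplex for \emph{each} connected component of $\bigcap\mathcal G$, so several simplices can share a vertex set. It is therefore a simplicial poset, not a simplicial complex, and only for this object does the multinerve theorem give $\widetilde H_j(\mathcal M(\mathcal F))\cong\widetilde H_j(\bigcup\mathcal F)$. Your Step~1 is valid only in the special case where every nonempty intersection of components of members is connected.

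Second, even with the correct multinerve, Step~3 does not supply the argument that carries the proof. Your target there is the right one. Over a missing face $S$ of the nerve, the multinerve must have nonzero rational homology in some degree $\ge\lceil |S|/r\rceil-2=\lfloor(|S|-1)/r\rfloor-1$. Together with the multinerve theorem, this forces $|S|\le r(d+1)$. This is exactly the step for which the paper's approach uses a spectral sequence relating the homology of $\mathcal M(\mathcal F)$ to that of $\mathcal N(\mathcal F)$, and none of your three routes achieves it.

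The weighted transfer is not a chain map, as you observe yourself.

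The pigeonhole route applies Theorem~\ref{thm:montejano} to components of the $F_i$, whose intersections need not be connected, so its hypotheses fail. Moreover, a nonempty intersection of $d+2$ of the sets contradicts nothing, since every proper subfamily of a minimal $S$ with $|S|>d+2$ already intersects.

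Finally, the claim that a missing face of the nerve of size above $r(d+1)$ forces a missing face of $M$ of size above $d+1$ is essentially a projection theorem for Leray numbers. It is not a counting fact; it needs a genuinely homological proof that you do not give. In your proposal it also rests on the false Step~1.
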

To state the theorem in full generality, we introduce the following notion.
\begin{definition}[\cite{multinerve13}]
Let $\mathcal F$ be a family of open sets in a locally arc-wise connected topological space.
We say that $\mathcal F$ is \emph{acyclic with slack $s$} if for every non-empty subfamily $\mathcal G\subseteq\mathcal F$,
\[
\widetilde{H}_i(\bigcap \mathcal G;\Q)=0\qquad\text{for all }i\geq \max(1, s-|\mathcal G|).
\]
\end{definition}
\begin{theorem}[\cite{multinerve13}]\label{thm:multinerveTwo}
Let $X$ be a locally arc-wise connected topological space and $d$ be the smallest integer such that for all $j\geq d$ and all open sets $G\subseteq X$, $\widetilde H_j(G;\Q)=0$.

Assume that $\mathcal F$ is an acyclic family of open sets in $X$ with slack $s$ such that for every subfamily $\mathcal G\subseteq \F$
of at least $t$ sets, $\bigcap \mathcal G$ has at most $r$ connected components,
then 
\[h(\mathcal F)\leq r\cdot \left(\max(d,s,t)+1\right).\]
\end{theorem}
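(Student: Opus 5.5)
We would follow the multinerve strategy of \cite{multinerve13}. Replace the nerve by a finer combinatorial object that records connected components. Show that this object has vanishing rational homology in high dimensions. Finally, transfer the information back to intersection patterns of $\mathcal F$.

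Suppose $\mathcal F$ has a minimal subfamily with empty intersection. Passing to it, we may assume $\mathcal F=\{F_1,\dots,F_n\}$ with $\bigcap\mathcal F=\emptyset$ while every proper subfamily intersects. We must show $n\le r(\max(d,s,t)+1)$; write $m=\max(d,s,t)$.

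\textbf{Step 1: the multinerve.} Consider the poset (or simplicial poset) $M$ whose elements are pairs $(\mathcal G,C)$, where $\mathcal G\subseteq\mathcal F$ is non-empty and $C$ is a connected component of $\bigcap\mathcal G$. The order is $(\mathcal G,C)\le(\mathcal G',C')$ iff $\mathcal G\subseteq\mathcal G'$ and $C\supseteq C'$. Each such $C$ is open and arc-wise connected, since $X$ is locally arc-wise connected.

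A projection $\pi\colon M\to\mathcal N(\mathcal F)$ forgets the component. Over a face $\mathcal G$ of size $k$, the fibre is a set of at most $r$ elements whenever $k\ge t$.

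\textbf{Step 2: homology of the multinerve.} Using a Mayer--Vietoris or Leray spectral sequence for the cover of $\bigcup\mathcal F$ by the components, we would prove that $\widetilde H_j(M;\Q)\cong\widetilde H_j(\bigcup\mathcal F;\Q)$ for $j\ge m$. Each component is acyclic in degrees $\ge\max(1,s-|\mathcal G|)$, and degree $0$ is handled exactly by splitting into components. The nonvanishing entries of the spectral sequence with $E^1_{p,q}$, $q\ge 1$, therefore satisfy $q< s-(p+1)$, so $p+q<s$. Hence in total degrees $\ge m\ge s$ only the row $q=0$ contributes, and that row computes the homology of $M$. Together with the hypothesis on $X$, this gives $\widetilde H_j(M;\Q)=0$ for all $j\ge m$.

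\textbf{Step 3: counting via the projection.} This is the step we expect to be the main obstacle, and where the factor $r$ appears. Since every proper subfamily of $\mathcal F$ intersects, $\mathcal N(\mathcal F)$ is the boundary of an $(n-1)$-simplex. We want to show that if $n>r(m+1)$, then $M$ has nonzero homology in some degree $\ge m$.

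The plan is a transfer/averaging argument. Over faces of size $\ge t$, $\pi$ is an at most $r$-sheeted branched covering. We would try to choose, for each vertex $F_i$, one of its $\le r$ relevant components in a consistent way, so as to build a sub-multinerve that maps isomorphically onto a full boundary $\partial\Delta_{k-1}$ with $k\ge n/r$.

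Concretely, by pigeonhole we would look for a subfamily $\mathcal H$ with $|\mathcal H|\ge\lceil n/r\rceil$ together with a choice of components that is coherent on all proper subfamilies of $\mathcal H$ but has no common point. Such a configuration gives a missing face in a ``component-selected'' nerve, and we would detect it as a nonzero class in $\widetilde H_{|\mathcal H|-2}(M;\Q)$. Working with $\Q$ coefficients should let us average over the at most $r$ lifts.

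Then $|\mathcal H|-2\le m-1$, which yields $n\le r(m+1)$. If the coherent-selection argument fails, the fallback is an Euler-characteristic or Betti-number count of the fibres of $\pi$ over $\partial\Delta_{n-1}$. That count bounds $\sum_j\widetilde\beta_j(M)$ from below by a quantity that is positive once $n>r(m+1)$, and the positive quantity must appear in some degree $\ge m$, contradicting Step 2.
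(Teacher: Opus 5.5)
Steps 1 and 2 are fine and follow the route the paper indicates: the multinerve, and the Mayer--Vietoris spectral sequence whose $q=0$ row is the chain complex of $\mathcal M(\mathcal F)$. The hypotheses are hereditary and every $\bigcup\mathcal G$ is open, so these steps actually give $\widetilde H_j(\mathcal M(\mathcal G);\Q)=0$ for $j\ge m$ and \emph{every} $\mathcal G\subseteq\mathcal F$.

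The genuine gap is Step 3, which carries the entire bound. As planned, it derives the contradiction from the homology of $M=\mathcal M(\mathcal F)$ alone, either through a class in $\widetilde H_{|\mathcal H|-2}(M;\Q)$ or through a Betti count of $M$. No argument of that kind can work. Take disjoint open balls $U_1,U_2\subseteq\R^{k-1}$, and open convex sets $A_1,\dots,A_k\subseteq U_1$ and $B_1,\dots,B_k\subseteq U_2$, each family having empty intersection while any $k-1$ of its members intersect. Put $F_i=A_i\cup U_2$ and $G_i=B_i\cup U_1$. These $n=2k$ sets form a minimal family with empty intersection, and every intersection has at most two components, each convex. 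Yet $\mathcal M(\mathcal F)$ is a disjoint union of two cones (joins of $\partial\Delta_{k-1}$ with a $(k-1)$-simplex), so $\widetilde H_j(\mathcal M(\mathcal F);\Q)=0$ for all $j\ge 1$. Thus a simplicial poset over $\partial\Delta_{n-1}$ with fibres of size at most $2$ can have no homology in positive degrees while $n$ is arbitrarily large. In particular, the coherent selection on the $A_i$, which does exist here with $|\mathcal H|=n/r$, yields a cycle that bounds in $\mathcal M(\mathcal F)$. The obstruction is visible only for the subfamily $\mathcal G=\{F_1,\dots,F_k\}$, where $\widetilde H_{k-2}(\mathcal M(\mathcal G);\Q)\neq 0$; this is what forces $d\ge k-1$.

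Moving the detection to $\mathcal M(\mathcal H)$ does not close the gap. Choosing components coherently on all faces of a large subfamily is a Ramsey-type problem, not a pigeonhole step producing $\lceil n/r\rceil$; this is how Matoušek's argument proceeds, and it gives tower-type bounds. Also, $\pi$ is not a covering: fibre sizes vary, and are unbounded on faces with fewer than $t$ vertices. So there is no transfer to average over.

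What is missing is the projection step the paper alludes to: a spectral sequence relating the homology of the nerve to that of the multinerves. Its inputs are the vanishing of $\widetilde H_j(\mathcal M(\mathcal G);\Q)$ for $j\ge m$ and all $\mathcal G$, together with the bound $r$ on fibres over faces with at least $t$ vertices. From these it shows that $\mathcal N(\mathcal F)$ is $(r(m+1)-1)$-Leray, in the spirit of Kalai and Meshulam's bound on Leray numbers of projections. A $k$-Leray complex has no missing face with more than $k+1$ vertices, and this gives $h(\mathcal F)\le r(m+1)$.
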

In order to prove these theorems, the authors introduce the multinerve $\mathcal M(\mathcal F)$. The multinerve resembles the nerve, but it has the extra crucial property that iff $\bigcap \mathcal G$ has $k$ connected components, the intersection in the multinerve is not represented by one simplex, but by $k$ distinct simplices.
It follows that in a multinerve distinct simplices may have the same set of vertices; see Figure~\ref{fig:multinerve} for illustration.
\begin{figure}
\begin{center}
\includegraphics{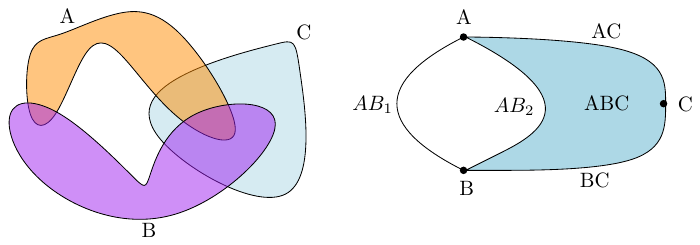}

A collection of three sets (on the left) and the corresponding multinerve (on the right).
Because the intersection of $A$ and $B$ consists of two components, there are two edges $AB_1$ and $AB_2$ that connect the points $A$ and $B$. The intersection $A\cap B\cap C$ only contains one of the connected components of $A\cap B$. Thus in the multinerve, there is only one triangle $ABC$ bounded by edges $AB_2$, $AC$ and $BC$.
\caption{An example of a multinerve}\label{fig:multinerve}
\end{center}
\end{figure} 
The multinerve is thus no longer a simplicial complex; instead, it is a simplicial poset. For technical details and proper definitions, we refer to~\cite{multinerve13}.

There is a variant of the nerve lemma for this construction.
\begin{theorem}[Multinerve theorem~\cite{multinerve13}]
Let $X$ be a locally arc-wise connected topological space and let $\mathcal F$ be a family of open subsets of $X$.
If $\mathcal F$ is acyclic with slack $s$,
then
$\widetilde{H}_j(\mathcal M(\mathcal F))\cong \widetilde{H}_j(\bigcup \mathcal F)$ for $j=0$ and for all $j\geq s$.
\end{theorem}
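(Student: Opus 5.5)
We plan to prove the multinerve theorem by a Mayer--Vietoris spectral sequence argument, which is the standard route for nerve-type statements with homological hypotheses. Fix the family $\mathcal F=\{F_1,\dots,F_n\}$; it suffices to treat finite families, since homology commutes with directed unions. Let $U=\bigcup\mathcal F$. The open cover $\{F_i\}$ of $U$ gives the Mayer--Vietoris double complex
\[
E^1_{p,q}=\bigoplus_{|\mathcal G|=p+1} H_q\Bigl(\bigcap\mathcal G;\Q\Bigr),
\]
with horizontal differential the alternating sum of maps induced by inclusions. This spectral sequence converges to $H_{p+q}(U;\Q)$. Openness of the sets is exactly what makes the singular chains subordinate to the cover compute $H_*(U)$, so convergence holds.

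The key observation is that the bottom row $q=0$ is the simplicial chain complex of the multinerve. Indeed, $H_0(\bigcap\mathcal G;\Q)$ is the free $\Q$-vector space on the connected components of $\bigcap\mathcal G$. Local arc-wise connectedness ensures that components coincide with path components and are open, so $H_0$ really counts them. These components are precisely the simplices of $\mathcal M(\mathcal F)$ with vertex set $\mathcal G$. The differential $d^1$ sends a component $C$ of $\bigcap\mathcal G$ to the alternating sum of the components of the smaller intersections $\bigcap(\mathcal G\setminus\{F\})$ containing $C$. This is exactly the boundary map in the simplicial poset $\mathcal M(\mathcal F)$. Hence $E^2_{p,0}\cong H_p(\mathcal M(\mathcal F);\Q)$.

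Next we use the slack hypothesis. It says that $E^1_{p,q}=0$ whenever $q\ge 1$ and $q\ge s-(p+1)$, that is, whenever $q\ge1$ and $p+q\ge s-1$. So in total degrees $\ge s-1$ only the bottom row survives at $E^1$. Consider a differential $d^r\colon E^r_{p,0}\to E^r_{p-r,r-1}$ with $r\ge2$, starting at total degree $j=p\ge s$. Its target has total degree $j-1\ge s-1$ and $q=r-1\ge1$, so the target vanishes. Incoming differentials into $E^r_{p,0}$ come from negative rows and are zero. Thus $E^2_{j,0}=E^\infty_{j,0}$ for $j\ge s$. In total degree $j\ge s$ every other term $E^\infty_{p,q}$ with $q\ge1$ vanishes, so $H_j(U)\cong E^\infty_{j,0}\cong H_j(\mathcal M(\mathcal F))$. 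The degree $j=0$ is handled separately: $E^2_{0,0}=E^\infty_{0,0}$ always holds, and $H_0(U)\cong E^\infty_{0,0}$. The passage to reduced homology is routine, via the augmentation to $\Q$, which is compatible on both sides.

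The step I expect to be the main obstacle is the identification of the $E^2$ bottom row with the homology of the multinerve. This requires a careful definition of $\mathcal M(\mathcal F)$ as a simplicial poset, or as a $\Delta$-complex whose face maps are "component inclusion". It also requires a check that the orientations and signs in its cellular chain complex match the Mayer--Vietoris differential. A secondary technical point is justifying convergence of the spectral sequence for a general (not paracompact) locally arc-wise connected space. For this I would rely on the small-chains theorem for singular homology with respect to open covers, rather than on sheaf-theoretic tools.
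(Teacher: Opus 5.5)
Your argument is correct. The survey contains no proof of this statement. It only cites \cite{multinerve13} and remarks that the multinerve theorem is a special case of the homotopy colimit machinery, related to the nerve by a spectral sequence. Your Mayer--Vietoris spectral sequence argument is the standard way to carry out that idea in rational homology. The bottom row $E^1_{\ast,0}$ is the chain complex of $\mathcal M(\mathcal F)$, which is also the homotopy colimit of $\mathcal G\mapsto\pi_0(\bigcap\mathcal G)$. The slack hypothesis kills every entry with $q\ge 1$ and $p+q\ge s-1$, which is exactly the vanishing your degree count for $d^r\colon E^r_{j,0}\to E^r_{j-r,r-1}$ requires.

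Two small precision points:
\begin{enumerate}
\item Convergence is automatic for a first-quadrant double complex, so neither the reduction to finite families nor any paracompactness is needed. What openness buys is the identification of the abutment with $H_\ast(\bigcup\mathcal F;\Q)$. For a fixed small singular simplex, the nonempty finite subfamilies of $\{F\in\mathcal F:\operatorname{Im}\sigma\subseteq F\}$ form a full simplex, so the augmented rows are exact. The small-chains theorem then finishes the identification.
\item The sign issue you flag disappears once you fix a total order on $\mathcal F$. Then $\mathcal M(\mathcal F)$ is a semi-simplicial set with face maps $(\mathcal G,C)\mapsto(\mathcal G\setminus\{F\},\,C')$, where $C'$ is the component of $\bigcap(\mathcal G\setminus\{F\})$ containing $C$. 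Its simplicial chain complex is literally $(E^1_{\ast,0},d^1)$.
\end{enumerate}
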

The proof of Theorem~\ref{thm:multinerveTwo} uses a spectral sequence that relates the homology of $\mathcal N(\mathcal F)$ and the homology of $\mathcal M(\mathcal F)$.

\bigskip
As usual in life, nothing is all-mighty, and currently known nerve lemmata are not an exception. We conclude this section by discussing the limitations of the presented approach.

 To put the multinerve theorem in broader context, one can see the multinerve construction and the multinerve theorem as a special case of homotopy colimit. Given a family $\mathcal F$ of sets in a topological space $X$, the homotopy colimit is a construction that builds a nice topological space $T$ that resembles $\bigcup \F$ based solely on the homotopy types of intersections $\bigcap \mathcal G$, where $\mathcal G$ ranges over all non-empty subfamilies of $\mathcal F$. For details we refer to~\cite{nerveSurvey}.
 The homotopy colimit $T$ is then related to the nerve $\mathcal N(\mathcal F)$
 through a spectral sequence. Unfortunately, this spectral sequence is very hard to analyze in full generality. So far, it has only been done if the relevant homology groups are trivial, which led to Theorem~\ref{thm:multinerveTwo}. 

\section{Topological properties of the nerve}\label{sec:nerveprop}
In this part of the survey, we look at various properties related to the Helly number
and to various topological conditions of the nerve that imply these properties.

\begin{definition}[$d$-collapsibility \cite{wegner75}]
Let $K$ be a simplicial complex.
Let $\sigma$ be a face of $K$ of dimension $<d$ that is contained in a unique facet\footnote{Facets are inclusion-wise maximal faces of $K$.} of $K$.
Let $L$ be the complex obtained from $K$ by removing $\sigma$ and all faces that contain it.
Then we say that $L$ is obtained from $K$ by an \emph{elementary $d$-collapse}.

A complex $K$ is \emph{$d$-collapsible}, if there is a sequence of elementary $d$-collapses that reduces it to a single point. 
\end{definition}
If $K$ is $d$-collapsible, then any induced subcomplex of $K$ is also $d$-collapsible.
This is because the collapse sequence for $K$ also induces the collapse sequence of the induced subcomplex.
Thus, if $\mathcal N(\mathcal F)$ is $d$-collapsible, then there is no missing $k$-face in $\mathcal N(\mathcal F)$ with $k>d$.
Consequently, any family $\mathcal F$ with $d$-collapsible nerve has Helly number at most $d+1$.  
By combining this fact with the result of Wegner~\cite{wegner75} that the nerve of any finite family of convex sets in $\R^d$ is $d$-collapsible,
we obtain yet another proof of Helly's theorem (Theorem~\ref{thm:helly}).
Unfortunately, checking $d$-collapsibility is an NP-complete problem, already for $d\geq 4$, see~\cite{dcollapse}.

Closely related to $d$-collapsibility is the notion of $d$-Lerayness.
\begin{definition}
Let $K$ be a simplicial complex
and $R$ be a ring of coefficients. 
We say that $K$ is \emph{$d$-Leray} over $R$, if for each induced subcomplex $L$ and each $i\geq d$, $\widetilde{H}_i(L;R)=0$.
\end{definition}
Since elementary $d$-collapses do not change the homology groups $\widetilde{H}_i(L;R)$ with $i\geq d$, we immediately see that any $d$-collapsible complex is $d$-Leray. Obviously, if $K$ is $d$-Leray, there are no missing $k$-faces in $K$ with $k>d$. Thus, $\mathcal N(\mathcal F)$ being $d$-Leray is a stronger condition than $h(\mathcal F)\leq d+1$.
 
As it turns out, many statements that have first been proven for convex sets and then generalized to families with $d$-collapsible nerve
also hold under the weaker assumptions that the nerve is $d$-Leray, see~ Theorems~\ref{thm:transversal-hypergraph} and \ref{thm:colHel}.

This is very useful, when combined with the following variant of Theorem~\ref{thm:multinerveOne}:
\begin{theorem}[\cite{multinerve13}]
Let $X$ be a locally arc-wise connected topological space and $d$ be the smallest integer such that for all $j\geq d$ and all open sets $G$  we have $\widetilde H_j(G;\Q)=0$.

Assume that $\mathcal F$ is a family of open sets in $X$ with slack $s$ such that for every subfamily $\mathcal G\subseteq F$
of at least $t$ sets, $\bigcap \mathcal G$ has at most $r$ connected components,
then $\mathcal N(\mathcal F)$
is $k$-Leray for $k=r\cdot \left(\max(d,s,t)+1\right)-1$.
\end{theorem}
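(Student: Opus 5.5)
We reduce the Leray statement to a homology vanishing statement for the nerve of an arbitrary subfamily. Induced subcomplexes of $\mathcal N(\mathcal F)$ are exactly nerves $\mathcal N(\mathcal F')$ of subfamilies $\mathcal F'\subseteq\mathcal F$. The hypotheses pass to subfamilies: acyclicity with slack $s$ and the bound of $r$ components on intersections of at least $t$ sets. It therefore suffices to prove the following. For every such family $\mathcal F$,
\[
\widetilde H_i(\mathcal N(\mathcal F);\Q)=0\quad\text{for all } i\geq k, \qquad k=r(\max(d,s,t)+1)-1 .
\]
Set $m=\max(d,s,t)$.

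The main tool is the multinerve $\mathcal M(\mathcal F)$ together with the spectral sequence relating it to $\mathcal N(\mathcal F)$. The key steps are as follows.

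\textbf{Step 1.} By the multinerve theorem, for $j\geq s$ we have $\widetilde H_j(\mathcal M(\mathcal F))\cong\widetilde H_j(\bigcup\mathcal F)$. The union is an open set, so this group vanishes for $j\geq d$. Hence $\widetilde H_j(\mathcal M(\mathcal F))=0$ for all $j\geq\max(d,s)$. The same holds for the multinerve of every subfamily, and also for the sub-multinerves arising over links.

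\textbf{Step 2.} Consider the natural projection $\pi\colon \mathcal M(\mathcal F)\to\mathcal N(\mathcal F)$. It sends a simplex, i.e.\ a pair (subfamily, component of its intersection), to the subfamily. The fiber over a simplex $\sigma=\mathcal G$ is the finite set of components of $\bigcap\mathcal G$. When $|\mathcal G|\geq t$, this set has at most $r$ elements. Filtering by skeleta, or equivalently using the Leray spectral sequence of $\pi$ with the coefficient system $\sigma\mapsto \Q^{\#\text{components}}$, gives a spectral sequence $E^2_{p,q}\Rightarrow H_{p+q}(\mathcal M(\mathcal F))$.

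\textbf{Step 3.} Argue by contradiction: take a minimal subfamily $\mathcal F$ with $\widetilde H_i(\mathcal N(\mathcal F))\neq 0$ for some $i\geq k$. By minimality, every proper induced subcomplex has vanishing homology in degrees $\geq k$. This forces $\mathcal N(\mathcal F)$ to look homologically like a pseudomanifold-type obstruction on at least $k+2=r(m+1)+1$ vertices, concentrated in its top degrees.

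\textbf{Step 4.} Count. A face of size $\geq m+1\geq t$ has at most $r$ components in its intersection. Applying a pigeonhole argument on components (as in the proof of Theorem~\ref{thm:multinerveTwo}), one extracts a subfamily of size $\geq m+2$ whose multinerve restricted to a single "component branch" carries the nonzero class. This yields $\widetilde H_j(\mathcal M)\neq0$ for some $j\geq m$, contradicting Step~1. In other words, we mimic the proof that $h(\mathcal F)\leq r(m+1)$, but track a homology class rather than a missing face.

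\textbf{Main obstacle.} The crux is Step~4: transporting nonvanishing top homology of the nerve to the multinerve, where the fibers of $\pi$ have varying cardinality (at most $r$ only above dimension $t-1$). My first attempt would be a transfer-type argument: over faces of size $\geq t$, $\pi$ behaves like an at-most-$r$-sheeted branched covering, and the resulting dimension blow-up by a factor $r$ arises exactly because the $E^2$ page is supported in a strip of width governed by $r$.
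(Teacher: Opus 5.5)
Your reduction to subfamilies and Step~1 are correct, and they are the first half of the argument in \cite{multinerve13}: the multinerve theorem, applied to every subfamily, shows that all induced sub-multinerves have vanishing rational homology in degrees $\geq\max(d,s)$. The gap is the whole passage from $\mathcal M$ to $\mathcal N$.

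The spectral sequence of Step~2 converges to $H_*(\mathcal M)$. With the coefficient system $\sigma\mapsto\Q^{\pi^{-1}(\sigma)}$ it merely rewrites $C_*(\mathcal M)$ as chains on $\mathcal N$ with local coefficients. So it only sees a group you already know vanishes, and says nothing about $H_*(\mathcal N;\Q)$. The ``pseudomanifold-type obstruction'' in Step~3 does not follow from minimality, and it is never used.

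Step~4 has no mechanism. A cycle of $\mathcal N$ cannot be lifted by choosing ``one component branch'': choices over a face and over its subfaces need not be compatible, and faces with fewer than $t$ sets may have arbitrarily many components. Such a branch would not be an induced sub-multinerve, so Step~1 would not even apply to it. And no selection of single components turns a class of degree $\geq r(m+1)-1$ into one of degree $\geq m$. The proof of Theorem~\ref{thm:multinerveTwo} has no pigeonhole step you could borrow; that theorem follows from the statement at hand.

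A transfer argument is also ruled out. It would give $\widetilde H_i(\mathcal N;\Q)=0$ for all $i\geq\max(d,s)$, independently of $r$. But four open subsets of $\R$ whose intersections are unions of at most two open intervals can have nerve $\partial\Delta_3$. There $r=2$ and $\max(d,s)=1$, yet $\widetilde H_2(\mathcal N)\neq 0$.

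What is missing is a projection theorem of the type Kalai and Meshulam proved for simplicial complexes, namely $L(\pi(X))\leq r\,L(X)+r-1$ for Leray numbers. In it, the factor $r$ comes from $(p+1)$-tuples of \emph{distinct} components. Since $\cdots\to\Lambda^2\Q^S\to\Q^S\to\Q\to 0$ is exact for every non-empty finite set $S$, consider the double complex $C_{p,q}=\bigoplus_\sigma\Lambda^{p+1}\Q^{\pi^{-1}(\sigma)}$, with $\sigma$ running over the $q$-faces of $\mathcal N$. Its total homology is $H_*(\mathcal N;\Q)$. Filtering by $p$ gives $E^1_{p,q}\Rightarrow H_{p+q}(\mathcal N;\Q)$, where $E^1_{p,q}$ is the alternating homology of the $(p+1)$-fold multiple-point complex (faces of $\mathcal N$ together with $p+1$ distinct components of their intersection).

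Two facts then finish the proof.
First, the columns with $p\geq r$ are supported on the $(t-2)$-skeleton of $\mathcal N$. That part of the double complex computes $H_*(\mathcal N^{(t-2)})$ and can be split off; this is where $t$ enters.
Second, the $(p+1)$-fold multiple-point complexes have vanishing homology in degrees $\geq (p+1)\max(d,s)$. This is a genuinely new estimate, and it has to be deduced from the vanishing for all induced sub-multinerves.

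Then for $n\geq r(m+1)-1$ and $p\leq r-1$ one has $q=n-p\geq (p+1)(m+1)-1-p=(p+1)m$, so every relevant $E^1_{p,q}$ vanishes. Your ``strip of width $r$'' intuition corresponds only to the first fact. The second, which carries the real content, is absent from the proposal.
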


A first example of a theorem that was originally proven for convex sets and then extended to $d$-Leray complexes is the fractional Helly theorem, originally proved for convex sets by Katchalski and Liu~\cite{fractional79}. The bounds were later improved by Kalai~\cite{Kalai84} to the optimal ones and finally the theorem was proven for all families whose nerve is $d$-Leray.
\begin{theorem}[Fractional Helly theorem for $d$-Leray complexes {\cite[Theorem 12]{transversal-hypergraph}}]\label{thm:transversal-hypergraph}
For every positive integer $d$, the following holds.
If a $d$-Leray simplicial complex $K$ with $v$ vertices
contains at least $\alpha \cdot \binom{v}{d+1}$ faces of dimension $d$, $\alpha\in(0,1)$,
then $K$ contains a face with $\beta_d(\alpha)\cdot v$ vertices, where $\beta_d(\alpha)=1-(1-\alpha)^{\frac{1}{d+1}}$.
\end{theorem}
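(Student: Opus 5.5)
The plan follows Kalai's approach via algebraic shifting, which preserves face numbers and turns $d$-Lerayness into a combinatorial property. Fix a field (say $\Q$, or whichever field $K$ is $d$-Leray over) and let $\Delta(K)$ denote the exterior algebraic shifting of $K$ with respect to a generic basis. Three standard facts will be used. First, $\Delta(K)$ is a shifted simplicial complex with the same $f$-vector as $K$. Second, the Betti numbers of all induced subcomplexes behave well enough that $K$ being $d$-Leray implies $\Delta(K)$ is $d$-Leray; for shifted complexes this means every face of dimension $\ge d$ contains the vertex $1$, so that $\Delta(K)$ has no ``free'' faces of size $>d$ avoiding the initial vertices. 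Third, and crucially (Kalai), the $f$-vector of a $d$-Leray complex satisfies the same inequalities as that of a shifted $d$-Leray complex.

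The core step reduces the problem to the following counting statement. If $K$ is $d$-Leray with $v$ vertices and no face of size $k+1$ (so $\dim K<k$), then $f_d(K)\le \binom{v}{d+1}-\binom{v-k+d}{d+1}$. I plan to prove this for shifted complexes first. In a shifted $d$-Leray complex with no face of size $k+1$, every $(d+1)$-set $S\subseteq\{k-d+1,\dots,v\}$ is a non-face. Indeed, suppose $S$ were a face. By shiftedness, together with the Leray condition, which says the link structure of sets of size $\ge d+1$ is a cone over the initial vertices, one can adjoin the vertices $1,\dots,k-d$ to $S$ and obtain a face of size $k+1$, a contradiction. This yields at least $\binom{v-k+d}{d+1}$ missing $d$-faces. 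I then transfer the bound to $K$ via shifting, using the first two facts. The precise form I would aim for is Kalai's upper bound theorem for $d$-Leray complexes: $f_{d+j}$ is dominated by the corresponding count for the ``cyclic-like'' extremal complex, which is the join of a $(k-d-1)$-simplex-type block with the $(d-1)$-skeleton.

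The final step is arithmetic. Suppose $f_d(K)\ge\alpha\binom{v}{d+1}$ and the largest face has $k<\beta v$ vertices. Then $\binom{v-k+d}{d+1}\le(1-\alpha)\binom{v}{d+1}$. Asymptotically, $\binom{v-k+d}{d+1}/\binom{v}{d+1}\ge (1-k/v)^{d+1}$, up to the $+d$ slack, which only helps. Hence $(1-k/v)^{d+1}\le 1-\alpha$, which gives $k/v\ge 1-(1-\alpha)^{1/(d+1)}=\beta_d(\alpha)$. Small rounding issues are handled by noting that the $+d$ term makes the inequality hold exactly rather than only asymptotically.

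\textbf{Main obstacle.} The hardest step is verifying that exterior shifting preserves $d$-Lerayness, i.e.\ controlling the homology of all induced subcomplexes of $\Delta(K)$ from those of $K$. Equivalently, it is proving directly the upper-bound inequality $f_d(K)\le\binom{v}{d+1}-\binom{v-k+d}{d+1}$. If the shifting route proves too delicate, I would first try to prove this inequality by induction on $v$. The idea is to delete a vertex $x$ and use the Leray condition on the link $\mathrm{lk}(x)$ and on $K-x$, both of which are $(d-1)$- and $d$-Leray respectively. Here the Mayer–Vietoris sequence for $K=(K-x)\cup\mathrm{st}(x)$ bounds how many new $d$-faces containing $x$ can appear without creating a large face.
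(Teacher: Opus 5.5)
Your overall route is the standard one and matches what the paper relies on. The survey gives no proof of this statement; it only cites AKMM, where the result rests on Kalai's upper bound theorem for $d$-Leray complexes. That theorem is proved by exterior algebraic shifting (shifting preserves the $f$-vector and $d$-Lerayness, and one then counts non-faces of the shifted complex), and the fractional Helly bound follows by exactly your arithmetic. Citing Kalai's preservation theorem is legitimate, since it is the same black box behind the cited result. Your arithmetic is in fact exact, not just asymptotic: for $0\le i\le d$ one has $\frac{v-k+d-i}{v-i}\ge 1-\frac{k}{v}$ because $ki\le dv$, so $\binom{v-k+d}{d+1}\ge (1-k/v)^{d+1}\binom{v}{d+1}$.

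What needs repair is your combinatorial description of shifted $d$-Leray complexes.
\begin{itemize}
\item As written, \emph{every face of dimension $\ge d$ contains the vertex $1$} is false: a full simplex on $v\ge d+2$ vertices is shifted and $d$-Leray.
\item The natural correction, \emph{$S\cup\{1\}\in\Delta$ for every face $S$ with $|S|\ge d+1$}, is only the global condition $\widetilde H_i(\Delta)=0$ for $i\ge d$. It lets you adjoin $1$ but not $2,\dots,k-d$.
\item Your counting step needs the induced-subcomplex part of the Leray condition, in this form: if $\Delta$ is shifted and $d$-Leray, $S\in\Delta$, $|S|\ge d+1$ and $w<\min S$, then $S\cup\{w\}\in\Delta$.
\end{itemize}
Here is the proof. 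The induced complex $\Gamma=\Delta[S\cup\{w\}]$ is shifted with least vertex $w$. In any shifted complex with least vertex $w$, each face $T$ with $T\cup\{w\}\notin\Gamma$ is a facet whose whole boundary lies in the cone $\mathrm{st}_\Gamma(w)$. Hence $\Gamma$ is a wedge of spheres, and $\widetilde\beta_i(\Gamma)$ equals the number of such $T$ with $|T|=i+1$. If $S\cup\{w\}\notin\Delta$, then $T=S$ gives $\widetilde H_{|S|-1}(\Gamma)\neq 0$ with $|S|-1\ge d$, contradicting $d$-Lerayness.

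Now take any $(d+1)$-set $S\subseteq\{k-d+1,\dots,v\}$ that is a face. Adjoining $k-d,k-d-1,\dots,1$ in this decreasing order keeps each new vertex below the current minimum, so you obtain a face of size $k+1$, which is the contradiction you want.

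Finally, drop the fallback as stated. The link of a vertex in a $d$-Leray complex is in general only $d$-Leray, not $(d-1)$-Leray. For example, the cone over $\partial\Delta_d$ is $d$-Leray, but the link of its apex is $\partial\Delta_d$, which has $\widetilde H_{d-1}\neq 0$. So the Mayer--Vietoris induction you sketch rests on a false premise.
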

Hyperplanes in general positions show that, even in convex setting, the number $d+1$ cannot be lowered.

Before we state the next theorem in full generality, we recall a preliminary version for convex sets, which is somewhat easier to understand.
\begin{theorem}[Colorful Helly theorem \cite{lovasz74}]\label{thm:colHelly}
Let $\mathcal C_1$,\ldots, $\mathcal C_{d+1}$ be $d+1$ finite families of convex sets in $\R^d$. If for every choice $C_1\in\mathcal C_1$, $C_2\in\mathcal C_2$, \ldots, $C_{d+1}\in\mathcal C_{d+1}$, $C_1\cap C_2\cap \cdots\cap C_{d+1}\neq \emptyset$, then for some 
$i$, $\bigcap \mathcal C_i\neq \emptyset$. 
\end{theorem}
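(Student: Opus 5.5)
Assume, for contradiction, that no family $\mathcal C_i$ has a common point; by finiteness we will find a point lying in $\bigcap\mathcal C_i$ for some $i$. This is Lovász's argument, which uses only Helly's theorem (Theorem~\ref{thm:helly}) and the fact that convex sets are closed under intersection. First we reduce to the case where all sets are compact convex sets, in fact polytopes. For every colorful $(d+1)$-tuple pick a point in its intersection, and replace each $C\in\mathcal C_i$ by the convex hull of the finitely many chosen points lying in $C$. The colorful hypothesis still holds, and the sets only shrink, so the conclusion for the new sets implies it for the old ones.

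Next, fix a generic linear functional $\ell$ on $\R^d$. For every subfamily $\mathcal G$ with $\bigcap\mathcal G\neq\emptyset$, let $p(\mathcal G)$ be the point of $\bigcap\mathcal G$ minimizing $\ell$. Genericity and compactness make this point unique. Among all colorful subfamilies $\mathcal G=\{C_1,\dots,C_d\}$ that omit exactly one color, choose the one whose minimum $m(\mathcal G)=\ell(p(\mathcal G))$ is largest. Say the omitted color is $d+1$, and put $p=p(\mathcal G)$.

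The key step is a standard lemma from LP-type theory. If $\mathcal G$ is a finite family of compact convex sets in $\R^d$ with nonempty intersection, then some subfamily $\mathcal G'\subseteq\mathcal G$ of at most $d$ sets satisfies $\min_{\bigcap\mathcal G'}\ell=\min_{\bigcap\mathcal G}\ell$. This follows from Helly's theorem. Apply it to the sets of $\mathcal G$ together with the open halfspace $\{x:\ell(x)<m\}$. These sets have empty intersection, so some $d+1$ of them already do. The halfspace must be among these $d+1$, since $\bigcap\mathcal G\neq\emptyset$. Hence $d$ sets of $\mathcal G$ have no point below $m$, which gives the lemma. Compactness keeps the minima attained.

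Now take any $C\in\mathcal C_{d+1}$. The family $\mathcal G\cup\{C\}$ is a full colorful tuple, so it has nonempty intersection, and its minimum $m'$ satisfies $m'\ge m$. By the lemma, some $d$ of its sets already achieve $m'$. If these $d$ sets omitted $C$, they would be exactly $\mathcal G$, so $m'=m$. Otherwise they form a colorful family omitting one color among $1,\dots,d$. Its minimum is $m'$, and by the maximality of $m$ we get $m'\le m$. In both cases $m'=m$, and uniqueness of the minimizer of $\ell$ on $\bigcap\mathcal G$ forces $p(\mathcal G\cup\{C\})=p$. Therefore $p\in C$ for every $C\in\mathcal C_{d+1}$, so $p\in\bigcap\mathcal C_{d+1}$.

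The main obstacle is making the genericity and uniqueness of minimizers rigorous. Concretely, $\ell$ must have a unique minimizer on each of the finitely many polytopes $\bigcap\mathcal G$. Choosing $\ell$ outside a finite union of hyperplanes, namely those orthogonal to the edge directions of these polytopes, handles this. A lexicographic tie-breaking rule would be an alternative.
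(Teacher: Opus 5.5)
Your proof is correct. It is Lovász's original extremal argument, which is a genuinely different route from the paper's.

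The paper gives no geometric argument. It obtains the statement as a corollary of the Kalai--Meshulam theorem (Theorem~\ref{thm:colHel}). That theorem is applied to the nerve $K$ of $\bigcup_i\mathcal C_i$, which is $d$-Leray because Wegner's theorem \cite{wegner75} makes it $d$-collapsible. The matroid $M$ is the partition matroid of the color classes. Then $M\subseteq K$ is exactly the colorful hypothesis, and a face $\sigma\in K$ with $\rho(V\setminus\sigma)\le d$ must contain a whole class $\mathcal C_i$, so $\bigcap\mathcal C_i\neq\emptyset$.

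Your route is elementary and self-contained. It uses only Helly's theorem itself, through the lemma that $d$ constraints already determine the $\ell$-minimum, together with the linear structure of $\R^d$: halfspaces, a generic functional, and the polytope reduction. It also exhibits the common point explicitly, as the lowest point of the highest $d$-colorful intersection. The price is that it is tied to convexity.

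The paper's deduction rests on two deep results, but it is purely combinatorial-topological. It therefore gives the colorful Helly property for every family whose nerve is $d$-Leray (for instance good covers, or the families treated by the multinerve results), and for arbitrary matroids in place of the color partition.

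A few points to tidy up:
\begin{itemize}
\item The opening ``assume for contradiction'' is never used; your argument is direct.
\item The lemma only yields at most $d$ sets. Before invoking the maximality of $m$, say that you pad $\mathcal G'$ to exactly $d$ members of the tuple $\mathcal G\cup\{C\}$. By sandwiching this does not change the minimum, and afterwards the family is either $\mathcal G$ itself or contains $C$ and omits exactly one color among $1,\dots,d$.
\item If some $\mathcal C_j$ is empty, the conclusion is trivial by the convention $\bigcap\emptyset\neq\emptyset$. Otherwise every set lies in some colorful tuple, so each replaced polytope is nonempty.
\end{itemize}
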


\begin{theorem}[Colorful Helly theorem for $d$-Leray complexes \cite{lerayCHelly}]\label{thm:colHel}
Let $K$ be a $d$-Leray 
simplicial complex with vertex set $V$. Let $M$ be a matroidal complex on $V$ such that $M\subseteq K$. Let $\rho$ be the rank function of $M$. Then there exists some simplex $\sigma\in K$ for which $\rho(V\setminus \sigma)\leq d$.
\end{theorem}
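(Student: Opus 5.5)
Following Kalai and Meshulam, I would argue by contradiction and use a homological rank/Leray argument. Suppose that for every $\sigma\in K$ we have $\rho(V\setminus\sigma)\geq d+1$. The aim is to build a nonvanishing homology class of dimension at least $d$ in some induced subcomplex of $K$, which contradicts $d$-Lerayness.

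The key tool is a topological version of Rado's theorem, or equivalently a ``Meshulam-type'' lemma relating the connectivity of a complex to the existence of colorful simplices. The version I would use reads: if $M$ is a matroidal complex on $V$ with rank function $\rho$, and $X$ is a complex on $V$ such that for every $S\subseteq V$ the induced subcomplex $X[S]$ satisfies $\widetilde H_i(X[S])=0$ for all $i\leq \rho(S)-2$, then $M\cap X$ contains a simplex of dimension $\rho(V)-1$. I would not apply this to $K$ itself but to the \emph{Alexander dual} $D(K)=\{\tau\subseteq V : V\setminus\tau\notin K\}$.

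The plan has three steps.

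\textbf{Step 1.} By combinatorial Alexander duality, $\widetilde H_i(D(K)[S])$ is isomorphic to $\widetilde H^{|S|-i-3}$ of the link of $V\setminus S$ in $K$. For $d$-Leray complexes the links are also $d$-Leray, since $d$-Lerayness is equivalent to the vanishing of homology of links of all faces in dimensions $\geq d$. This turns the $d$-Leray hypothesis into connectivity bounds on induced subcomplexes of $D(K)$: high connectivity relative to $|S|$.

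\textbf{Step 2.} The assumption $\rho(V\setminus\sigma)\ge d+1$ for all $\sigma\in K$ should combine with these bounds, via a deletion/induction on $V$, to give the hypothesis of the topological Rado lemma with the rank function shifted by $d$.

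\textbf{Step 3.} The lemma then produces a large independent set of $M$ lying in $D(K)$. Its complement is then not in $K$, while, being independent, it lies in $M\subseteq K$. This gives the contradiction.

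The main obstacle is Step 2: correctly matching the dimensions in Alexander duality with the rank function so that the hypothesis of the topological Rado lemma holds for every subset $S$. If this bookkeeping fails, I would fall back on a direct induction on $|V|$. The base case is $M$ of rank $\le d$, which is trivial with $\sigma=\emptyset$. For the induction step, delete or contract a vertex $v$. Use the Mayer--Vietoris sequence for $K=K[V\setminus v]\cup \operatorname{st}(v)$, together with the fact that $\operatorname{lk}(v)$ is $d$-Leray and that contraction $M/v\subseteq\operatorname{lk}_K(v)$ drops the rank by one. This reduces the problem to the same statement for smaller complexes.
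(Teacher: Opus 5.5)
The survey gives no proof of this theorem (it only cites Kalai--Meshulam), so I am judging your proposal on its own. There are genuine gaps in Steps 2 and 3 and in the fallback.

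Step 1 is fine. Write $\eta(Y)$ for the least $i$ with $\widetilde H_{i-1}(Y)\neq 0$. Then Alexander duality and the vanishing of homology of links in dimensions $\ge d$ give $\eta(D(K)[S])\ge |S|-d-1$ for all $S\subseteq V$.

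Step 3, however, is a non sequitur. If an $M$-independent set $I$ lies in $D(K)$, you only learn that $V\setminus I\notin K$. This is perfectly compatible with $I\in K$, since the two statements concern different sets. The hypothesis $M\subseteq K$ becomes a statement about $D(K)$ only through the \emph{dual} matroid: it says exactly that $D(K)$ contains no basis of $M^*$.

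Step 2, which carries all the content, is missing. A rank function ``shifted by $d$'' is not a matroid rank function, so the Rado-type lemma cannot be applied to it. Moreover, the lemma in the form you quote (hypothesis $\eta(X[S])\ge\rho(S)$) is too weak even with $M^*$. Applied to $M^*$ and $D(K)$ together with Step 1, it only yields some $\sigma\in K$ with $\rho(V)-\rho(\sigma)\le d$. For $d+1$ colour classes this merely says $\sigma\neq\emptyset$.

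What does work is the deficiency form of the topological Rado theorem (Aharoni--Berger): $N\cap Y$ contains a face of size at least $\min_{S\subseteq V}(\eta(Y[S])+\rho_N(V\setminus S))$. Take $N=M^*$ and $Y=D(K)$. Since $D(K)$ contains no basis of $M^*$, this minimum is below $\rho_{M^*}(V)$, which rewrites as $\eta(D(K)[S])<|S|-\rho(S)$ for some $S$. Step 1 then forces $\rho(S)\le d$. Finiteness of $\eta$ forces $S\neq\emptyset$ and $S\notin D(K)$, so $\sigma=V\setminus S\in K$ is the required face. This argument is direct and needs no contradiction hypothesis.

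Your fallback induction fails for a related reason. The conclusion ``some $\sigma\in K$ has $\rho(V\setminus\sigma)\le d$'' carries no homology class, so Mayer--Vietoris has nothing to act on. Applying the statement to $(K[V\setminus v],M\setminus v)$ or to $(\operatorname{lk}_K(v),M/v)$ only yields a face $\sigma$ of $K$ with $\rho(V\setminus\sigma)\le d+1$:
\begin{itemize}
\item deletion loses one when $v$ is a coloop of $M$ restricted to $V\setminus\sigma$;
\item contraction loses one when $v$ is spanned by $V\setminus\sigma$.
\end{itemize}
The $d$-Leray hypothesis can close this off-by-one only if the induction carries homological information. So you must prove a stronger statement for arbitrary complexes, for instance: if $M\subseteq X$, then some $\tau\in X$ has $\widetilde H_{\rho(V\setminus\tau)-1}(\operatorname{lk}(X,\tau))\neq 0$. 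This is the form of Kalai and Meshulam's main result. The theorem follows from it at once, because links of faces of a $d$-Leray complex have vanishing homology in dimensions $\ge d$.
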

Although not apparent at first sight, this theorem generalizes Theorem~\ref{thm:colHelly}.
Indeed, let $K$ be the nerve of $\bigcup \mathcal C_i$. By~\cite{wegner75} this complex is $d$-collapsible and thus $d$-Leray.
We define $M$ by declaring that $\rho(S)$ is the smallest $|I|$ such that $S\subseteq \bigcup_{i\in I} \mathcal C_i$. (In other words, $\rho(S)$ counts the number of color classes $\mathcal C_i$ used in $S$.)
The condition $M\subseteq K$ corresponds to the condition that for every choice $C_1\in\mathcal C_1$, $C_2\in\mathcal C_2$, \ldots, $C_{d+1}\in\mathcal C_{d+1}$, $C_1\cap C_2\cap \cdots\cap C_{d+1}\neq \emptyset$, and $\rho(V\setminus \sigma)\leq d$ says that $\sigma$ contains one whole color class $\mathcal C_i$ (hence $\bigcap \mathcal C_i \neq \emptyset$ as $\sigma \in K$).

Theorem \ref{thm:colHel} has found applications in combinatorial geometry and graph theory; see, e.g., \cite{cliques_in_hypergraphs, HolmKar_colorStrongConv, AharoniHolzmanJiang, KimLew}. Kalai and Meshulam further refined Theorem \ref{thm:colHel} to relative setting \cite{KalMesh}.
\bigskip

The following theorem is an exception to the rule that most of theorems that were originally proven for the nerves of convex sets in $\R^d$ have extensions to $d$-Leray complexes. 
So far, the theorem has only been proven for the $d$-collapsible ones.
\begin{theorem}[The optimal colorful fractional Helly theorem for $d$-collapsible complexes \cite{optimalCFHelly}]
Let $K$ be a finite $d$-collapsible simplicial complex and assume that each vertex of $K$ is colored exactly by one of $d+1$ colors $\{1,2,\ldots, d+1\}$. 
Let $N_i$ denote the set of vertices with color $i$,
and let $n_i$ be the cardinality of $N_i$.
Call a face $\sigma$ colorful if all its vertices have distinct colors.

If for some $\alpha\in(0,1]$, $K$ contains at least $\alpha n_1n_2\cdots n_{d+1}$ colorful $d$-faces, then for some $i=1,\ldots, d+1$, $K$ contains a face of dimension at least $\left(1-(1-\alpha)^{1/(d+1)}\right)n_i$, whose all vertices have color $i$.
\end{theorem}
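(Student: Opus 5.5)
The plan follows Kalai's proof of the fractional Helly theorem for $d$-collapsible complexes and adapts it to colors. The backbone is a counting argument along the collapse sequence. Fix a $d$-collapsing sequence of $K$ and restrict attention to the colorful part. Each elementary $d$-collapse removes a face $\sigma$ of dimension $<d$, together with all faces containing it, and $\sigma$ lies in a unique facet $\tau$. We charge every removed colorful $d$-face to the collapse that removes it. A removed colorful $d$-face $\rho$ satisfies $\sigma\subseteq\rho\subseteq\tau$. So the number of colorful $d$-faces removed in one step is at most the number of colorful $d$-subsets of $\tau$ containing $\sigma$. Since $K$ collapses to a point, every colorful $d$-face is eventually removed, and the total of these charges is at least $\alpha n_1\cdots n_{d+1}$.

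The first step is to bound one collapse step. Suppose no monochromatic face of color $i$ has more than $m_i$ vertices, where $m_i<\beta n_i$ and $\beta=1-(1-\alpha)^{1/(d+1)}$; we aim for a contradiction. Write $a_j$ for the number of vertices of color $j$ in $\tau$, so $a_j\le m_i$-type bounds hold with $a_j\le m_j$. If $\sigma$ misses color $j$, then every colorful $d$-face $\rho$ with $\sigma\subseteq\rho\subseteq\tau$ has its color-$j$ vertex among the $a_j$ vertices of $\tau$. Because $\dim\sigma<d$, $\sigma$ misses at least one color. The charge of the step is therefore a product over the missing colors $J$ of $a_j\le m_j$.

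The second step, and the key idea, is to order the collapses so that we can count the steps as a union of "boxes". Following the colorful optimal argument, I would instead encode each removed colorful $d$-face $\rho$ by the $d$-face $\sigma$ that removes it. For each color $j$, let $S_j(\rho)$ be the color-$j$ vertex of $\rho$ in case $j\notin\sigma$, and a marker otherwise. The goal is to show that the set of all colorful $d$-faces injects into the complement of a product: the faces are at most $\prod_j n_j-\prod_j (n_j-m_j)$. Equivalently, a colorful "non-face" count of at least $\prod_j(n_j-m_j)$ must remain. Concretely, I would prove the key inequality
\[
\#\{\text{colorful } d\text{-faces}\}\le \prod_{j=1}^{d+1} n_j-\prod_{j=1}^{d+1}(n_j-m_j)
\]
by induction on the number of collapses. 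This is a colorful analogue of Kalai's upper bound theorem for $d$-collapsible complexes, in the form of Eckhoff's / Alon–Kalai's upper bound. The induction removes one vertex-color at a time. When a vertex $v$ of color $j$ becomes removable (its link collapses), the colorful $d$-faces through $v$ correspond to colorful $(d-1)$-faces in the link, over the other colors. The link is $(d-1)$-collapsible, and its monochromatic faces are bounded by the same $m_{j'}$. This gives a recursion whose solution is the product formula.

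Given the inequality, $\alpha\prod n_j\le\prod n_j-\prod(n_j-m_j)$ yields $\prod_j(1-m_j/n_j)\le 1-\alpha$. Hence some $j$ has $1-m_j/n_j\le(1-\alpha)^{1/(d+1)}$, that is, $m_j\ge\beta n_j$, which is the claim. The main obstacle is the inductive upper bound. Links in $d$-collapsible complexes are not obviously $(d-1)$-collapsible in a way compatible with the global sequence. So I would first try to refine the collapse sequence, as Kalai does, into collapses of faces that are minimal free faces, removing vertices in a "shelling-like" order. I would then do the charging per color class, verifying that the color-$j$ vertices absorbed by free faces form a monochromatic face of $\tau$ and hence number at most $m_j$.
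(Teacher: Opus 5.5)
Your final deduction is sound. If $m_j$ is the largest number of vertices of a face of $K$ all of whose vertices have color $j$, and one knows
\[
f_d^{\mathrm{col}}(K)\le \prod_{j=1}^{d+1}n_j-\prod_{j=1}^{d+1}(n_j-m_j),
\]
then $\prod_j(1-m_j/n_j)\le 1-\alpha$, which forces $m_j\ge\beta n_j$ for some $j$. Passing through this colorful upper bound theorem is also the route of the cited work of Bulavka, Goodarzi and Tancer, which extends Kalai's upper-bound-theorem approach to colors; the survey itself only cites it. The trouble is that this inequality is the entire difficulty, and your proposal does not prove it.

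Your per-collapse charging is correct but too weak. Each step has a distinct bottom $\sigma$. For each nonempty set $J$ of missing colors there are at most $\prod_{j\notin J}n_j$ such bottoms, and each removes at most $\prod_{j\in J}m_j$ colorful $d$-faces. Summing gives only $f_d^{\mathrm{col}}(K)\le\prod_j(n_j+m_j)-\prod_j n_j$. That yields merely $\beta=(1+\alpha)^{1/(d+1)}-1$, which is at most $\alpha/(d+1)$ and stays bounded away from $1$ as $\alpha\to 1$. The announced injection into the complement of a product is never constructed.

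The link recursion rests on a false premise. A vertex link in a $d$-collapsible complex is in general only $d$-collapsible, not $(d-1)$-collapsible. Consider the nerve of a family in which color $j$ consists of $k_j$ copies of $\R^d$ and $n_j-k_j$ hyperplanes, all hyperplanes in general position, so $m_j\le k_j+d$. The link of a copy of $\R^d$ of color $j$ contains all $\prod_{i\ne j}n_i$ colorful $(d-1)$-faces on the other colors. This exceeds the $(d-1)$-dimensional bound $\prod_{i\ne j}n_i-\prod_{i\ne j}(n_i-m_i)$ once $n_i>m_i$.

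In fact the recursion can only be used at well-chosen vertices. It needs a vertex $v$ of color $j$ lying in at most $\prod_{i\ne j}n_i-\prod_{i\ne j}(n_i-m_i)$ colorful $d$-faces. If every vertex of color $j$ satisfied this, summing over $V_j$ would give $f_d^{\mathrm{col}}(K)\le\prod_i n_i-n_j\prod_{i\ne j}(n_i-m_i)$. The same configuration violates this for large $n_i$, e.g. with $k_i=n_i/2$.

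So what is missing is a lemma guaranteeing, at every stage of the induction, such a good vertex in some color class. (If some $n_i\le m_i$ the bound is trivial.) Alternatively, one needs some other way to extract the product bound from the collapsing sequence.

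Your suggested remedy — reorder the collapses until some vertex has a $(d-1)$-collapsible link — cannot work in general. Take a triangulated $2$-disk and, for each boundary vertex $v$, add a triangle $vab$ with $a,b$ non-adjacent interior neighbours of $v$, all new edges $ab$ distinct. The new edges are free, so the complex $2$-collapses back to the disk. Yet every vertex link contains a cycle, so no link is $1$-collapsible.

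Finally, your argument yields a monochromatic face with at least $\beta n_i$ \emph{vertices}, and that is the correct form of the statement. With ``dimension'' it already fails for $d=1$, $K$ a single colorful edge and $\alpha=1$.
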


\section{Helly type theorems from non-embeddability}\label{sec:embed}
Another way to prove Helly type theorems is to relate them to non-embeddability results. This approach dates back to Radon, who used Radon's lemma in order to prove Helly's theorem~\cite{Radon1921}.

We formulate the lemma in a way that relates it to the other theorems in this section. 

\begin{lemma}[Radon's lemma~\cite{Radon1921}]\label{lem:radon}
 For every linear map $f$ of $\Delta_{d+1}$ into $\R^d$, there are two disjoint faces $\tau,\sigma$ of $\Delta_{d+1}$ 
 such that $f(\tau)\cap f(\sigma)\neq \emptyset$.
\end{lemma}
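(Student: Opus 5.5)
The plan is the classical linear-algebra proof of Radon's lemma. Write $v_i = f(e_i)\in\R^d$ for $i=1,\dots,d+2$. Since $f$ is linear (i.e.\ affine on the simplex), for a point $x=\sum_i \lambda_i e_i$ with $\lambda_i\geq 0$ and $\sum_i\lambda_i=1$ we have $f(x)=\sum_i \lambda_i v_i$. So it suffices to find a partition of the vertex set into two parts $I\sqcup J$ with $\conv\{v_i: i\in I\}\cap \conv\{v_j : j\in J\}\neq\emptyset$. Then $\tau=\conv\{e_i: i\in I\}$ and $\sigma=\conv\{e_j: j\in J\}$ are disjoint faces of $\Delta_{d+1}$ with $f(\tau)\cap f(\sigma)\neq\emptyset$.

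First I find an affine dependence. Consider the $d+2$ vectors $(v_i,1)\in\R^{d+1}$. There are more of them than the dimension, so they are linearly dependent, and there exist reals $\mu_1,\dots,\mu_{d+2}$, not all zero, with
\[
\sum_i \mu_i v_i = 0 \qquad\text{and}\qquad \sum_i \mu_i = 0 .
\]

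Next I split by sign. Let $I=\{i:\mu_i>0\}$ and $J=\{j:\mu_j\leq 0\}$. Because the $\mu_i$ sum to zero and are not all zero, $I$ is nonempty, and some $\mu_j$ is strictly negative, so $J$ is nonempty too. Put $S=\sum_{i\in I}\mu_i=\sum_{j\in J}(-\mu_j)>0$. Rearranging the first equation and dividing by $S$ gives
\[
\sum_{i\in I}\frac{\mu_i}{S}v_i=\sum_{j\in J}\frac{-\mu_j}{S}v_j .
\]
The left side is a convex combination of $\{v_i : i\in I\}$ and the right side is a convex combination of $\{v_j : j\in J\}$. Their common value is therefore a point lying in both $f(\tau)$ and $f(\sigma)$.

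There is no real obstacle in this argument. The only points needing care are checking that both $I$ and $J$ are nonempty, which follows from the zero-sum condition, and interpreting ``linear map of $\Delta_{d+1}$'' as affine on the simplex, so that $f$ carries convex combinations of the $e_i$ to the same convex combinations of the $v_i$.
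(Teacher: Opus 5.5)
Your proof is correct: you take an affine dependence among the $d+2$ points $f(e_i)$, split the coefficients by sign, and rescale to get a common point of two convex hulls over complementary index sets, and you check both delicate points (both parts are nonempty, and $f$ carries convex combinations of the $e_i$ to the same convex combinations of their images). The paper gives no proof of this lemma and only cites Radon's original paper, whose argument is the same classical one you give.
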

Note that if $f$ maps the vertices of $\sigma$
to $v_1,\ldots, v_k$, then by linearity of $f$, $f(\sigma)$ equals $\conv\{v_1,\ldots, v_k\}$.

Helly's theorem is an immediate consequence of this lemma.
\begin{proof}[Proof of Helly's theorem (Thm.~\ref{thm:helly}) from Radon's lemma]
For contradiction assume that the Helly number of convex sets in $\R^d$ is larger than $d+1$. Therefore, there are $n>d+1$ convex sets $G_1,G_2,\ldots, G_n$ for which $G_1\cap\cdots\cap G_n=\emptyset$ and for all $i=1,\ldots, n$, $\bigcap_{j\neq i}G_j\neq\emptyset$. Therefore, for each $i=1,\ldots, n$, we may choose a point $p_i$ that lies in $\bigcap_{j\neq i}G_j$.
Let $f\colon \Delta_{n-1}\to \R^d$ be the linear map that maps the vertex $e_i$ of $\Delta_{n-1}$ to the point $p_i$, $i=1,\ldots, n$.
Then for each non-empty face $\sigma$ of $\Delta_{n-1}$
$f(\sigma)\subseteq \bigcap_{e_j\notin \sigma}G_j$.
Since $n\geq d+2$, Radon's lemma yields two disjoint faces $\sigma,\tau$ of $\Delta_{n-1}$ for which
$f(\sigma)\cap f(\tau)\neq\emptyset$.
But then, the intersection is contained in $G_1\cap \cdots\cap G_n$, which is thus non-empty, a contradiction.
\end{proof}

By using topological Radon's lemma, the assumptions in the Helly theorem can be weakened.
\begin{lemma}[Topological Radon's lemma~\cite{Baj79}]
For every continuous map $f$ of the $(d+1)$-simplex into $\R^d$, there are two disjoint faces $\tau,\sigma$ whose images intersect. 
\end{lemma}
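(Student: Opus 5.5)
The plan is to derive the topological Radon lemma from the Borsuk--Ulam theorem, in the form: every continuous map $g\colon S^{d}\to\R^{d}$ identifies some pair of antipodal points, $g(x)=g(-x)$. The bridge is a model of $S^d$ built from pairs of disjoint faces of $\Delta_{d+1}$.

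Let $V=\{e_1,\dots,e_{d+2}\}$ and consider the deleted join $(\Delta_{d+1})^{*2}_\Delta$. Its points are formal combinations $t x\oplus(1-t)y$ with $t\in[0,1]$, where $x\in\sigma$ and $y\in\tau$ for some disjoint faces $\sigma,\tau$ of $\Delta_{d+1}$. It carries the free $\Z_2$-action $t x\oplus(1-t)y\mapsto (1-t)y\oplus t x$.

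First I will identify this space with $S^{d+1}$. The deleted join of the full simplex on $d+2$ vertices is the join of $d+2$ copies of the two-point space $S^0$, one copy for each vertex choosing a side. Hence it is the boundary of the $(d+2)$-dimensional cross-polytope, i.e. $S^{d+1}$. The swap action becomes the antipodal map, since it flips every coordinate.

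Suppose for contradiction that $f(\sigma)\cap f(\tau)=\emptyset$ for all disjoint faces $\sigma,\tau$. Define
\[
F\colon S^{d+1}\to\R^{d+1},\qquad F(t x\oplus(1-t)y)=\bigl(t,\; t f(x)-(1-t)f(y)\bigr).
\]
This is well defined: when $t=0$ or $t=1$ the undefined summand is multiplied by zero. Now consider $G(z)=F(z)-F(-z)$. If $G(z)=0$ for $z=t x\oplus(1-t)y$, the first coordinate gives $t=1-t$, so $t=1/2$. The remaining coordinates then give $f(x)-f(y)=f(y)-f(x)$, i.e. $f(x)=f(y)$ with $x,y$ in disjoint faces, a contradiction. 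So $G$ is an odd map $S^{d+1}\to\R^{d+1}\setminus\{0\}$. Normalizing gives an odd map $S^{d+1}\to S^{d}$, which contradicts Borsuk--Ulam.

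Equivalently, I can apply Borsuk--Ulam directly to $z\mapsto F(z)-F(-z)$ projected onto the last $d$ coordinates, after using the first coordinate to force $t=1/2$. The version above with $d+1$ coordinates is cleaner.

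The main obstacle is the identification of the deleted join with $S^{d+1}$ as a $\Z_2$-space. Everything else is routine, once one checks that $F$ is continuous and respects the face identifications. I would handle the identification by writing the explicit map $t x\oplus(1-t)y\mapsto t x - (1-t)y\in\R^{d+2}$. Its image is exactly the boundary of the cross-polytope $\{\|v\|_1=1\}$, since $x$ and $y$ have disjoint supports. It is bijective, and it intertwines the swap with $v\mapsto -v$. Borsuk--Ulam is assumed as a standard result.
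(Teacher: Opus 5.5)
Your proof is correct: the map $t x\oplus(1-t)y\mapsto tx-(1-t)y$ is an equivariant homeomorphism onto the boundary of the cross-polytope, and $G(z)=\bigl(2t-1,\,2(tf(x)-(1-t)f(y))\bigr)$ vanishes only when $t=1/2$ and $f(x)=f(y)$, so Borsuk--Ulam for odd maps $S^{d+1}\to S^{d}$ finishes the argument. The paper gives no proof of its own and only cites Bajm\'oczy--B\'ar\'any \cite{Baj79}; your deleted-join reduction to Borsuk--Ulam is the standard argument for this lemma (cf.\ Matou\v{s}ek's \emph{Using the Borsuk--Ulam Theorem}), so there is nothing to add.
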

This immediately yields the following Helly-type theorem:
\begin{theorem}\cite{m-httucs-97}\label{thm:hellyMat}
 Let $\mathcal F$ be a finite family of sets in $\R^d$ such that $\bigcap \mathcal F'$ is $d$-connected for every non-empty $\mathcal F'\subseteq\mathcal F$. Then $\bigcap \mathcal F=\emptyset$ if and only if there are at most $d+1$ sets $F_1,F_2,\ldots, F_{d+1}\in\mathcal F$ such that $F_1\cap F_2\cap\cdots\cap F_{d+1}=\emptyset$.
\end{theorem}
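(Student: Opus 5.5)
The plan is to mimic the proof of Helly's theorem from Radon's lemma, replacing the linear map by a continuous map that is built skeleton by skeleton using the connectivity assumptions. The ``only if'' direction is trivial: if $\bigcap\mathcal F\neq\emptyset$ then every subfamily has non-empty intersection, and if $\bigcap\mathcal F=\emptyset$ and $|\mathcal F|\le d+1$ we take $\mathcal F$ itself. So the content is: if every subfamily of at most $d+1$ sets has non-empty intersection, then $\bigcap\mathcal F\neq\emptyset$. Arguing by contradiction and passing to a minimal counterexample, we may assume $\mathcal F=\{G_1,\dots,G_n\}$ with $n\ge d+2$, $\bigcap\mathcal F=\emptyset$, and $\bigcap_{j\neq i}G_j\neq\emptyset$ for every $i$. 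Subfamilies still satisfy the hypothesis, so minimality is legitimate.

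The key step is to construct a continuous map $f\colon \skel{d}{\Delta_{n-1}}\cup\{\text{needed faces}\}\to\R^d$ such that every face $\sigma$ satisfies
\[
f(\sigma)\subseteq G_\sigma:=\bigcap_{e_j\notin\sigma}G_j .
\]
We build $f$ by induction on the dimension of faces. On a vertex $e_i$ we put $f(e_i)=p_i\in\bigcap_{j\neq i}G_j$, which is possible by minimality. Suppose $f$ is defined on the boundary of a $k$-face $\sigma$ with the containment property. Each facet $\tau\subset\sigma$ is mapped into $G_\tau\subseteq G_\sigma$, since $\tau$ omits more vertices. Hence $f(\partial\sigma)\subseteq G_\sigma$. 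The set $G_\sigma$ is an intersection of a non-empty subfamily, at least when $\sigma$ is a proper face, and so it is $d$-connected by hypothesis. For $k\le d+1$, the boundary $\partial\sigma\cong\partial\Delta_k$ is a $(k-1)$-sphere with $k-1\le d$, so triviality of $\pi_{k-1}(G_\sigma)$ lets us extend $f$ over $\sigma$ inside $G_\sigma$. The map is defined on $\Delta_{n-1}$ only up to dimension $d+1$, which suffices.

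Next we restrict to a $(d+1)$-face $\Delta$ of $\Delta_{n-1}$; this face exists because $n\ge d+2$. Applying the topological Radon lemma to $f|_\Delta$ gives disjoint faces $\sigma,\tau$ of $\Delta$ with $f(\sigma)\cap f(\tau)\neq\emptyset$. Since every vertex $e_j$ lies outside $\sigma$ or outside $\tau$, we have $G_\sigma\cap G_\tau=\bigcap_j G_j$. This intersection would then be non-empty, a contradiction.

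The main point to check is the extension step. First, $G_\sigma$ must be the intersection of a non-empty subfamily; this holds because $\sigma\neq\Delta_{n-1}$, as the dimension of $\sigma$ is at most $d+1<n-1$. Second, the dimension count must match: extending over a $(d+1)$-face requires $\pi_d(G_\sigma)=0$, which is exactly $d$-connectivity. Using only $\pi_k$ for $k\le d$ is why the hypothesis is stated with $d$-connectedness. The $0$-connectedness part also needs path-connectivity, which is included in $\pi_0$ being trivial. The argument also requires nonemptiness, which we get for free from the existence of the points $p_i$.
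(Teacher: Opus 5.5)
Your proposal is correct and is essentially the paper's proof: pick $p_i\in\bigcap_{j\neq i}G_j$, extend skeleton by skeleton inside $\bigcap_{e_j\notin\sigma}G_j$ using $d$-connectedness, and conclude with the topological Radon lemma exactly as in the derivation of Helly's theorem from Radon's lemma. The one slip is the claim $d+1<n-1$, which fails when $n=d+2$: there the $(d+1)$-face is all of $\Delta_{n-1}$ and the relevant intersection is empty, but any extension into the contractible $\R^d$ suffices, because the disjoint faces produced by the Radon lemma are proper.
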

\begin{proof}
First we choose the sets $G_1,\ldots, G_n$ as in the previous proof, and then we reach a contradiction.
The only difference is that the map $f$ is not determined by the points $p_i$, but has to be built gradually. First, for each $i\neq j$, choose a path that connects $p_i$ to $p_j$ inside $\bigcap_{k\neq i,j}G_k$. (Such a path exists due to the $d$-connectedness.) Then, when all the paths connecting $p_i$ to $p_j$, $i\neq j$ are chosen,
one needs to choose how to map each triangle such that its boundary agrees with already prescribed part of the map. (Again, this is possible due to the $d$-connectivity.) One continues in this filling, until one reaches a map from the $(d+1)$-skeleton of the $n$-simplex into $\R^d$. The rest is the same as before, finished by the use of the topological Radon's theorem.
\end{proof}
This theorem clearly generalizes Helly's original result. However, if we restrict our attention to open sets only, it is subsumed by Theorems~\ref{thm:montejano} and \ref{thm:multinerveTwo}. 
The advantage of non-embeddability proofs is their greater variability and possibilities of interesting trade-offs. We now illustrate the phenomenon, where instead of the topological Radon's theorem, we use its variant called the van Kampen-Flores theorem and weaken the topological assumptions for the price of a slightly worse bound on $h(\F)$.
\begin{theorem}[\cite{vanKampen:KomplexeInEuklidischenRaeumen-1932}]
 For every continuous map $f$ from the $d$-skeleton of the $(2d+2)$-dimensional simplex into $\R^d$, there are two disjoint faces $\sigma,\tau$ whose images intersect. 
\end{theorem}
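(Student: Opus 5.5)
Our plan is to prove the van Kampen--Flores theorem by the standard configuration-space/test-map method combined with a Borsuk--Ulam type index argument; this is the same scheme that proves the topological Radon lemma. Throughout, write $K=\skel{d}{(\Delta_{2d+2})}$, a complex on $2d+3$ vertices. Suppose, for contradiction, that $f\colon K\to\R^d$ is continuous and that $f(\sigma)\cap f(\tau)=\emptyset$ for every pair of disjoint faces $\sigma,\tau$ of $K$.

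\textbf{Step 1 (the test map).} Consider the deleted product
\[
K^{\times 2}_\Delta=\bigcup\{\sigma\times\tau : \sigma,\tau\in K,\ \sigma\cap\tau=\emptyset\},
\]
equipped with the free $\Z_2$-action $(x,y)\mapsto(y,x)$. The map
\[
(x,y)\mapsto \frac{f(x)-f(y)}{\|f(x)-f(y)\|}
\]
is well defined precisely because of our assumption on $f$. It is a $\Z_2$-equivariant map $K^{\times 2}_\Delta\to S^{d-1}$, where $S^{d-1}$ carries the antipodal action.

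\textbf{Step 2 (the combinatorial core).} We show that $K^{\times 2}_\Delta$ is $(d-1)$-connected. For skeleta of simplices the deleted product of $\skel{d}{(\Delta_{n})}$ is a well-understood cell complex; alternatively, one can pass to the deleted join. In the join version, the deleted join of $\skel{d}{(\Delta_{2d+2})}$ is the complex of pairs of disjoint faces of size at most $d+1$. This is the $(d+1)$-fold join of a certain two-point complex restricted by a size condition, and its connectivity can be computed explicitly. The key observation is that with $2d+3$ vertices, any set of at most $2d+2$ vertices can be split into two faces each of size at most $d+1$. We would show $(d-1)$-connectivity via a shellability or nerve argument. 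In the join version the target becomes the claim that the deleted join is $d$-connected, mapping to $S^d$.

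\textbf{Step 3 (conclusion).} Once Step 2 is established, the Borsuk--Ulam theorem (Dold's version) says that there is no equivariant map from a free $(d-1)$-connected $\Z_2$-space to $S^{d-1}$. This contradicts Step 1.

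The main obstacle is Step 2. The cleanest route is the deleted join: the deleted join of the full simplex $\Delta_{2d+2}$ is the boundary of the $(2d+3)$-dimensional cross-polytope, i.e. $S^{2d+2}$. The deleted join of $K$ is the subcomplex obtained by restricting to pairs $(\sigma,\tau)$ with $|\sigma|,|\tau|\le d+1$. We would attempt to show this subcomplex is $d$-connected either by a shelling order or by viewing it as a "chessboard-like" complex. Failing that, we fall back on a direct homological computation: compute $\widetilde H_i$ for $i\le d$, and handle $\pi_1$ separately for $d\ge1$ via simple connectivity of the 2-skeleton. A further alternative is to use the $\Z_2$-index together with a degree/parity argument (van Kampen obstruction). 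In that approach one computes that a generic map of $K$ into $\R^{d}$ has an odd number of intersections between disjoint $d$-faces in a suitable sense; the obstruction cocycle would be shown to be non-null-cohomologous by evaluating it on the cycle formed by all pairs of disjoint $d$-faces of the $2d+3$ vertices.
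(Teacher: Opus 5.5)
The survey does not prove this statement; it only cites van Kampen--Flores. So I assess your plan on its own merits.

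Steps 1 and 3 are sound. Under the contrary assumption, the test map $K^{\times 2}_\Delta\to S^{d-1}$ is well defined and equivariant for the free swap action. Dold's theorem then rules out a $\Z_2$-map from a $(d-1)$-connected space to $S^{d-1}$.

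The gap is Step 2, which carries all of the content. You never prove that $K^{\times 2}_\Delta$ is $(d-1)$-connected, or that the deleted join is $d$-connected; you only list techniques you might try. Several of them are also aimed at the wrong problem:
\begin{itemize}
\item The van Kampen obstruction, which counts intersection points of images of disjoint $d$-faces, only makes sense for maps into $\R^{2d}$. In $\R^d$, two generic $d$-simplices that meet do so in a $d$-dimensional set, so there is nothing to count.
\item The observation that $2d+2$ vertices split into two faces of size at most $d+1$ is what drives the classical theorem. It plays no role in the low connectivity needed here.
\item The deleted join of a simplex on $2d+3$ vertices is the $(2d+3)$-fold join of $S^0$, not a $(d+1)$-fold join.
\end{itemize}

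For the statement as printed, Step 2 has a short proof. $K^{\times 2}_\Delta$ is a subcomplex of the cell complex $X=(\Delta_{2d+2})^{\times 2}_\Delta$, whose cells are the products $\sigma\times\tau$ of disjoint faces, of dimension $\dim\sigma+\dim\tau$. Every cell with $\dim\sigma+\dim\tau\le d$ has both factors of dimension at most $d$, so it lies in $K^{\times 2}_\Delta$. Hence the two complexes share their $d$-skeleton. Since $X\simeq S^{2d+1}$, cellular approximation gives $\pi_i(K^{\times 2}_\Delta)\cong\pi_i(X)=0$ for all $i\le d-1$. Your Step 3 then concludes.

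Be aware that the printed statement (target $\R^d$) is weaker than the classical van Kampen--Flores theorem. The classical theorem has target $\R^{2d}$ and implies the printed one because $\R^d\subseteq\R^{2d}$. Your shelling and obstruction ideas belong to that classical version. There, the product-plus-connectivity route fails outright: for $d=1$ the deleted product of $K_5$ is a closed surface of genus $6$, which is not simply connected. One must instead use the deleted join, which is $2d$-connected. In fact it is a $(2d+1)$-dimensional Bier sphere, because $K$ is Alexander self-dual on $2d+3$ vertices.
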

\begin{theorem}[\cite{hb17,m-httucs-97}]\label{thm:hVanKampenBound}
 Let $\mathcal F$ be a finite family of sets in $\R^d$ such that $\bigcap \mathcal F'$ is $(\lceil d/2\rceil -1)$-connected for every nonempty $\mathcal F'\subseteq\mathcal F$. Then $\bigcap \mathcal F=\emptyset$ if and only if there are at most $d+2$ sets $F_1,F_2,\ldots, F_{d+2}\in\mathcal F$ such that $F_1\cap F_2\cap\cdots\cap F_{d+2}=\emptyset$.
\end{theorem}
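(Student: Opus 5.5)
The plan is to follow the scheme of the proof of Theorem~\ref{thm:hellyMat}, with the topological Radon lemma replaced by the van Kampen--Flores theorem. Only the ``only if'' direction needs work, so I argue by contradiction. Suppose every $d+2$ members of $\mathcal F$ intersect but $\bigcap\mathcal F=\emptyset$. Passing to a minimal subfamily with empty intersection, we obtain sets $G_1,\ldots,G_n$ with $n\geq d+3$ such that $\bigcap_{j\neq i}G_j\neq\emptyset$ for every $i$. (Minimality together with the hypothesis forces $n\ge d+3$.)

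The key steps are as follows.

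First, note that $d$-connectivity in Theorem~\ref{thm:hellyMat} was used only to extend maps from the boundaries of simplices up to dimension $d+1$. Here we only need a map from the $d$-skeleton of a simplex, so we only fill boundaries of simplices of dimension at most $d$. Since $k$-connectivity lets us extend maps from $\partial\Delta_{k+1}$, at first sight this needs $(d-1)$-connectivity, which is more than the assumed $(\lceil d/2\rceil-1)$-connectivity. The fix is to use a constrained version of van Kampen--Flores. It suffices to produce a map only on the $\lceil d/2\rceil$-skeleton of $\Delta_{n-1}$, and for this only $(\lceil d/2\rceil-1)$-connectivity is needed.

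Second, I will invoke van Kampen--Flores in the following form, which is the standard general statement. For every continuous map from the $k$-skeleton of $\Delta_{2k+2}$ into $\R^{2k}$ (and hence into $\R^{d}$ for $d\le 2k$), there are two disjoint faces whose images intersect. The paper states it as the $d$-skeleton of $\Delta_{2d+2}$ into $\R^d$, with the $d$ there playing the role of $k$. I will apply it with $k=\lceil d/2\rceil$, regarding $\R^d\subseteq\R^{2k}$. This requires $2k+3=2\lceil d/2\rceil+3$ vertices, which is at most $d+4$. For odd $d$ this equals $d+4$, but $n$ is only guaranteed to be at least $d+3$. This vertex count is the main obstacle.

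To handle it, I split by parity. For even $d=2k$, the requirement is $2k+3=d+3\le n$, so the argument works directly. For odd $d=2k-1$, I would use the refined version of van Kampen--Flores for maps of the $k$-skeleton of $\Delta_{2k+1}$ into $\R^{2k-1}$. This follows from the even case by suspending, or equivalently by composing with an embedding $\R^{2k-1}\to\R^{2k}$ and using an extra apex vertex mapped far away in a new direction. That refined version needs $2k+2=d+3$ vertices, which $n$ provides. If this reduction turns out to be awkward, I would instead cite the general Flores statement (that $\Delta_{2k+1}^{(k)}$ does not embed in $\R^{2k-1}$ in the sense of disjoint faces) as known.

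Third, I build the map $f$ on the $k$-skeleton of $\Delta_{n-1}$ inductively, maintaining $f(\sigma)\subseteq\bigcap_{e_j\notin\sigma}G_j$ for every face $\sigma$. Vertices go to the points $p_i$. A face $\sigma$ of dimension $m\le k$ has its boundary already mapped into $\bigcap_{e_j\notin\sigma}G_j$, because faces of $\sigma$ map into larger intersections and hence into this one. Since that intersection is $(k-1)$-connected and $m-1\le k-1$, the boundary map extends over $\sigma$.

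Finally, van Kampen--Flores gives disjoint faces $\sigma,\tau$ with $f(\sigma)\cap f(\tau)\neq\emptyset$. Every index lies outside $\sigma$ or outside $\tau$, so this common point lies in every $G_j$, contradicting $\bigcap\mathcal F=\emptyset$.
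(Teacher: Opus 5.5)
Your proof is correct and is essentially the argument the paper intends. You rerun the proof of Theorem~\ref{thm:hellyMat}, building the map only on the $\lceil d/2\rceil$-skeleton using $(\lceil d/2\rceil-1)$-connectivity, and replace the topological Radon lemma by the van Kampen--Flores theorem in its correct form: $\Delta_{2k+2}^{(k)}$ does not almost-embed in $\R^{2k}$. Your parity split fills in a detail the paper glosses over, and your coning reduction (a single apex off $\R^{2k-1}\times\{0\}$, so a cone rather than a suspension) correctly shows that $\Delta_{2k+1}^{(k)}$ does not almost-embed in $\R^{2k-1}$, which gives the bound $d+2$ for odd $d$ as well.
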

One of the main drawbacks of Theorems~\ref{thm:hellyMat} and \ref{thm:hVanKampenBound} is that by the Adian--Rabin theorem~\cite{Rabin58} and \cite[Prop~1.26]{Hatcher:AlgebraicTopology-2002} it is algorithmically undecidable whether a given space is topologically $d$-connected. It is even undecidable whether $\pi_1(X)$ is trivial.
Using a homological version of Radon's lemma (Lemma~\ref{lem:homRad}) allows us to replace the homotopic assumption of $d$-connectedness with homological conditions, which are algorithmically verifiable\footnote{Provided the input is machine-readable.}.

In order to formulate these results, we introduce several definitions.
A \emph{chain map} $\varphi\colon \widetilde{C}_\bullet(\Delta_n;\Z_2)\to \widetilde{C}_\bullet(\R^d;\Z_2)$ between the augmented chain groups is a collection of linear maps $\varphi_k\colon \widetilde{C}_k(\Delta_n;\Z_2)\to \widetilde{C}_k(\R^d;\Z_2)$, $k=-1,0,1,\ldots$
that respect the boundary operator,
i.e. satisfy $\partial \varphi_k = \varphi_{k-1}\partial$ for all $k\geq 0$.
Here we note that the chain group on the left is the simplicial chain group of the complex $\Delta_n$, whereas on the right we work with the singular chain group of $\R^d$.

\begin{definition}\label{d:non-trivial_chain_map}
The chain map is \emph{non-trivial} if $\varphi_{-1}$ is an isomorphism, i.e., if $\varphi_0$ maps every point to a chain of points with odd number of non-zero coefficients.

If $z=\sum_{i\in I}a_i\sigma_i$ 
and $z'=\sum_{i\in I}b_i\sigma_i$ are two chains in $\widetilde{C}_k(\R^d;\Z_2)$
we say that they \emph{overlap},
if for some $i$ and $j$,
$a_i\neq 0$, $b_j\neq 0$
and\footnote{Note that $\sigma_i$ is a singular $k$-chain, i.e. a continuous map from $\Delta_k$ to $\R^d$. Thus, it makes sense to speak about its image.} $\operatorname{Im}\sigma_i \cap \operatorname{Im}\sigma_j \neq \emptyset$.
If no such pair of indices exists, we say that $z$ and $z'$ are \emph{non-overlapping}. 
\end{definition}

\begin{lemma}[Homological Radon's lemma~\cite{hb17}]\label{lem:homRad}
 For every non-trivial chain map $\varphi$ from the simplicial chains $\widetilde{C}_\bullet(\Delta_{d+1};\Z_2)$ to singular chains $\widetilde{C}_\bullet(\R^d;\Z_2)$ there are two disjoint faces $\sigma$, $\tau$ of $\Delta_{d+1}$ such that $\varphi(\sigma)$ and $\varphi(\tau)$ overlap.
\end{lemma}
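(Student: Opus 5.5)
I will prove the homological Radon's lemma by contradiction with the standard van Kampen-type $\Z_2$ obstruction, run through the deleted product, i.e.\ the cohomological/degree argument that proves the topological Radon's lemma, but at the chain level. Suppose $\varphi$ is non-trivial and $\varphi(\sigma)$, $\varphi(\tau)$ do not overlap for any two disjoint faces $\sigma,\tau$ of $\Delta_{d+1}$.

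\emph{Step 1: a $\Z_2$-equivariant chain map out of the deleted join.} Let $K=(\Delta_{d+1})^{*2}_\Delta$ be the deleted join. It is the boundary of the $(d+2)$-dimensional cross-polytope, hence a $(d+1)$-sphere with the free antipodal $\Z_2$-action. On the target side I use the deleted join of $\R^d$. For a pair of disjoint faces $(\sigma,\tau)$ I form the singular join chain $\varphi(\sigma)*\varphi(\tau)$. Each singular simplex $\alpha*\beta$ with $\alpha$ from $\varphi(\sigma)$ and $\beta$ from $\varphi(\tau)$ has $\operatorname{Im}\alpha\cap\operatorname{Im}\beta=\emptyset$, by non-overlap. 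So it lands in
\[
(\R^d)^{*2}_\Delta=\{t x\oplus(1-t)y : x\neq y \text{ whenever } 0<t<1\}.
\]
The chain-map identity $\partial\varphi=\varphi\partial$ together with the Leibniz rule for joins shows that $\Phi(\sigma*\tau)=\varphi(\sigma)*\varphi(\tau)$ is an equivariant chain map $\widetilde C_\bullet(K;\Z_2)\to \widetilde C_\bullet((\R^d)^{*2}_\Delta;\Z_2)$. Non-triviality ($\varphi_{-1}$ an isomorphism) makes $\Phi$ non-trivial in degree $-1$ as well.

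\emph{Step 2: compose to a sphere.} The space $(\R^d)^{*2}_\Delta$ is $\Z_2$-homotopy equivalent to $S^{d}$ with the antipodal action, via $t x\oplus(1-t)y\mapsto (t-\tfrac12,\ t x-(1-t)y)$ normalized. Composing gives an equivariant, augmentation-preserving chain map $\Psi\colon \widetilde C_\bullet(S^{d+1};\Z_2)\to\widetilde C_\bullet(S^d;\Z_2)$, where the source is the cross-polytope triangulation with the free involution.

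\emph{Step 3: chain-level Borsuk--Ulam.} It remains to show that no such $\Psi$ exists. I prove this degree by degree, as in the proof of the Borsuk--Ulam theorem via Smith theory and transfer. Pass to the quotient chain complexes $C_\bullet/(1+\nu)$, where $\nu$ is the antipodal involution; these compute the homology of $\R P^{d+1}$ and $\R P^d$. An equivariant chain map induces maps on these quotients that are compatible with the Smith (Gysin) sequence. The connecting homomorphism is cap/cup with the generator $w\in H^1(\R P^\infty;\Z_2)$, and the induced map respects this structure. Non-triviality in degree $-1$ propagates inductively: the class $w^k$ pulls back to $w^k$ for all $k$, because equivariant augmentation-preserving maps between free $\Z_2$-complexes agree up to equivariant chain homotopy on the relevant skeleta. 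Since $w^{d+1}=0$ in $\R P^d$ but $w^{d+1}\neq 0$ in $\R P^{d+1}$, this is a contradiction.

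I expect this last step to be the main obstacle. Making the inductive comparison rigorous needs the uniqueness, up to equivariant homotopy, of chain maps from a free complex into a $(d-1)$-acyclic one. Carrying that out carefully on the singular side, or else replacing the singular target by a cellular model of $S^d$ via subdivision and approximation, is the technical heart. Step 1's sign bookkeeping is trivial over $\Z_2$, which is why $\Z_2$ coefficients are used.
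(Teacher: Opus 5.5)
The survey states this lemma without proof and only cites \cite{hb17}, so there is no in-paper argument to compare with. Your strategy is the right one: turn $\varphi$ into an equivariant chain map out of a free $\Z_2$-sphere, then run a chain-level Borsuk--Ulam argument. Two steps, however, are not correct as written.

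\emph{Step 1 is not equivariant.} The assignment $\sigma*\tau\mapsto\varphi(\sigma)*\varphi(\tau)$ is a chain map, but it is not equivariant on ordered singular chains.
\begin{itemize}
\item For singular simplices $\alpha,\beta$ of dimensions $p,q\ge 0$, the simplex $\nu\circ(\alpha*\beta)$ is $\beta*\alpha$ precomposed with the affine automorphism of $\Delta_{p+q+1}$ that moves the first $p+1$ vertices behind the last $q+1$. That is a different generator of the singular chain group.
\item This is not a sign issue, so $\Z_2$ coefficients do not help. Symmetrizing to $\Phi+\nu\Phi\nu$ also fails, because it kills non-triviality ($1+1=0$ in degree $-1$).
\end{itemize}
Step~3 needs $\Psi$ to descend to $C_\bullet/(1+\nu)$, so you need a genuinely equivariant model. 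Two options:
\begin{itemize}
\item Use the deleted product $(\Delta_{d+1})^{\times 2}_\Delta\simeq S^d$ with the Eilenberg--Zilber shuffle cross product $\varphi(\sigma)\times\varphi(\tau)$. It commutes with the swap on the nose and lands in $(\R^d)^{\times 2}_\Delta$, which maps equivariantly to $S^{d-1}$. Every degree in Step~3 then drops by one.
\item First replace $\varphi$ by a simplicial chain map into a fine triangulation $T$ of $\R^d$ that keeps the images of disjoint faces vertex-disjoint. Then order each simplex $\sigma'^{(1)}\sqcup\tau'^{(2)}$ of the deleted join of $T$ by a fixed global order on the vertices of $T$. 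This order has no ties and is preserved by the swap.
\end{itemize}

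\emph{Step 3 is misdiagnosed.} Uniqueness of equivariant chain maps up to homotopy is neither needed nor sufficient. Into a target that is only $(d-1)$-acyclic, such a homotopy can be built only below degree $d$, whereas your contradiction sits in degree $d+1$. Also, $\Psi^*(w)=w$ would not give $\Psi^*(w^{d+1})=w^{d+1}$, because a chain map need not respect cup products.

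What works is the naturality you already mention. Let $C$ be the non-augmented chains of the source sphere and $D$ the singular chains of $S^d$. Set $C_G=C/(1+\nu)C$, and define $D_G$ likewise.
\begin{itemize}
\item The sequence $0\to C_G\xrightarrow{1+\nu}C\to C_G\to 0$ is exact because $C$ is free. It is natural in equivariant chain maps, so its cohomological connecting map $\delta$ satisfies $\Psi^*\delta=\delta\Psi^*$.
\item Since $\widetilde H_i(C;\Z_2)=0$ for $i\le d$, $\delta$ is injective on $H^k(C_G)$ for $0\le k\le d$. Hence $\delta^{d+1}(1_C)\neq 0$.
\item Since $\Psi$ preserves the augmentation, $\Psi^*(1_D)=1_C$.
\item Therefore $\delta^{d+1}(1_C)=\Psi^*\delta^{d+1}(1_D)=0$, because $H^{d+1}(D_G)\cong H^{d+1}(\mathbb{R}P^d;\Z_2)=0$. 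This is the contradiction.
\end{itemize}
With these two repairs your argument is complete.
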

This translates to the following Helly-type result.
\begin{theorem}{\cite{hb17}}\label{thm:topHelSharp}
 Let $\mathcal F$ be a finite family of sets in $\R^d$ such that \[\widetilde{H}_i\left(\bigcap \mathcal F';\Z_2\right)=0\text{ for every non-empty }\mathcal F'\subseteq\mathcal F\] and every $i=0,1,\ldots, d-1$. Then $\bigcap \mathcal F=\emptyset$ if and only if there are $d+1$ (not necessarily distinct) sets $F_1,F_2,\ldots, F_{d+1}\in\mathcal F$ such that $F_1\cap F_2\cap\cdots\cap F_{d+1}=\emptyset$.
\end{theorem}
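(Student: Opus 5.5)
The plan is to mimic the proof of Theorem~\ref{thm:hellyMat}, replacing the continuous map built skeleton by skeleton with a non-trivial chain map, and finishing with Lemma~\ref{lem:homRad} instead of the topological Radon's lemma. The ``if'' direction is trivial. For the converse, assume for contradiction that $\bigcap\mathcal F=\emptyset$ but every subfamily of at most $d+1$ sets has non-empty intersection. Pass to a minimal subfamily $G_1,\ldots,G_n$ with empty intersection; then $n\geq d+2$ and $\bigcap_{j\neq i}G_j\neq\emptyset$ for every $i$. For a face $\sigma$ of $\Delta_{n-1}$, write $G_\sigma:=\bigcap_{e_j\notin\sigma}G_j$. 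This assignment is monotone: $\tau\subseteq\sigma$ implies $G_\tau\subseteq G_\sigma$. Moreover, $G_\sigma$ is an intersection of a non-empty subfamily whenever $\sigma$ is a proper face, so by hypothesis $\widetilde H_i(G_\sigma;\Z_2)=0$ for $i\le d-1$.

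Next I construct a non-trivial chain map $\varphi\colon \widetilde C_\bullet(\skel{d+1}{\Delta_{n-1}};\Z_2)\to \widetilde C_\bullet(\R^d;\Z_2)$, defined only up to dimension $d+1$, that is \emph{supported} in the sense that $\varphi(\sigma)$ is a chain in $G_\sigma$. I proceed by induction on $\dim\sigma$. Set $\varphi_{-1}=\mathrm{id}$. For a vertex $e_i$, let $\varphi(e_i)$ be a point $p_i\in G_{e_i}$; this makes $\varphi$ non-trivial. Now let $\dim\sigma=k\ge1$ with $k\le d+1$, and suppose $\varphi$ is already defined on the boundary of $\sigma$. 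Then $\varphi(\partial\sigma)$ is a $(k-1)$-chain supported in $\bigcup_{\tau\subset\sigma}G_\tau\subseteq G_\sigma$. It is a cycle, since $\partial\varphi(\partial\sigma)=\varphi(\partial\partial\sigma)=0$; in degree $0$ it is reduced, with augmentation $0$, by non-triviality and the chain map property. We have $k-1\le d$, and we need $\widetilde H_{k-1}(G_\sigma;\Z_2)=0$. This holds for $k-1\le d-1$ because $\sigma$ is then a proper face (as $n-1\ge d+1>k$). So the cycle bounds a $k$-chain in $G_\sigma$, and we let $\varphi(\sigma)$ be that chain.

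The main obstacle is the degree bookkeeping at the top, because Lemma~\ref{lem:homRad} needs $\varphi$ on all of $\Delta_{d+1}$, including its $(d+1)$-face, which would require $\widetilde H_d$ to vanish. I would handle this by restricting to a copy of $\Delta_{d+1}$ spanned by $d+2$ vertices and noticing that the lemma's conclusion only uses disjoint pairs $\sigma,\tau$ with $\dim\sigma+\dim\tau\le d$. In fact the standard proof of Lemma~\ref{lem:homRad} only requires $\varphi$ on the $d$-skeleton, as an intersection-number argument. If the lemma must be used verbatim, I would instead extend $\varphi$ to the top face arbitrarily, with no support condition, since the top face is never disjoint from another non-empty face. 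A top-dimensional simplex of $\Delta_{d+1}$ has dimension $d+1$, and its boundary is a $d$-cycle in $\R^d$, which bounds because $\widetilde H_d(\R^d;\Z_2)=0$. This is exactly why only $i\le d-1$ is needed, and it resolves the issue.

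Finally, apply Lemma~\ref{lem:homRad} to the restriction of $\varphi$ to $\Delta_{d+1}$, using the vertices $e_1,\ldots,e_{d+2}$. This gives disjoint faces $\sigma,\tau$ whose images overlap, and both are proper faces. Their images lie in $G_\sigma$ and $G_\tau$ respectively. Since $\sigma\cap\tau=\emptyset$, every index $j$ satisfies $e_j\notin\sigma$ or $e_j\notin\tau$. Hence $G_\sigma\cap G_\tau\subseteq G_1\cap\cdots\cap G_n=\emptyset$, which contradicts the overlap.
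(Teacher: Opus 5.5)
Your proposal is correct and follows the same route as the paper's sketch. You pass to a minimal family $G_1,\ldots,G_n$ with empty intersection, inductively build a non-trivial chain map with $\varphi(\sigma)$ supported in $\bigcap_{e_j\notin\sigma}G_j$ using $\widetilde{H}_i=0$ for $i\le d-1$, and conclude with Lemma~\ref{lem:homRad}; your treatment of the top face (filling it arbitrarily in $\R^d$ since $\widetilde{H}_d(\R^d;\Z_2)=0$, harmless because that face is disjoint from no other non-empty face) supplies exactly the detail the paper's one-line proof leaves implicit.
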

\begin{proof}
 The proof follows the same lines as before. The only difference is that the intersections are no longer required to be topologically connected, but the assumptions about homology allow us to inductively construct a non-trivial chain map, as required.
\end{proof}

Similarly, there is a homological version of the van-Kampen--Flores theorem. 
\begin{theorem}[\cite{hb17}]\label{t:hom_vank_Kampes}
 For every non-trivial chain map $\varphi$ from the simplicial chains $\widetilde{C}_\bullet(\Delta_{2d+2}^{(d)};\Z_2)$ to singular chains $\widetilde C_\bullet(\R^d;\Z_2)$, there are two disjoint simplices $\sigma,\tau$ of $\Delta_{2d+2}^{(d)}$ such that $\varphi(\sigma)$ and $\varphi(\tau)$ overlap.
\end{theorem}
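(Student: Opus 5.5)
We prove the homological van Kampen--Flores theorem by reducing it to the classical van Kampen--Flores obstruction argument, carried out at the level of $\Z_2$-chains. The plan is to argue by contradiction. Suppose $\varphi$ is a non-trivial chain map from $\widetilde{C}_\bullet(K;\Z_2)$, with $K=\Delta_{2d+2}^{(d)}$, to $\widetilde{C}_\bullet(\R^d;\Z_2)$, such that $\varphi(\sigma)$ and $\varphi(\tau)$ never overlap for disjoint $\sigma,\tau$.

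The first step is to pass to the deleted product. Consider the cell complex
\[
K^{\times 2}_\Delta=\{\sigma\times\tau:\sigma,\tau\in K,\ \sigma\cap\tau=\emptyset\},
\]
with the free $\Z_2$-action that swaps the factors. Using $\varphi$, we define an equivariant chain map
\[
\Phi\colon C_\bullet(K^{\times 2}_\Delta;\Z_2)\to C_\bullet\bigl((\R^d\times\R^d)\setminus\delta;\Z_2\bigr),
\]
where $\delta$ is the diagonal. On a cell we set $\Phi(\sigma\times\tau)=\varphi(\sigma)\times\varphi(\tau)$, using the singular cross product (Eilenberg--Zilber) $C_\bullet(\R^d)\otimes C_\bullet(\R^d)\to C_\bullet(\R^d\times\R^d)$. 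The non-overlapping hypothesis guarantees that every singular simplex appearing in this product avoids the diagonal. The cross product is natural, commutes with the swap up to a chain homotopy that can be chosen to preserve supports, and satisfies the Leibniz rule, so $\Phi$ is a chain map. The complement of the diagonal deformation retracts equivariantly onto the antipodal sphere $S^{d-1}$, so composing gives an equivariant chain map from $C_\bullet(K^{\times 2}_\Delta;\Z_2)$ to $C_\bullet(S^{d-1};\Z_2)$.

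The second step, which I expect to be the main obstacle, is to make $\Phi$ genuinely equivariant rather than merely equivariant up to homotopy, and to keep it non-trivial in degree $0$. I would first replace $\Phi$ by the symmetrization $\Phi+\Phi\circ\nu$ on a fundamental domain of cells, which is possible because the action on the deleted product is free. Alternatively, one can avoid symmetrization by invoking the equivariant chain-map extension lemma: a free $\Z_2$-complex carries an equivariant chain map into any $\Z_2$-acyclic target in a prescribed range. Non-triviality of $\Phi$ follows from that of $\varphi$, since a vertex pair $(u,v)$ is sent to a $0$-chain of odd total coefficient.

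The third step is the Smith-index (Stiefel--Whitney / Borsuk--Ulam) contradiction. By Flores' classical result, the deleted product of $\Delta_{2d+2}^{(d)}$ is $\Z_2$-equivariantly homotopy equivalent to $S^{2d}$, or at least its $\Z_2$-index is $2d>d-1$. It therefore suffices to show that no non-trivial equivariant chain map $C_\bullet(S^n;\Z_2)\to C_\bullet(S^{m};\Z_2)$ with $n>m$ exists. I would prove this homologically via the Smith sequence, or by comparing the equivariant cohomology classes $w_1^{n}$: a non-trivial equivariant chain map induces a ring map on $\Z_2$-equivariant (Borel) cohomology that sends $w_1$ to $w_1$. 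Since $w_1^{m+1}=0$ for $S^m$ but $w_1^n\neq0$ on the domain, we get a contradiction. This chain-level Borsuk--Ulam step is standard once equivariance and non-triviality are secured, so the delicate part remains Step 2.
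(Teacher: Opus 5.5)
The survey gives no proof of this statement; it only cites \cite{hb17}. So I assess your argument on its own terms. Your plan has the right shape: deleted product, an equivariant chain map into $C_\bullet(S^{d-1};\Z_2)$, and a chain-level Borsuk--Ulam argument. Step~1 is correct, including the support argument and non-triviality.

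In fact Step~1 gives more than you claim. The Eilenberg--Zilber shuffle map is \emph{strictly} symmetric, because the swap carries the staircase triangulation of $\Delta_p\times\Delta_q$ onto that of $\Delta_q\times\Delta_p$. Over $\Z_2$ this gives $T_\#(\varphi(\sigma)\times\varphi(\tau))=\varphi(\tau)\times\varphi(\sigma)$ exactly, so $\Phi$ is already equivariant and Step~2 is unnecessary. That is fortunate, because both repairs you propose there fail:
\begin{itemize}
\item Over $\Z_2$ there is no averaging. The sum $\Phi+\Phi\circ\nu$ has even augmentation on every vertex $u\times v$, so non-triviality is lost.
\item The extension lemma for free complexes needs a target that is acyclic in the relevant range. $(\R^d\times\R^d)\setminus\delta\simeq S^{d-1}$ is not acyclic in degree $d-1$. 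If it were, the lemma would produce exactly the kind of equivariant chain map that Step~3 is meant to rule out.
\end{itemize}

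The genuine gap is in Step~3. The deleted product $K^{\times 2}_\Delta$ of $\Delta^{(d)}_{2d+2}$ is not $\Z_2$-homotopy equivalent to $S^{2d}$. For $d=1$ it is the deleted product of $K_5$, which has Euler characteristic $20-60+30=-10$; it is a closed surface of genus $6$. The sphere statement holds for the deleted \emph{join}, not the product.

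Your fallback does not close the gap either. The $\Z_2$-index being $2d$ is a statement about continuous equivariant maps \emph{out of} $K^{\times 2}_\Delta$. It does not supply the chain map $C_\bullet(S^n)\to C_\bullet(K^{\times 2}_\Delta)$ that your reduction needs, and it cannot be combined with a mere chain map. The Borel-cohomology variant also fails as stated, since an arbitrary chain map does not respect cup products, so you cannot conclude $w_1^n\mapsto w_1^n$.

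What you need is an invariant that is monotone under non-trivial equivariant chain maps.
\begin{itemize}
\item For a free $\Z_2$-chain complex $C$, put $\overline C=C/(1+\nu)C$. Let $\delta\colon H^i(\overline C)\to H^{i+1}(\overline C)$ be the connecting map of $0\to\overline C\xrightarrow{1+\nu}C\to\overline C\to 0$; for spaces, $\delta$ is cup product with $w_1$.
\item Naturality of $\delta$, together with $\bar f^*(1)=1$ for non-trivial $f$, shows that $\delta^n(1)\neq 0$ on the source forces $\delta^n(1)\neq 0$ on the target.
\item For $C_\bullet(S^{d-1})$ one has $H^d(\overline C)=H^d(\R P^{d-1};\Z_2)=0$.
\end{itemize}

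For the statement as printed, the lower bound you need is easy. Any $d+2$ vertices span a face $\Delta_{d+1}$. Disjoint faces of $\Delta_{d+1}$ have dimension at most $d$, so $(\Delta_{d+1})^{\times 2}_\Delta$ is a subcomplex of $K^{\times 2}_\Delta$. It is a free $\Z_2$-sphere $S^d$, so $\delta^d(1)\neq 0$ there, which contradicts the vanishing on the target.

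This shows the printed statement is really just the homological Radon lemma. The sharp form, for $\Delta^{(k)}_{2k+2}$ in $\R^{2k}$ with $k=\lceil d/2\rceil$, is what the survey's $\lceil d/2\rceil$ hypotheses actually use. That version would instead need $\delta^{2k}(1)\neq 0$ on the whole deleted product, i.e.\ the non-vanishing of the mod-$2$ van Kampen obstruction.
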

This in turn yields the following Helly-type result.
\begin{theorem}[\cite{hb17}]
 Let $\mathcal F$ be a finite family of sets in $\R^d$ such that $\widetilde{H}_i\left(\bigcap \mathcal F';\Z_2\right)=0$ for all $i=0,1,\ldots, \lceil d/2\rceil-1$ and all nonempty $\mathcal F'\subseteq \mathcal F$. Then $\bigcap \mathcal F=\emptyset$ if and only if there are at most $d+2$ sets $F_1,F_2,\ldots, F_{d+2}\in\mathcal F$ such that $F_1\cap F_2\cap\cdots\cap F_{d+2}=\emptyset$.
\end{theorem}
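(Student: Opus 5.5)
We argue by contradiction, following the proof of Theorem~\ref{thm:topHelSharp} but feeding the construction into Theorem~\ref{t:hom_vank_Kampes} in place of Lemma~\ref{lem:homRad}. The ``if'' direction is trivial. For ``only if'', assume $\bigcap\mathcal F=\emptyset$ while every subfamily of at most $d+2$ sets has nonempty intersection. Pass to an inclusion-minimal subfamily $G_1,\ldots,G_n$ with empty intersection; then $n\geq d+3$, and for each $i$ the set $\bigcap_{j\neq i}G_j$ is nonempty. The complex we map is the $d$-skeleton of a simplex. If $n$ is smaller than $2d+3$, we cannot use the simplex on the $G_i$ directly. Instead, for a face $\sigma$ of $\Delta_{2d+2}$ with vertex set $\{e_{k}: k\in\sigma\}$, we assign a set $A_\sigma$ and build a non-trivial chain map $\varphi$ with $\varphi(\sigma)$ supported in $A_\sigma$.

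The intersection pattern we need is as follows. Take $\Phi$ to be a map sending each face $\sigma$ of $\Delta_{2d+2}^{(d)}$ to a subfamily $\Phi(\sigma)\subseteq\{G_1,\ldots,G_n\}$, and put $A_\sigma=\bigcap\Phi(\sigma)$. We require three properties.
\begin{itemize}
\item[(a)] $\sigma\subseteq\tau$ implies $\Phi(\sigma)\supseteq\Phi(\tau)$, so that $A_\sigma\subseteq A_\tau$.
\item[(b)] Every $A_\sigma$ is nonempty, i.e.\ $\Phi(\sigma)$ is a proper subfamily. It suffices that $\Phi(\sigma)$ misses at least one set.
\item[(c)] For disjoint $\sigma,\tau$, we have $\Phi(\sigma)\cup\Phi(\tau)=\{G_1,\ldots,G_n\}$, so that $A_\sigma\cap A_\tau=\emptyset$.
\end{itemize}
The natural choice, as in the $d$-connected version of \cite{m-httucs-97}, is to use a surjection $g\colon\{1,\ldots,2d+3\}\to\{1,\ldots,n\}$, after reducing to $n\le 2d+3$ if necessary, and set $\Phi(\sigma)=\{G_j: j\notin g(\sigma)\}$. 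Property (a) is then clear. For (c), disjointness of $g(\sigma)$ and $g(\tau)$ would be needed, and this fails for a non-injective $g$. So I instead follow the standard trick: use the constraint-counting in the minimal family differently. Take $\Phi(\sigma)=\{G_j : j\notin\sigma\}$ when $n\geq 2d+3$, restricting to the first $2d+3$ indices and including all remaining $G_j$ always. Then (c) holds because disjoint $\sigma,\tau$ cover-complement each other's index sets. For (b), $\sigma$ has at most $d+1$ vertices, so $\Phi(\sigma)$ misses at most $d+1\le d+2$ sets. Hence it omits some set, and indeed $A_\sigma$ contains the intersection of a subfamily of size $n-|\sigma|$, which is nonempty by the minimality counting. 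When $n<2d+3$, I would instead rely on the fact that any $d+2$ sets intersect, which forces $n\ge d+3$, and map the vertices of $\Delta_{2d+2}$ in blocks. Making (c) hold in that regime is the step I expect to be the main obstacle. My first attempt would be to observe that a $d$-face and a disjoint face have together at most $2d+3$ vertices, and to exploit that missing at most $d+2$ sets still leaves a nonempty intersection.

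Given $\Phi$, construct $\varphi$ by induction on dimension $k=0,\ldots,d$. In dimension $0$, send each vertex $v$ to a point of $A_v$; this makes $\varphi$ non-trivial. Now suppose $\varphi$ is defined on $(k-1)$-faces with $\varphi(\rho)$ supported in $A_\rho$. For a $k$-face $\sigma$, the chain $\varphi(\partial\sigma)$ is supported in $A_\sigma$ by (a), and it is a cycle, augmented when $k=1$. Its degree is zero there, since each vertex contributes an odd point count and there are two vertices. The hypothesis $\widetilde H_{k-1}(A_\sigma;\Z_2)=0$ lets us pick a chain $\varphi(\sigma)$ in $A_\sigma$ with boundary $\varphi(\partial\sigma)$, for $k-1\le\lceil d/2\rceil-1$. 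Here lies a second subtlety: the theorem only assumes vanishing homology up to $\lceil d/2\rceil-1$, yet the skeleton has dimension $d$. So the plan must use the strengthened form of Theorem~\ref{t:hom_vank_Kampes}, in which only the $\lceil d/2\rceil$-skeleton needs to be mapped. Concretely, this is the version with $\Delta_{d+2}^{(\lceil d/2\rceil)}$, which is non-embeddable into $\R^d$ in the van Kampen--Flores sense, and that is consistent with the bound $d+2$. Using that complex, whose vertices correspond to exactly $d+3$ indices, simplifies the counting. Take $n=d+3$ after restricting to $d+3$ members of a minimal family, so that the remaining members are put into every $\Phi(\sigma)$. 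Each face has at most $\lceil d/2\rceil+1\le d+2$ vertices, so (b) follows from the hypothesis that any $d+2$ sets intersect. Disjoint faces give complementary omitted sets, which yields (c). Finally, the homological van Kampen--Flores theorem produces disjoint faces $\sigma,\tau$ with overlapping images, hence a point in $A_\sigma\cap A_\tau=\bigcap\mathcal F$, a contradiction.
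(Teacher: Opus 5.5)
Your final argument is the one the paper has in mind: pass to an inclusion-minimal subfamily $G_1,\dots,G_n$ with $n\ge d+3$, set $\Phi(\sigma)=\{G_j: j\notin\sigma\}$ for faces on $d+3$ of the indices, build a non-trivial chain map on $\Delta_{d+2}^{(\lceil d/2\rceil)}$ inductively from $\widetilde H_{k-1}(A_\sigma;\Z_2)=0$ for $k\le\lceil d/2\rceil$, and conclude with homological van Kampen--Flores. You are also right that the printed form with $\Delta_{2d+2}^{(d)}$ cannot give this statement: what is needed is that $\Delta_{d+2}^{(\lceil d/2\rceil)}$ is a forbidden homological minor of $\R^d$ (for $d=2k$ this is the van Kampen--Flores theorem proper, $\Delta_{2k+2}^{(k)}$ versus $\R^{2k}$, and odd $d$ follows from it by coning), so everything about $\Delta_{2d+2}^{(d)}$, surjections and ``blocks'' should simply be deleted. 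One slip: in your last paragraph $A_\sigma$ is an intersection of $n-|\sigma|$ sets, so (b) comes from minimality, since $A_\sigma\supseteq\bigcap_{j\neq i}G_j\neq\emptyset$ for any $i\in\sigma$ (as you noted earlier), not from the hypothesis that any $d+2$ sets intersect.
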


In contrast to the approaches based on the nerve lemma, the non-embeddability proofs can also deal with the case where the intersections are not connected.
One of the first results in this direction was given by Matoušek~\cite{m-httucs-97}.
\begin{theorem}\cite{m-httucs-97}\label{thm:matousek}
For all non-negative integers $b$ and $d$,
there is a function $f(b,d)$
such that the following holds.
If $\mathcal F$ is a finite family of sets in $\R^d$ such that for all non-empty subfamilies $\mathcal G\subseteq \mathcal F$, $\bigcap \mathcal G$ has at most $b$ path-connected components, each of which is (topologically) $(\lceil d/2\rceil-1)$-connected, then $h(\mathcal F)\leq f(b,d)$.
\end{theorem}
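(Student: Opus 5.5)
Argue by contradiction, as in the proof of Helly's theorem from Radon's lemma. Suppose $h(\mathcal F)>h$ for a suitably large $h=f(b,d)$. Then there are sets $G_1,\dots,G_n\in\mathcal F$, $n>h$, with $\bigcap_{i} G_i=\emptyset$ while every proper subfamily has non-empty intersection. For $I\subseteq[n]$ write $G_I:=\bigcap_{i\in I}G_i$. Every $G_{[n]\setminus S}$ with $S\neq\emptyset$ is non-empty and has at most $b$ components. Each component is $(\lceil d/2\rceil-1)$-connected, so the intersection is non-empty with at most $b$ components.

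The goal is to build a map $\Phi$ from the $d$-skeleton of a $(2d+2)$-simplex into $\R^d$ with the following property: every face $\sigma$ (on some vertex set $[m]$) is sent into $\bigcap_{j\notin \sigma}$ of a suitable family of the $G$'s. The van Kampen--Flores theorem then yields disjoint faces $\sigma,\tau$ with $\Phi(\sigma)\cap\Phi(\tau)\neq\emptyset$. That intersection point lies in $\bigcap_{j\notin\sigma}\cap\bigcap_{j\notin\tau}$, which is the whole family because $\sigma\cap\tau=\emptyset$. This contradicts $\bigcap G_i=\emptyset$.

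The difficulty is disconnectedness. With $b=1$ we could fill skeleta directly: vertices, then edges via $0$-connectivity, and so on up to dimension $d$, using $(\lceil d/2\rceil-1)$-connectivity. But $k$-faces with $k\le d$ need $(k-1)$-connectivity. So I would first reduce to the $\lceil d/2\rceil$-skeleton, using the van Kampen--Flores variant for maps of $\Delta_{2k+2}^{(k)}$ into $\R^{2k}$ with $k=\lceil d/2\rceil$, after embedding $\R^d\subseteq\R^{2k}$. Filling up to dimension $k$ needs only $(k-1)$-connectivity, which matches the hypothesis.

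To handle several components, I would use Ramsey-type selection. Vertices of the simplex correspond to large disjoint blocks of indices, not single sets. I would define the map on the skeleton inductively by dimension:
\begin{itemize}
\item To each vertex $v$ (an index set $I_v$), assign a point of $G_{[n]\setminus I_v}$.
\item For an edge $uv$, the two endpoints both lie in $G_{[n]\setminus(I_u\cup I_v)}$, but possibly in different components. Colour each pair (or each $j$-subset of blocks) by which component data occur. This gives a colouring with at most $b$ colours per face of bounded dimension.
\item By the hypergraph Ramsey theorem, for $n$ huge there is a sub-collection of $2k+3$ blocks on which the component choices are ``consistent''.
\end{itemize}
Consistency should mean that refining to a larger index set merges the endpoints into a common component. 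The point is that for each $S$ one may choose, from among the $\le b$ components of $G_{[n]\setminus S}$, one that contains the chosen components for all subsets. I would do this by taking each vertex to be a union of many indices (of order $b$) and using pigeonhole: among $b+1$ candidate points in a set with at most $b$ components, two share a component. Iterating this over the skeleton dimensions with Ramsey gives the consistent choice, and connectivity of each component then allows extension over faces.

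The main obstacle is exactly this consistency step: making the pigeonhole/Ramsey selection compatible across all faces up to dimension $k$ simultaneously, so that each face's boundary lands in a single component. I would try to formalize it as constructing a chain map, or map, from a subdivided or ``multiplied'' simplex, namely the join of many copies of vertices, and then pass via Ramsey to a homogeneous $\Delta_{2k+2}^{(k)}$. I expect $f(b,d)$ to come out as a tower-type Ramsey number.
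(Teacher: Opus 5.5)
\textbf{There is a genuine gap.} Your overall architecture matches Matoušek's proof, whose bound the paper quotes as $R_{b+1}\bigl(2d+3+\binom{2d+3}{2}(b-1);\binom{b+1}{2}\bigr)$. That architecture is: a minimal counterexample $G_1,\dots,G_n$; a map of $\Delta_{2k+2}^{(k)}$, $k=\lceil d/2\rceil$, sending each face $\sigma$ into $G_{[n]\setminus\Phi(\sigma)}$, where disjoint faces get disjoint index sets $\Phi$; van Kampen--Flores; and Ramsey plus pigeonhole to handle up to $b$ components.

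But the step you call the main obstacle is the whole content of the proof, and your sketch of it does not work. The claim that for each $S$ one can choose a component of $G_{[n]\setminus S}$ containing the chosen components for all subsets is false in general: already $p_i,p_j$ may lie in different components of $G_{[n]\setminus\{i,j\}}$. Your pigeonhole step does not say which $b+1$ points are compared in which set. In any case pigeonhole only yields \emph{some} pair in a common component, not the pair you need. If you apply it to the vertex points of the final complex, the other vertices' indices must also be removed. Then an edge's index set meets those of faces disjoint from it, and the final contradiction in $G_{[n]}$ is lost.

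You also misplace the difficulty: nothing is needed beyond dimension one. Once each edge is a path, for a face $\sigma$ of dimension $\ge 2$ the image of $\partial\sigma$ is path-connected and lies in $G_{[n]\setminus\Phi(\sigma)}$. So it lies in a single path-component, which is $(k-1)$-connected, and extending over faces of dimension $\le k$ is automatic.

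\textbf{The missing idea is private padding.} Fix one point $p_i\in G_{[n]\setminus\{i\}}$ for each single index $i$. Colour each $(b+1)$-set $I=\{i_1<\dots<i_{b+1}\}$ by a pair of positions $\{a,c\}$ such that $p_{i_a},p_{i_c}$ lie in the same path-component of $G_{[n]\setminus I}$. All $p_i$, $i\in I$, lie in this set, which has at most $b$ components, so pigeonhole guarantees such a pair. This gives a colouring with $\binom{b+1}{2}$ colours.

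Ramsey's theorem gives a homogeneous set $H$ of size $N+\binom{N}{2}(b-1)$, where $N$ is the number of vertices of the complex. Give each vertex $v$ an index $i_v\in H$. Give each edge $uv$ a \emph{private} set $P_{uv}\subseteq H$ of $b-1$ further indices, interleaved (using the order on $H$) so that $i_u,i_v$ occupy positions $a,c$ in $\{i_u,i_v\}\cup P_{uv}$. Homogeneity then yields a path from $p_{i_u}$ to $p_{i_v}$ inside $G_{[n]\setminus(\{i_u,i_v\}\cup P_{uv})}$.

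Set $\Phi(\sigma)=\{i_v:v\in\sigma\}\cup\bigcup_{uv\subseteq\sigma}P_{uv}$. This $\Phi$ is monotone, and disjoint faces get disjoint $\Phi$'s because the paddings are private. The argument then closes as you describe. Without this device, or an equivalent one such as the constrained chain maps of Section~\ref{sec:constrained} (which you only gesture at), the proposal does not yet prove the theorem.
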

Matoušek's original bound on $f(b,d)$ is huge, as his proof bounds $f(b,d)$ by the hypergraph Ramsey number $R_{b+1}\left(2d+3+\binom{2d+3}{2}(b-1);\binom{b+1}{2}\right)$, that is, by the number $N$ such that in every coloring of $(b+1)$-tuples of an $N$ element set by $\binom{b+1}{2}$ colors, there is always a monochromatic set of size $2d+3+\binom{2d+3}{2}(b-1)$. It is well known that this number $N$ grows approximately as a tower function of height $b$, see~\cite{ramseyBook}.

However, as shown recently~\cite{radonPolynomial}, Matoušek's proof technique can be simplified and the bound improved, even in the homological setting.
\begin{theorem}\cite{radonPolynomial}
For all non-negative integers $b$ and $d$, the following holds.
If $\mathcal F$ is a finite family of sets in $\R^d$, such that for all non-empty subfamilies $\mathcal G\subseteq \mathcal F$, $\bigcap \mathcal G$ has at most $b$ path-connected components and $\widetilde{H}_i(\bigcap\mathcal G;\Z_2)=0$ for all $i=1,2,\ldots, \lceil d/2\rceil -1$, then \[h(\mathcal F)\leq b\cdot \sum_{j=0}^{d+1}\binom{b+1}{2}^j= O(b^{2d+3}).\]
\end{theorem}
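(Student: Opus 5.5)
We argue by contradiction, as in the proof of Theorem~\ref{thm:topHelSharp}: assume $h(\mathcal F)>N:=b\sum_{j=0}^{d+1}\binom{b+1}{2}^j$. Then there is a subfamily $G_1,\dots,G_n$ with $n\geq N+1$, $\bigcap_i G_i=\emptyset$, and every proper subfamily having non-empty intersection. We will build a non-trivial chain map from $\widetilde C_\bullet(\Delta^{(d)}_{2d+2};\Z_2)$ (or from $\Delta_{d+1}$, depending on which target theorem the counting supports) into $\widetilde C_\bullet(\R^d;\Z_2)$. The map will send each face $\sigma$ to a chain supported in $\bigcap_{j\notin \Phi(\sigma)} G_j$ for a suitable injective labelling $\Phi$ of vertices by indices. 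Then Theorem~\ref{t:hom_vank_Kampes} gives disjoint $\sigma,\tau$ whose images overlap. The overlap point lies in $\bigcap_{j\notin\Phi(\sigma)\cap\Phi(\tau)}G_j$, which is everything if $\Phi(\sigma)\cap\Phi(\tau)=\emptyset$. This is the desired contradiction.

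The new difficulty compared with Theorem~\ref{thm:topHelSharp} is that $\widetilde H_0$ need not vanish. So the $1$-skeleton cannot be built by connecting points inside each intersection. We will replace single points by "blown-up" vertices, following the constrained chain map technique. Each vertex $v$ of the source complex is assigned a set $S_v$ of indices, and $\varphi(v)$ is a point $p_v\in\bigcap_{j\notin S_v}G_j$. An edge $uv$ needs $p_u$ and $p_v$ to lie in the same path-component of $\bigcap_{j\notin S_u\cup S_v}G_j$. Higher skeleta (dimensions $2,\dots,\lceil d/2\rceil$) are then filled using $\widetilde H_i=0$ for $1\leq i\leq\lceil d/2\rceil-1$, exactly as in Theorem~\ref{thm:topHelSharp}. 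The only real obstruction is $H_0$, i.e.\ choosing compatible components.

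The key step, and the main obstacle, is choosing the components with only polynomial loss instead of Matoušek's Ramsey argument. The plan is an iterative pigeonhole argument indexed by the levels $j=0,\dots,d+1$, which explains the geometric-sum shape of $N$. Start with $b\cdot\binom{b+1}{2}^{d+1}$-many candidate "blocks" at the top level. For an index set $I$ of size $m$ with $\bigcap_{i\notin I}G_i$ having at most $b$ components, consider $b+1$ points, one for each of $b+1$ disjoint sub-blocks, in the intersection over the complement of their union. By pigeonhole, two of them lie in the same component. Which pair this is gives one of $\binom{b+1}{2}$ colors. Rather than finding monochromatic structures Ramsey-style, at each level we merge the matched pair into a single new vertex whose point is $p_u$ and which is connected to $p_v$ by a path. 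We then recurse on the surviving blocks. Each level costs a factor $\binom{b+1}{2}$ and an additive $b$, giving the claimed $N$. The hard bookkeeping is to ensure the merged vertex sets $S_v$ stay pairwise disjoint for disjoint faces and fit into $d+2$ (or $2d+3$) groups. We also need the filling at each higher dimension to be consistent with paths chosen earlier. We would first try to handle both together by constructing the chain map on the full simplex on the merged vertices, exactly as in the homological Radon setting (Lemma~\ref{lem:homRad}), so that only $d+2$ final vertices are needed.

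Finally we check that the resulting chain map is non-trivial: each vertex maps to a single point. We also check that disjointness of $\sigma,\tau$ in the source implies disjointness of $\bigcup_{v\in\sigma}S_v$ and $\bigcup_{v\in\tau}S_v$, and that these unions together with the unused indices cover every index. Then the overlap guaranteed by Lemma~\ref{lem:homRad} or Theorem~\ref{t:hom_vank_Kampes} lies in $\bigcap_i G_i=\emptyset$, the contradiction. The estimate $O(b^{2d+3})$ follows from $\binom{b+1}{2}^{d+1}=O(b^{2d+2})$.
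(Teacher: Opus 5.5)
Your reduction works as far as it goes. A non-trivial chain map in which disjoint faces get disjoint index sets gives the contradiction; this is the framework of constrained chain maps with the weakened condition from the footnote to Definition~\ref{def:constrained_chain_map}, which the survey attributes to \cite{radonPolynomial}. Above dimension $1$ the filling is automatic, because $H_i$, $i\ge 1$, vanishes componentwise, so your worry about consistency with earlier paths is unfounded.

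\textbf{Main gap: the $\widetilde H_0$ step.} The step you call the key step is the entire content of the theorem, and it is not carried out. For every pair of \emph{distinct} final vertices $u,v$ you need $p_u$ and $p_v$ in one component of $\bigcap_{j\notin\Phi(uv)}G_j$, with index sets of disjoint faces disjoint. The pigeonhole on $b+1$ points only says that \emph{some} pair shares a component, not a prescribed one; this is exactly why Matoušek needs Ramsey.

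Merging the matched pair does not address this. The path from $p_u$ to $p_v$ joins two candidates that end up inside the same vertex, so by itself it gives no edge of the source complex. It also lives in $\bigcap_{j\notin U}G_j$ with $U$ the union of all $b+1$ sub-blocks, so all of them are absorbed into the constraint set. Nothing in the process forces two surviving vertices to be connected to each other. For the same reason the loss of a factor $\binom{b+1}{2}$ per level is asserted, not derived: if you never pass to a popular colour class, that factor has no source, and merging $b+1$ blocks into one loses $b+1$.

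The bound has the shape $N_{m+1}=b+\binom{b+1}{2}N_m$ over $d+2$ levels. This suggests setting aside $b$ indices per level and keeping a largest class of the remaining indices under the colouring ``which pair of this $(b+1)$-tuple shares a component''. You would then have to show how every colour class yields the required edges, including classes where the matched pair avoids the new index. You defer this as ``hard bookkeeping'', but it is the proof.

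\textbf{Second gap: the target complex.} Your choice of target complex is incompatible with the hypotheses. Since $\widetilde H_i$ vanishes only for $i\le\lceil d/2\rceil-1$, you can fill faces only up to dimension $\lceil d/2\rceil$. Lemma~\ref{lem:homRad} needs a chain map on all of $\Delta_{d+1}$, because a Radon partition may use a $d$-face. The form of Theorem~\ref{t:hom_vank_Kampes} you invoke uses $\Delta^{(d)}_{2d+2}$. For $d\ge 2$ both require $\widetilde H_i=0$ up to $i=d-1$, so your fallback of a full simplex on $d+2$ merged vertices cannot be carried out.

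You need a forbidden homological minor of dimension at most $\lceil d/2\rceil$. For $b=1$ the stated bound equals $d+2$, the bound of the homological van Kampen--Flores-type Helly theorem. So your count should aim at $d+3$ pairwise compatible vertices of the $\lceil d/2\rceil$-skeleton of a simplex, not at $d+2$ or $2d+3$ vertices.
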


In~\cite{hb17} the proof technique and the result of Theorem~\ref{thm:matousek} have been generalized to the following form.
\begin{theorem}[\cite{hb17}]\label{thm:hellyBetti}
 For all positive integers $b$ and $d$, there is a positive integer $h(b,d)$ with the following property.
 Let $\mathcal F$ be a finite family of sets in $\R^d$ such that $\widetilde{\beta}_i\left(\bigcap \mathcal F'; \mathbb Z_2\right)\leq b$ for all non-empty subfamilies $\mathcal F' \subseteq \mathcal F$ and for all $i=0,1,\ldots, \lceil d/2\rceil-1$. Then $h(\mathcal F)\leq h(b,d)$.
\end{theorem}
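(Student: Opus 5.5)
We argue by contradiction, following the scheme of the proof of Helly's theorem from Radon's lemma, with the homological van Kampen--Flores theorem (Theorem~\ref{t:hom_vank_Kampes}) as the non-embeddability input. Suppose $\mathcal F=\{G_1,\dots,G_n\}$ satisfies $\bigcap\mathcal F=\emptyset$ while every proper subfamily has non-empty intersection, with $n$ large. For $I\subseteq[n]$ write $G_I=\bigcap_{i\notin I}G_i$, so $G_I\neq\emptyset$ for $I\neq\emptyset$. The goal is a non-trivial chain map $\varphi$ from $\widetilde C_\bullet(\Delta_{2d+2}^{(d)};\Z_2)$ to $\widetilde C_\bullet(\R^d;\Z_2)$ that is supported in these intersections. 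Concretely, we want an injection $\iota$ of the $2d+3$ vertices of $\Delta_{2d+2}$ into $[n]$ such that each face $\sigma$ is mapped to a chain supported in $G_{\iota(\sigma)}$. If $\sigma,\tau$ are disjoint, then $G_{\iota(\sigma)}\cap G_{\iota(\tau)}\subseteq\bigcap\mathcal F=\emptyset$, so $\varphi(\sigma)$ and $\varphi(\tau)$ cannot overlap. This contradicts Theorem~\ref{t:hom_vank_Kampes}.

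If all the relevant $\widetilde H_i$ vanished, we could build $\varphi$ skeleton by skeleton: every boundary cycle would be a boundary inside the relevant intersection. Here the homology is only bounded by $b$, so a cycle need not bound. We therefore follow Matoušek's idea of passing to a subfamily and then to a subdivision. Fix a large auxiliary simplex $\Delta_{N-1}$ with vertex set $[N]$, $N\le n$. We construct a partial chain map $\psi$ on the iterated barycentric subdivision $\sd^m\Delta_{N-1}$, restricted to its $k$-skeleton, which respects supports: the simplices lying in a face $\sigma$ are mapped into chains in $G_\sigma$. We proceed inductively in $k=0,1,\dots,d$. On $0$-simplices, send each vertex to a point of the appropriate intersection; this is possible since $\widetilde H_{-1}$ vanishes, i.e.\ the intersections are non-empty.

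For the step from dimension $k$ to $k+1$, look at the image of the boundary of each $(k+1)$-face $\tau$ of $\Delta_{N-1}$. Its homology class lies in $\widetilde H_k(G_\tau;\Z_2)$, a vector space of dimension at most $b$. Each $(k+2)$-subset of a suitable vertex set thus receives a "color" in a bounded set. A Ramsey-type argument (hypergraph Ramsey, or the pigeonhole/linear-algebra selection of~\cite{radonPolynomial}) yields a large subset $S$ of vertices. Within $S$ we then combine faces so that sums of boundaries become null-homologous. The linear-algebra version takes sums of $b+1$ classes, one of which must be linearly dependent on the others in a $b$-dimensional space. This replaces each desired $(k+1)$-face by a $\Z_2$-combination of original faces whose image boundary does bound in the right intersection, and we fill it there.

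The key technical device is a "constrained chain map": each face of a smaller simplex $\Delta_M$ is sent to a formal sum of faces of $\Delta_{N-1}$ lying inside the right support set. Composing this with $\psi$ gives the next-level map on $\Delta_M^{(k+1)}$. Iterating $d+1$ times yields a non-trivial support-respecting chain map from $\Delta_{2d+2}^{(d)}$, where only dimensions up to $\lceil d/2\rceil-1$ are needed for the filling. Above that range, one can pass to a restricted skeleton argument as in the version of Theorem~\ref{thm:hVanKampenBound}; the precise requirement comes from the van Kampen–Flores dimension count. The starting size $N$, and hence $h(b,d)$, is obtained by iterating the Ramsey bounds backwards from $2d+3$.

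The main obstacle is the inductive step. Disjoint faces must still map into disjoint index sets, so the combinations chosen by Ramsey/pigeonhole must preserve support compatibility and the $\Z_2$ non-triviality (odd vertex counts). This is where we would try first the homogeneous-subset argument: ensure that all $(k+1)$-faces of the chosen subset have boundary classes that cancel in pairs.
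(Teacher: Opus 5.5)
Your frame (a minimal non-intersecting family, supports $G_I$, and a contradiction with a homological van Kampen--Flores theorem) is sound. It is essentially the direct route of \cite{hb17}, whereas this paper bounds the Radon number (Theorem~\ref{t:bettiRadon} via Proposition~\ref{p:chain_map}) and then applies Levi's argument. However, the step that carries all the difficulty is missing, and the mechanism you sketch for it does not work.

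You colour a $(k+2)$-subset $\tau$ by the class of $\psi(\partial\tau)$ in $\widetilde H_k(G_\tau;\Z_2)$. These classes live in different groups for different $\tau$, so ``same colour'' and ``the sum is null-homologous'' have no meaning. To fill a $(k+1)$-face $\sigma$ of the target, you must compare the classes of \emph{all} top simplices of $\sd\sigma$ in one common group: the homology of the support set $G_{\Phi(\sigma)}$ attached to $\sigma$. That set must be large enough for these classes to coincide, yet still satisfy $\Phi(\sigma)\cap\Phi(\tau)=\emptyset$ for disjoint faces. In particular, your rigid requirement that $\sigma$ be supported in $G_{\iota(\sigma)}$ leaves no room for this.

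Reconciling these two demands is exactly the job of Proposition~\ref{p:ramsey_selection}. There the colourings $\rho_V$ depend on the ambient set $V$. The image $Z$ of the vertices of $\sd\sigma$ is made monochromatic in $\rho_{Z\cup\widehat Z}$, using extra indices $\widehat Z=M(Z)$. Strong injectivity of $M$ preserves condition~(\ref{it:first}), and the even number $(t+1)!$ of equal classes then cancels over $\Z_2$. Your last paragraph explicitly leaves this open.

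The linear-algebra alternative does not close the gap either. A dependent subset among $b+1$ classes gives a $\Z_2$-combination of faces whose boundary is not the image of $\partial\sigma$, so it does not extend the chain map from the lower skeleton. The pigeonhole trick of \cite{radonPolynomial} handles only $\widetilde\beta_0$ with vanishing higher homology, and the survey notes that combining it with the present technique is open.

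Second, your dimension count is wrong. A chain map on $\Delta^{(d)}_{2d+2}$ needs fillings of cycles up to dimension $d-1$, but you only control $\widetilde\beta_i$ for $i\le\lceil d/2\rceil-1$. The ``restricted skeleton argument'' is not an argument. The fix is to use a forbidden homological minor of dimension $k=\lceil d/2\rceil$, for example $\Delta^{(k)}_{2k+2}$. By the homological van Kampen--Flores theorem it is forbidden in $\R^{2k}$, hence in $\R^d\subseteq\R^{2k}$. The induction then runs over $k+1$ levels, not $d+1$, and uses Betti bounds only for $i<k=\dim K$, matching both the hypothesis and the way Proposition~\ref{p:chain_map} is applied.
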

The proof introduced several novel ideas: homological non-embeddability, constrained chain maps, clever use of barycentric subdivisions, and iterated use of Ramsey's theorem. 
Another impetus came when Patáková realized that the proof of Theorem~\ref{thm:hellyBetti} actually combines two arguments~\cite{bettiRadon}: A proof of a Radon-type result and the classical argument of Levi that a bound on the Radon number implies a bound on the Helly number~\cite{Levi1951}.
As it turns out, a bound on the Radon number is much more universal, since unlike a bound on the Helly number, it implies a bound on fractional Helly numbers \cite{boundedRadon_fractHelly}, Tverberg numbers \cite{jamison1981, Bukh, domotor}, colorful Helly numbers \cite{boundedRadon_fractHelly}, weak $\varepsilon$-nets and $(p,q)$-theorem \cite{transversal-hypergraph, boundedRadon_fractHelly}.
At present, it is not known whether the techniques of \cite{radonPolynomial} and \cite{hb17,bettiRadon} can be combined together.

\subsection{Generalized convexity}
Before we present the technique of constrained chain maps, we introduce a useful notion of generalized convexity.

Given a family $\mathcal C$ of subsets of a topological space $X$ and a set $S\subseteq X$, we define
\[\operatorname{conv}_{\mathcal C}(S):=\bigcap_{S\subseteq F\in\mathcal C} F.\]
If there is no set $F\in\mathcal C$ for which $S\subseteq F$, we set $\operatorname{conv}_{\mathcal C}(S):=X$.

It is not hard to see that $\operatorname{conv}_{\mathcal C}$ is a closure operator (see~\cite{vanDeVel}),
i.e., a map from subsets of $X$ to subsets of $X$ that is 
\begin{description}
\item[Extensive] $S\subseteq \operatorname{conv}_{\mathcal C}S$ for all $S\subseteq X$,
\item[Monotone] $S\subseteq T\subseteq X$ implies $\operatorname{conv}_{\mathcal C}S\subseteq \operatorname{conv}_{\mathcal C}T$ and
\item[Idempotent] $\operatorname{conv}_{\mathcal C}(\operatorname{conv}_{\mathcal C}S)=\operatorname{conv}_{\mathcal C}S$ for all $S\subseteq X$.
\end{description}
The sets that satisfy $\operatorname{conv}_{\mathcal C}S=S$ are called $\mathcal C$-closed.
For example, if $\mathcal C$ is the set of all convex sets, then the $\mathcal C$-closed sets are exactly the usual convex sets.

\begin{theorem}\cite{bettiRadon}\label{t:bettiRadon}
There is a number $r(b,d)$
 with the following property. Let $\mathcal F$ be a family of sets in $\R^d$ such that 
 for every finite set $S\subseteq \R^d$,
 \[\widetilde{\beta}_i(\operatorname{conv}_{\mathcal F}S;\Z_2)\leq b\quad \text{for all }i< \lceil d/2\rceil.\]
 Then in every set $P$ with $r(b,d)$ points there are two disjoint subsets $A,B\subseteq P$ with $\operatorname{conv}_{\mathcal F}A\cap \operatorname{conv}_{\mathcal F}B\neq 0$. 
\end{theorem}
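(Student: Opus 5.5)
The plan is to argue by contradiction via a constrained chain map, following the strategy behind Theorem~\ref{thm:hellyBetti}, but with the sets $\bigcap\mathcal G$ replaced by the convex hulls $\operatorname{conv}_{\mathcal F}A$. Suppose $P$ has many points and, for all disjoint $A,B\subseteq P$, $\operatorname{conv}_{\mathcal F}A\cap\operatorname{conv}_{\mathcal F}B=\emptyset$. We identify $P$ with the vertex set of a large simplex $\Delta_{n-1}$. The goal is a non-trivial chain map
\[
\varphi\colon \widetilde C_\bullet\bigl(\Delta_{2d+2}^{(\lceil d/2\rceil)};\Z_2\bigr)\to \widetilde C_\bullet(\R^d;\Z_2)
\]
that is \emph{constrained}: for each face $\sigma$, $\varphi(\sigma)$ is supported in $\operatorname{conv}_{\mathcal F}(V_\sigma)$ for a set $V_\sigma\subseteq P$, and disjoint faces get disjoint sets. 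Then disjoint faces have non-overlapping images, which contradicts Theorem~\ref{t:hom_vank_Kampes}. Only the skeleton up to dimension $\lceil d/2\rceil$ is needed, matching the hypothesis $i<\lceil d/2\rceil$.

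The map is built skeleton by skeleton. On vertices, send $v\in P$ to the point $v$ itself, which lies in $\operatorname{conv}_{\mathcal F}\{v\}$ by extensivity. Now suppose $\varphi$ is defined on the boundary of a $k$-face $\sigma$. Then $\varphi(\partial\sigma)$ is a $(k-1)$-cycle lying in $\operatorname{conv}_{\mathcal F}$ of the union of the sets assigned to the facets, by monotonicity. If this cycle were null-homologous in that hull, we could fill it there. It need not be, since the hull may have up to $b$ nontrivial classes in dimension $k-1$.

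To handle this, we use the Ramsey/pigeonhole idea. We do not map a face of the target skeleton to a single simplex of $\Delta_{n-1}$. Instead, each vertex of $\Delta_{2d+2}$ corresponds to a large block of points of $P$, and a $k$-face uses sub-simplices of the barycentric subdivision. This gives many candidate boundary cycles $\varphi(\partial\tau)$, for different choices $\tau$ inside the block of $\sigma$, all living in the same hull, and their homology classes take values in $\widetilde H_{k-1}(\operatorname{conv}_{\mathcal F}(\cdot);\Z_2)$, a vector space of dimension at most $b$. Among more than $2^{b}$ of them, some subset sums to zero in homology. More robustly, Ramsey's theorem applied to colorings of tuples by homology classes yields a homogeneous sub-block on which pairwise sums vanish. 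Taking such a sum, which has an odd or otherwise controlled number of terms so that non-triviality in degree $-1$ is preserved, gives a null-homologous cycle. We fill it inside the hull and define $\varphi(\sigma)$ as that filling. The sets $V_\sigma$ grow but remain inside the block union for $\sigma$, so disjointness of blocks preserves the constraint. Iterating over dimensions $k\le \lceil d/2\rceil$ determines how large $n=r(b,d)$ must be: a tower of Ramsey numbers in $b$ and $d$.

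The main obstacle is the bookkeeping at this filling step. The summed cycle must be a genuine chain-map boundary, meaning it is compatible with the already-chosen values on lower faces shared by different $\tau$'s. The fix I would try first is the constrained chain map formalism: define $\varphi$ on the barycentric subdivision $\sd\Delta_{2d+2}^{(\lceil d/2\rceil)}$, use a chain map $\sd$ back to the original chains, and choose for each vertex of the subdivision a point of $P$ such that all hull constraints are inherited by monotonicity. Keeping non-triviality (odd sums) and disjointness simultaneously is where most of the care will be needed.
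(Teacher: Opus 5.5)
Your framework is the paper's: build a non-trivial chain map from a forbidden homological minor that is constrained by $(\mathcal F,P)$, and conclude via homological van Kampen--Flores. Your choice of the $\lceil d/2\rceil$-skeleton is actually the right one, but it needs the homological van Kampen--Flores theorem for $\Delta_{2k+2}^{(k)}$ in $\R^{2k}\supseteq\R^d$ with $k=\lceil d/2\rceil$. Theorem~\ref{t:hom_vank_Kampes} as stated is the $d$-skeleton version, and through Proposition~\ref{p:chain_map} it would demand Betti bounds for all $i<d$. The real gap is elsewhere. Everything nontrivial in the theorem is the existence of the constrained chain map, i.e.\ Proposition~\ref{p:chain_map}. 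That is exactly the step you call ``the main obstacle'' and leave open, and the mechanisms you sketch for it would not work as stated.

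\emph{First, compatibility.} Once $\varphi$ is fixed on $\partial\sigma$, the class of $\varphi(\partial\sigma)$ in the hull is fixed. Finding other candidate cycles whose sum is null-homologous gives you nothing to fill, because that sum is not $\varphi(\partial\sigma)$. The paper never extends a fixed map. In dimension $t$ it \emph{rebuilds} the map on $K^{(t-1)}$ as $\chi=\gamma'\circ\beta\circ\alpha$. Here $\alpha$ is the subdivision chain map into $\sd K$, and $\beta$ is induced by an injection of $V(\sd K)$ into the vertices of a huge $(t-1)$-skeleton $L$. The map $\gamma'$ is a constrained non-trivial chain map on $L$ supplied by induction on dimension. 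So vertices of $\sd K$ go to vertices of $L$, not merely to points of $P$, and the higher cells of $\sd K$ are carried by $\gamma'$. Compatibility and non-triviality are then automatic, and $\chi(\partial\sigma)=\sum_\rho\gamma'(\partial\beta\rho)$, summed over the $(t+1)!$ top simplices $\rho$ of $\sd\sigma$. This is an \emph{even} number of terms. What you need is that \emph{all} $(t+1)$-subsets of $\beta(V(\sd\sigma))$ get the same class. ``Some subset sums to zero'' and ``an odd number of terms'' are the wrong conditions; an odd number of equal classes does not vanish.

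\emph{Second, the ambient hull.} A homology class is only defined relative to an ambient set $\conv_{\mathcal F}\Psi(V)$, and it changes when $V$ changes. So a plain Ramsey theorem for one fixed coloring does not apply. Fixing disjoint blocks per vertex does not explain two things: why, inside those fixed hulls, a homogeneous configuration exists for all faces simultaneously, and why the constraint sets inherited from the lower-dimensional map stay inside the blocks. The paper resolves this with Proposition~\ref{p:ramsey_selection}. It uses colorings $\rho_V$ that depend on $V$, an $n$-set $Y\subseteq V(L)$, and a strongly injective $(t+1)$-monochromatic map $M$. The private padding sets $M(Z)$ enlarge the hull enough for the classes to coincide. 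Strong injectivity keeps the resulting constraint sets $\Phi$ of disjoint faces disjoint. Without these two ingredients your proposal describes the framework but does not yet prove the theorem.
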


\begin{definition}\cite{Levi1951, bettiRadon}
Let $\mathcal F$ be a family of subsets of $X$. The Radon number $r(\F)$ of $\mathcal F$ is the smallest integer $r$ such that in any set $P\subseteq X$ of cardinality $r$ we find two disjoint subsets $P_1, P_2 \subseteq P$ satisfying $\operatorname{conv}_{\mathcal F}(P_1)\cap \operatorname{conv}_{\mathcal F}(P_2)\neq \emptyset$.
If no such $r$ exists, we put $r(\F) = \infty$.
\end{definition}

As we have seen at the beginning of this section, Radon's lemma (Lemma~\ref{lem:radon}) says that the Radon number of any family $\F$ of convex sets in $\R^d$ is at most $d+2$. This bound is tight, as can easily be seen by considering $d+1$ affinely independent points in $\R^d$.
Radon used this lemma to prove Helly's result. Later, Levi~\cite{Levi1951} showed that the proof works in much greater generality: Any family $\F$ with Radon number $k+1$ has Helly number at most $k$.

Since Theorem \ref{t:bettiRadon}
yields that any family $\mathcal F$ that satisfies the given assumptions has $r(\mathcal F)\leq r(b,d)$, 
Levi's result immediately implies that any such family satisfies $h(\mathcal F)< r(b,d)$,
which is exactly the statement of Theorem~\ref{thm:hellyBetti}, for direct deduction of the result see \cite[Remark 4]{bettiRadon}.

\bigskip
Let us now introduce the main techniques that are used in the proof of Theorem~\ref{t:bettiRadon}:
the constrained chain map method and homological minors (Section~\ref{sec:constrained}); and a Ramsey-type result for closure operators (Section~\ref{sec:ramsey}).
Some of the ideas are already present in the proofs of Theorems~\ref{thm:hellyMat}, \ref{thm:hVanKampenBound} and \ref{thm:topHelSharp}.
Both techniques are of independent interest and may find applications even outside the area of Helly-type theorems.

\subsection{Constrained chain maps \& homological minors}\label{sec:constrained}
Let $\mathbf R$ be a topological space, $K$ a finite simplicial complex and $\gamma_\bullet: \widetilde C_\bullet(K, \mathbb Z_2) \to \widetilde C_\bullet(\mathbf R, \mathbb Z_2)$  a chain map from simplicial chains of $K$ to singular chains of $\mathbf R$. Recall that a chain map $\gamma_\bullet$ is \emph{non-trivial} if the image of every vertex in $K$ is a finite set of points in $\mathbf R$ of odd cardinality, see Definition \ref{d:non-trivial_chain_map}.

\begin{definition}\cite{Uli_HomMinor}
The simplicial complex $K$ is said to be {\em homological minor} of~$\mathbf R$, if there exists a non-trivial chain map $\gamma_\bullet \colon \widetilde C_{\bullet}(K, \mathbb Z_2) \to \widetilde C_{\bullet}(\mathbf R, \mathbb Z_2)$ such that disjoint simplices are mapped to non-overlapping chains. If no such chain map exists we say that $K$ is a {\em forbidden homological minor} of $\mathbf R$.
\end{definition}

\begin{definition}\cite{bettiRadon}\label{def:constrained_chain_map}
Let $\mathcal F$ be a family of sets in~$\mathbf R$ and let $P$ be a set of points in $\mathbf R$. We say that a chain map  $\gamma_\bullet: \widetilde C_\bullet(K, \mathbb Z_2) \to \widetilde C_\bullet(\mathbf R, \mathbb Z_2)$  is \emph{constrained by ($\mathcal F, P)$} if there exists a map $\Phi: K \to 2^P$  such that\footnote{As shown in~\cite{radonPolynomial} it suffices to assume that $\sigma\cap \tau=\emptyset$ $\Rightarrow$ $\Phi(\sigma)\cap \Phi(\tau)=\emptyset$ in the point~(\ref{it:first}). }
\begin{enumerate}
    \item $\Phi(\sigma \cap \tau) = \Phi(\sigma) \cap \Phi(\tau)$ for any $\sigma, \tau \in K$, and $\Phi(\emptyset)=\emptyset$,\label{it:first}
    \item for any simplex $\sigma \in K$, the support of $\gamma_\bullet(\sigma)$ is contained in $\operatorname{conv}_{\mathcal F} \Phi(\sigma)$. \label{it:second}
\end{enumerate}
\end{definition}

Constrained chain maps have the following important property~\cite[Lemma 3.5]{bettiRadon}: if a non-trivial chain map $\gamma_\bullet$ is constrained by $(\mathcal F, P)$, then either $K$ is a homological minor of $\mathbf R$ or there exist two disjoint subsets $A,B$ of $P$ with $\operatorname{conv}_{\mathcal F}A \cap \operatorname{conv}_{\mathcal{F}}B \neq \emptyset$.
Thus, to prove $r(\mathcal F)\leq r$, it suffices to construct, for each $P \subset \mathbf R$ of cardinality $r$, a non-trivial chain map $\gamma_\bullet: \widetilde C_\bullet(K,\mathbb Z_2) \to \widetilde C_\bullet(\mathbf R, \Z_2)$, where $K$ is a forbidden homological minor of $\mathbf R$ and $\gamma_\bullet$ is constrained by $(\mathcal F,P)$. The existence of such a chain map for $\F$ satisfying some additional assumptions is the content of the following statement.

\begin{proposition}\cite[Prop. 3.6 and 3.7]{bettiRadon}\label{p:chain_map}
 Let $K$ be a forbidden homological minor of $\mathbf R$ and let $b$ be a non-negative integer. Then there exists a constant $r$, depending on $K$ and $b$, such that the following holds. 
   For any family of sets $\mathcal F$ in $\mathbf R$ with  \[\widetilde{\beta}_i(\operatorname{conv}_{\mathcal F}S;\Z_2)\leq b\quad \text{for all }i< \dim K \quad \text{and any finite set } S, \] 
   and any $P \subset \mathbf R$ of cardinality $r$, there is a non-trivial chain map $\gamma_\bullet: \widetilde C_\bullet(K,\mathbb Z_2) \to \widetilde C_\bullet(\mathbf R, \Z_2)$ which is constrained by $(\mathcal F,P)$.
\end{proposition}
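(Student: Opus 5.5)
We build the constrained chain map skeleton by skeleton. The $b$ bounded Betti numbers replace vanishing homology: we cannot fill cycles directly, but among sufficiently many candidate cycles some combination is a boundary. To make this work we map not $K$ itself but a subdivision-like auxiliary complex, and at each stage we use Ramsey's theorem to regain a uniform choice.

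\emph{Setup.} Let $k=\dim K$ and fix a large integer $m$ (depending on $k$, $b$). Assign to each vertex $v$ of $K$ a pairwise disjoint block $P_v\subseteq P$ of $m$ points, and set
\[\Phi(\sigma)=\bigcup_{v\in\sigma}P_v\]
(more precisely, a sub-selection of these blocks chosen along the way). Property~(\ref{it:first}) then holds automatically, since disjointness of the blocks gives $\Phi(\sigma\cap\tau)=\Phi(\sigma)\cap\Phi(\tau)$. The content is property~(\ref{it:second}): the support of $\gamma(\sigma)$ must lie in $\operatorname{conv}_{\mathcal F}\Phi(\sigma)$.

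\emph{Vertices.} Send each vertex $v$ to a single point of $P_v$. This makes $\gamma$ non-trivial.

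\emph{Induction on dimension.} Suppose $\gamma$ is defined on the $(i-1)$-skeleton, with $i\le k$. For an $i$-simplex $\sigma$, the chain $\gamma(\partial\sigma)$ is an $(i-1)$-cycle. It lies in $\operatorname{conv}_{\mathcal F}$ of the union of the $\Phi$-sets of the facets of $\sigma$, which is contained in $\operatorname{conv}_{\mathcal F}\Phi(\sigma)$. We need this cycle to be a boundary there.

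It need not be one, since $\widetilde\beta_{i-1}\le b$ rather than $0$. The remedy is to prepare many alternatives: carry out the construction not for $K$ but for several copies, i.e.\ for the $i$-skeleton of a large simplex on the points of $P$. There, every $(i-1)$-face $\tau$ of a simplex of dimension $\ge i$ has its image cycle constrained to $\operatorname{conv}_{\mathcal F}(\tau)$.

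For an $(i+b+1)$-element set $S$, consider the $b+1$ (or more) cycles $\gamma(\partial\sigma)$ for $i$-faces $\sigma\subseteq S$ sharing a common part. All of them lie in $\operatorname{conv}_{\mathcal F}S$, whose $\widetilde H_{i-1}$ has dimension at most $b$. Among any $b+1$ of their classes some nonempty sum vanishes, so a sum of such boundary cycles bounds in $\operatorname{conv}_{\mathcal F}S$.

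\emph{Ramsey step.} Which subset sums vanish is a colouring of $(i+b+1)$-subsets of $P$ by finitely many colours. By Ramsey's theorem (iterated over $i=1,\dots,k$, hence the dependence of $r$ on $K$ and $b$), we pass to a large subset of $P$ on which the vanishing pattern is uniform. This yields a fixed combinatorial rule: replace each $i$-simplex by a fixed $\Z_2$-combination of "sub-simplices", i.e.\ a chain-level subdivision. Composing these rules gives a chain map from $K$, through an iterated barycentric-type subdivision, into the simplex on $P$. Constraint~(\ref{it:second}) holds, with $\Phi(\sigma)$ the union of the points used by the pieces of $\sigma$; these are disjoint for disjoint simplices because distinct vertices of $K$ draw on disjoint blocks.

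\emph{Main obstacle.} The hardest step is arranging the Ramsey-uniform pattern so that the replacement combinations are themselves coherent. The chosen fillings of $\gamma(\partial\sigma)$ must form chains whose boundaries match across shared faces, and the parity of the vertex images must stay odd. I would first handle this by requiring the uniform pattern to be invariant under order-preserving relabelling. The combination coefficients then depend only on the relative order of points, so faces shared by adjacent simplices receive identical choices, and the boundary relation $\partial\gamma=\gamma\partial$ can be checked combinatorially on a fixed small template complex.
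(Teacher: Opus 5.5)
There is a genuine gap at the step where you extract a null-homologous combination. Linear dependence of $b+1$ classes in a space of dimension at most $b$ gives some nonempty $J$ with $\sum_{j\in J}[\gamma(\partial\sigma_j)]=0$. But the chain $\sum_{j\in J}\sigma_j$ then has boundary $\sum_{j\in J}\partial\sigma_j$, and you have no control over $J$. In general this is not the cycle you actually have to fill, namely the image of $\partial\sigma$ for an $i$-simplex $\sigma$ of $K$ (already subdivided by earlier stages). Making the vanishing pattern Ramsey-uniform makes $J$ the same everywhere, but not the right one. So your ``fixed combinatorial rule'' need not be a chain-level subdivision at all. This is exactly the obstacle you leave open, and order-invariance only addresses consistency of choices, not existence.

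There is a second problem. Your inductive claim is that the image of an $(i-1)$-face of the big simplex lies in $\operatorname{conv}_{\mathcal F}$ of its own vertex set. This cannot survive past $i=1$: fillings require auxiliary points, so the constraint sets must grow (this is the paper's $\Psi$).

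The missing idea is to demand \emph{monochromaticity} and let the barycentric subdivision supply an even count. The paper inducts on $\dim K$:
\begin{enumerate}
\item Apply the hypothesis to a large $(t-1)$-skeleton $L$ of a simplex, giving $\gamma'$ constrained via $\Psi$.
\item Set $\chi=\gamma'\beta\alpha$, where $\alpha$ is the subdivision map into $\sd K$ and $\beta$ is induced by an injection $V(\sd K)\to V(L)$.
\item For $V\subseteq V(L)$, colour each $(t+1)$-subset $\sigma\subseteq V$ by the class of $\gamma'(\partial\sigma)$ in $\widetilde H_{t-1}(\operatorname{conv}_{\mathcal F}\Psi(V);\Z_2)$. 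This uses at most $2^b$ colours; no linear dependence is needed.
\item Choose $\beta$ so that, for each $t$-simplex $\tau$ of $K$, all $t$-simplices $\rho$ of $\sd\tau$ get the same class in one common group attached to $\tau$.
\end{enumerate}
Since $\sum_\rho\partial\rho=\alpha(\partial\tau)$ and there are $(t+1)!$ such $\rho$, an even number, $\chi(\partial\tau)=\sum_\rho\gamma'\beta(\partial\rho)$ is null-homologous and can be filled. No compatibility between fillings is needed, because $\chi$ is already a composite of chain maps.

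Because the class of a cycle depends on the ambient set, plain Ramsey on $(i+b+1)$-sets does not deliver step 4; it only gives a uniform pattern, not monochromaticity on the small set $Z$ of vertices of $\beta(\sd\tau)$. One needs Proposition~\ref{p:ramsey_selection}: each $Z$ is monochromatic in $\rho_{Z\cup\widehat Z}$, with the extra sets $\widehat Z$ pairwise disjoint and disjoint from $Y$. This strong injectivity is what makes condition~(\ref{it:first}) hold for the resulting $\Phi$.
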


Theorem \ref{t:bettiRadon} then follows from Proposition \ref{p:chain_map} and Theorem \ref{t:hom_vank_Kampes}, which states that $K = \Delta^{(d)}_{2d+2}$ is a forbidden homological minor of $\mathbb R^d$.

\subsection{Ramsey}\label{sec:ramsey}

 The exposition here is condensed so for details and illustrations we refer to \cite[Section 3.3]{bettiRadon}. 
We fix a set $X$ and positive integers $k, c$.
 For every $V\subseteq X$, let $\rho_V \colon \binom{V}{k}\to [c]$ be a coloring of the $k$-element subsets of $V$. If $|V|<k$,  the coloring $\rho_V$ is, by definition, the empty map.

Let $Y \subseteq X$, $m$ be a positive integer, $M: \binom{Y}{m} \to 2^{X\setminus Y}$, and put $\widehat Z := M(Z)$.
A map $M \colon \binom{Y}{m} \to 2^{X\setminus Y}$ is called \emph{$k$-monochromatic}, if each $m$-element subset $Z$ of $Y$ is monochromatic with respect to the coloring $\rho_{Z \cup \widehat Z} : \binom{Z \cup \widehat Z}{k} \to [c]$, meaning all $k$-element subsets of $Z$ get the same color in the coloring $\rho_{Z \cup \widehat Z}$.

A map $M: \binom{Y}{m} \to 2^{X\setminus Y}$ is \emph{strongly injective} if whenever $Z \neq Z'$, we have $M(Z) \cap M(Z') = \emptyset$ .

\begin{proposition}\cite[Prop. 3.9]{bettiRadon}\label{p:ramsey_selection}
 For any positive integers $k$, $m$, $n$, $c$ there is a constant $N_k=N_k(n;m;c)$ such that the following holds. Let $X$ be a set, and for every $V
\subseteq X$ let $\rho_V \colon \binom{V}{k}\to [c]$ be a coloring of the $k$-element subsets of
$V$. If $|X| \geq N_k$, then there always exists an $n$-element subset $Y \subseteq X$ and a strongly injective $k$-monochromatic map $M: \binom{Y}{m} \to 2^{X\setminus Y}$.
\end{proposition}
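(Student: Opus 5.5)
We prove Proposition~\ref{p:ramsey_selection} by induction on $m$, keeping $k$, $n$ and $c$ fixed. The basic tool is the classical hypergraph Ramsey theorem: for every $k$, $c$ and $t$ there is $R_k(t;c)$ such that any $c$-colouring of $\binom{V}{k}$ with $|V|\ge R_k(t;c)$ has a monochromatic $t$-subset.

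The difficulty is that the colouring $\rho_V$ depends on the ambient set $V$. So we cannot apply Ramsey's theorem once to all of $X$. Instead, each $m$-subset $Z\subseteq Y$ must get its own private witness set $\widehat Z$, and these witness sets must be pairwise disjoint (strong injectivity). The plan is therefore to reserve a large pool of fresh points for every $Z$ and then prune $Y$ until every $Z$ becomes monochromatic with respect to $\rho_{Z\cup\widehat Z}$.

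\textbf{Step 1: an auxiliary colouring.} Set $N:=R_m(n;C)+L\binom{n'}{m}$ for suitable $n'\ge n$, $C$ and $L$ to be fixed below. Split $X$ into a candidate set $Y_0$ of size $n'$ and a reservoir $W$. Assign to each $m$-subset $Z\subseteq Y_0$ a disjoint block $W_Z\subseteq W$ with $|W_Z|=L$; this gives strong injectivity automatically. For each $Z$ we need some $\widehat Z\subseteq W_Z$ that makes $Z$ monochromatic in $\rho_{Z\cup\widehat Z}$. Since only finitely many subsets of $W_Z$ exist, record for each $Z$ the \emph{type} of the pair $(Z,W_Z)$: the colour pattern that $\rho_{Z\cup S}$ induces on $\binom{Z}{k}$, for every $S\subseteq W_Z$. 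This type is a colour from a finite palette of size $C=C(k,m,c,L)$. Apply Ramsey's theorem for $m$-sets to $Y_0$ with this palette to get $Y\subseteq Y_0$, $|Y|=n$, on which all $m$-subsets have the same type.

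\textbf{Step 2: where the real obstacle lies.} It is not clear that a uniform type already yields a monochromatic choice. Fixing $S=\emptyset$, every $Z$ gets the same pattern on $\binom{Z}{k}$. Identifying $Z$ with $[m]$ via the order inherited from $Y$, this is a single fixed colouring of $\binom{[m]}{k}$, but it need not be constant.

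To get around this, I would swap the roles of the two parts. We must allow $\widehat Z$ to absorb the "bad" structure, so $Z$ is taken inside a larger set. Concretely, choose $Y_0$ of size $n'=R_k(\,\cdot\,)$, and use the reservoir for each $Z$ as the place where the point set $V=Z\cup\widehat Z$ is completed.

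Alternatively, and I expect this is the cleaner route, reverse the order of the argument. First pick disjoint blocks $B_1,B_2,\dots$, each of size $R_k(m;c)$. For each $m$-set $Z$ of block indices, apply ordinary Ramsey inside the ambient set $V=\bigcup_{i\in Z}B_i$ to extract, from the chosen blocks, points forming an $m$-set that is monochromatic in $\rho_V$; the leftover points of those blocks become $\widehat Z$.

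This makes $Z$ depend on the blocks rather than on $Y$, so a further Ramsey/canonisation step over block indices is needed. Its purpose is to select one representative point per block consistently, i.e.\ so that the representative chosen in $B_i$ does not depend on which $Z\ni i$ is considered. This consistency step is the hard part. I would handle it by colouring $m$-sets of indices by the position pattern (within each block) of the selected monochromatic points, which takes finitely many colours. Ramsey's theorem then gives $n$ indices on which this pattern is constant. For the resulting $Y$, each $Z$ is monochromatic in $\rho_{Z\cup\widehat Z}$ with $\widehat Z$ drawn from its own blocks. The remaining task is to make the $\widehat Z$ pairwise disjoint, which I would do by giving each $Z$ fresh copies of its blocks, at the cost of a larger but still finite $N_k(n;m;c)$.
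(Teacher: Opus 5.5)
\textbf{There is a genuine gap.} Neither route closes, and the gap is at the heart of the statement. For $V\neq V'$ the colourings $\rho_V$ and $\rho_{V'}$ are completely unrelated. So nothing proved about $\rho_V$ for one set $V$ can be transferred to another set unless the dependence on $V$ has first been homogenized.

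Step~1 fails, as you note yourself. For example, take $\rho_V(A)=1$ iff $\min V\in A$, for an order in which $Y_0$ precedes all pools; for $m>k$ no $S\subseteq W_Z$ works. The ``swap the roles'' paragraph is not developed into an argument.

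The block route fails in three places:
\begin{enumerate}
\item Ramsey inside $\bigcup_{i\in Z}B_i$ need not give one point per block; the monochromatic $m$-set may lie inside a single block.
\item Even if it does, making the position pattern $(p_1,\dots,p_m)$ constant does not make the representative of $B_i$ independent of $Z$. The index $i$ has different ranks in different $Z\ni i$, so you would need $p_1=\dots=p_m$, which Ramsey does not give.
\item Decisively, $\widehat Z$ contains $B_i$ minus its representative for every $i\in Z$. These points are shared by all $Z\ni i$, so strong injectivity fails. ``Fresh copies'' cannot repair this: they replace $V=Z\cup\widehat Z$ by a different set $V'$, and $\rho_{V'}$ has nothing to do with $\rho_V$, so the monochromaticity you obtained is lost.
\end{enumerate}

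\textbf{The missing idea} is to make $\rho_V$ depend only on the order type of $V$ before choosing anything.
\begin{enumerate}
\item Order $X$ and set $s:=R_k(m;c)$. Colour each $V\in\binom{X}{s}$ by $g_V\colon\binom{[s]}{k}\to[c]$, $g_V(B):=\rho_V(\iota_V(B))$, where $\iota_V\colon[s]\to V$ is the increasing bijection. This uses $c^{\binom sk}$ colours.
\item Put $D:=\binom nm(s-m)$ and $t:=(n+1)(D+1)$. Ramsey's theorem for $s$-sets gives $Q\subseteq X$ with $|Q|=t$ and $g_V=g$ for all $V\in\binom Qs$.
\item Choose $J=\{j_1<\dots<j_m\}\subseteq[s]$ with $g$ constant on $\binom Jk$.
\item Write $Q=\{x_1<\dots<x_t\}$ and let $Y:=\{x_{D+1},x_{2(D+1)},\dots,x_{n(D+1)}\}$.
\item For $Z=\{z_1<\dots<z_m\}$, let $\widehat Z$ consist of $j_1-1$ points from the gap of $Q\setminus Y$ just below $z_1$, $j_{l+1}-j_l-1$ points from the gap just above $z_l$ for each $l<m$, and $s-j_m$ points from the gap just above $z_m$. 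Choose these disjointly for distinct $Z$; each gap has at least $D$ points, which suffices.
\end{enumerate}
Then $Z$ occupies exactly the positions $J$ in $Z\cup\widehat Z\in\binom Qs$, so it is monochromatic, and $N_k(n;m;c):=R_s\bigl(t;c^{\binom sk}\bigr)$ works.

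Your Step~1 was the right kind of move, but it homogenized the wrong object. It coloured $m$-sets together with pools sitting in a fixed position relative to $Z$, which leaves no freedom to place $Z$ inside $V$. Homogenizing whole $s$-sets by their order pattern is what provides that freedom.
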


\subsection{Proof sketch of Proposition \ref{p:chain_map}.} We provide an extended overview and explain how to combine the Ramsey-type result with the constrained chain map method. For details we refer to \cite{bettiRadon}. The desired chain map $\gamma_\bullet\colon \widetilde C_\bullet(K,\Z_2)\to \widetilde C_\bullet(\mathbf{R}, \Z_2)$ is built by induction on $\dim K$.
If $\dim K$ is zero-dimensional, then $K$ is a collection of vertices, and they are easily mapped to points from $P$.  Note that this gives a condition on cardinality of $P$, it must be at least as large as the number of vertices of $K$.
In the induction step, we proceed as follows. 
Let $t$ denote the dimension of $K$. In order to build the desired map $\gamma_\bullet \colon \widetilde C_\bullet(K;\Z_2) \to \widetilde C_\bullet(\mathbf{R};\Z_2)$, we want to find 
 \begin{gather*}
  \text{a constrained chain map } \chi_\bullet \colon \widetilde C_\bullet(K^{(t-1)};\Z_2) \to \widetilde C_\bullet(\mathbf{R};\Z_2) \text{ such that }\\  \text{ for each $t$-dimensional simplex $\sigma$ of $K$,}\\   \chi_{t-1}(\partial \sigma)  \text{ belongs to the trivial homology class in }  \widetilde H_{t-1}(\conv_{\mathcal F}\Phi(\sigma)). %
\tag{$\clubsuit$}\label{eq:chain_map}
\end{gather*}
Having such a map, its extension to $\gamma_\bullet \colon \widetilde C_\bullet(K;\Z_2) \to \widetilde C_\bullet(\mathbf{R};\Z_2)$ is straightforward. Indeed, since $\chi_{t-1}(\partial \sigma)$ has trivial homology in $\conv_{\mathcal F}\Phi(\sigma)$, it is a boundary, i.e. there exists a chain $\gamma_\sigma\in\widetilde C(\conv_{\mathcal F}\Phi(\sigma))$ such that $\partial \gamma_\sigma = \chi_{t-1}(\partial \sigma)$, so we can 
set $\gamma_k=\chi_k$ for all $k\leq t-1$ and extend it to the whole complex $K$ by setting $\gamma_{t}(\sigma) = \gamma_\sigma$.

It remains to describe the construction of \eqref{eq:chain_map}. Let $\sd K$ stand for the barycentric subdivision of $K$. Then the map $\chi_\bullet$ is obtained as the following composition
\[
\widetilde C_\bullet(K^{(t-1)};\Z_2) \  \xrightarrow[{\color{red}subdivision}]{\makebox[2em]{$\alpha$}}
\  \widetilde C_\bullet\pth{\skel{t-1}{(\sd K)}; \Z_2} \ 
\xrightarrow[{\color{red}Ramsey}]{\makebox[3em]{$\beta$}} \ 
\widetilde C_\bullet\pth{L; \Z_2} \xrightarrow[{\color{red} induction}]{\makebox[2em]{$\gamma'$}} \  \widetilde C_\bullet(\mathbf{R};\Z_2),
\]

where $\alpha$ maps each $\ell$-dim simplex $\sigma$ to a sum of the $\ell$-dim simplices in $\sd \sigma$ and $L$ is a $(t-1)$-dimensional skeleton of a simplex on a sufficiently large number of vertices so that we can use the Ramsey theorem (details later). Let $V(L)$ denote the set of vertices of $L$. As $L$ has dimension $t-1$, we know from induction that for $P$ sufficiently large, there is a non-trivial chain map $\gamma'$ constrained by $(\F, P)$. Let $\Psi_0 \colon L \to 2^P$ witness the constraints. The map $\Psi_0$ is then, in a natural way, extended to $\Psi \colon 2^{V(L)} \to 2^P$, for details see \cite{bettiRadon}. 

We require that $L$ has at least $N_{t+1}(n,m,c)$ vertices, where $N_{t+1}(\cdot,\cdot,\cdot)$ is given by Proposition \ref{p:ramsey_selection}, $n$ is the number of vertices of the barycentric subdivision of $K$, $m$ is the number of vertices of a barycentric subdivision of a $t$-dimensional simplex, and $c$ is the maximal number of elements in $\widetilde H_t(\conv_\F S, \Z_2)$ for $S$ finite. By our assumptions, $c \leq 2^b$.

For each $\tau \subseteq V(L)$, we consider the coloring $\rho_\tau$, which to each $\sigma \subseteq \tau$ on $t+1$ vertices assigns a singular homology class of $\gamma'(\partial \sigma)$ inside $\conv_\F(\Psi(\tau))$. That is, the vertices of $L$ play the role of $X$ in the Ramsey setting. Proposition \ref{p:ramsey_selection}
 gives us an $n$-element subset $Y$ of $V(L)$, and a strongly injective $(t+1)$-monochromatic map.
 Let $\beta$ be induced by the inclusion of $V(\sd K)$ onto $Y$.
Based on $\Psi$ we can define $\Phi: K \to 2^P$, and it can be shown that the strong injectivity yields the condition \eqref{it:first} in Definition \ref{def:constrained_chain_map} for $\Phi$. The choice of $\Psi$ then ensures condition \eqref{it:second} from the same definition, for details see \cite{bettiRadon}.
In other words, the chain map $\chi_\bullet$ is constrained by $(\mathcal F, P)$.
The $(t+1)$-monochromaticity gives that if we take a $t$-dimensional simplex $\tau$ in $K$, then for all $t$-dimensional simplices $\rho\in\sd\tau$  the singular homology class of $\gamma'\beta(\partial \rho)$ inside $\widetilde H_{t-1}(\conv_\F \Phi(\tau))$ is the same. Since the number of such $\rho$'s is even and since we work with $\Z_2$ coefficients, their sum is a cycle with trivial homology, as required in $\eqref{eq:chain_map}$. 
Lastly, the constructed chain map $\chi_\bullet$ is non-trivial as $\gamma'$ is non-trivial.

\subsection*{Optimality.} As we iteratively use a Ramsey-type argument in the proof of Theorem \ref{t:bettiRadon}, we cannot hope that the bound on the constant $r(b,d)$ is optimal. However, the result is optimal in the sense that all $\widetilde \beta_i$, $0 \leq i < \lceil \frac{d}{2}\rceil$ need to be bounded, see \cite[Example~5.1]{bettiRadon}.
The only known bounds for the Radon number in Theorem \ref{t:bettiRadon} are for $b=0$, in which case the Radon number is at most $d+3$ and this is sharp \cite[Theorem 5.2]{bettiRadon}.

We note that Theorem \ref{t:bettiRadon} retains some of the flavor of Theorem \ref{thm:montejano}. More specifically, to reach the conclusion in Theorem~\ref{thm:montejano}, we only need to control the $i$th homology of $\bigcap \mathcal G$
for $\mathcal G$ of a certain size that depends on $i$.
Similarly, from the proof of Theorem~\ref{t:bettiRadon} it follows that, in the case of trivial homology groups, the theorem can be strengthened to the following form.
\begin{theorem}
Let $b=0$ and let $\mathcal F$ be a family of sets in $\R^d$. If for each $i=0,1,\ldots, \lceil d/2 \rceil-1$, $\widetilde{\beta}_{i}(\operatorname{conv}_{\F}S;\Z_2)\leq b$ for all sets $S$ of cardinality $i+2$, then $r(\mathcal F)\leq d+2$. 
\end{theorem}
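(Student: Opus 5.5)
Let $P$ be any set of $d+2$ points in $\R^d$ (the case of repeated points is trivial: take $A,B$ to be two singletons of the same point). We want disjoint $A,B\subseteq P$ with $\conv_{\F}A\cap\conv_{\F}B\neq\emptyset$. Identify $P$ with the vertex set of $\Delta_{d+1}$ and restrict attention to $K=\Delta_{d+1}^{(k)}$ with $k=\lceil d/2\rceil$. The plan is to build directly, without Ramsey arguments, a non-trivial chain map $\gamma_\bullet\colon\widetilde C_\bullet(K;\Z_2)\to\widetilde C_\bullet(\R^d;\Z_2)$ constrained by $(\F,P)$ via the identity $\Phi(\sigma)=\sigma\subseteq P$. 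Condition~(\ref{it:first}) of Definition~\ref{def:constrained_chain_map} then holds automatically. By \cite[Lemma 3.5]{bettiRadon}, either $K$ is a homological minor of $\R^d$, or we obtain disjoint $A,B\subseteq P$ with intersecting $\F$-hulls, which is what we want.

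The construction goes skeleton by skeleton, as in the proof of Theorem~\ref{thm:topHelSharp}. Map each vertex $p$ to the $0$-chain $p$; since $p\in\conv_\F\{p\}$, this is non-trivial and constrained. Suppose $\gamma$ is defined on the $(i)$-skeleton with $\operatorname{supp}\gamma(\tau)\subseteq\conv_\F\tau$, and let $\sigma$ have $i+2$ vertices. For every facet $\tau\subset\sigma$, monotonicity of $\conv_\F$ gives $\operatorname{supp}\gamma(\tau)\subseteq\conv_\F\tau\subseteq\conv_\F\sigma$. Hence $\gamma_i(\partial\sigma)$ is an $i$-cycle in $\conv_\F\sigma$; for $i=0$ it is a reduced cycle, because the augmentation of $\partial$ of an edge vanishes. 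Since $|\sigma|=i+2$, the hypothesis gives $\widetilde H_i(\conv_\F\sigma;\Z_2)=0$, so this cycle bounds a chain $\gamma_{i+1}(\sigma)$ supported in $\conv_\F\sigma$. This uses the hypothesis exactly for $i=0,\dots,\lceil d/2\rceil-1$ and sets of size $i+2$, and it produces $\gamma$ on the $\lceil d/2\rceil$-skeleton.

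The step I expect to be the main obstacle is the last one: we need a forbidden homological minor of $\R^d$ that uses only $d+2$ vertices and dimension $\lceil d/2\rceil$. Theorem~\ref{t:hom_vank_Kampes} needs $2d+3$ vertices and gives only $r\le 2d+3$. I would therefore use instead the homological Radon Lemma~\ref{lem:homRad}, which needs all of $\Delta_{d+1}$, and so I have to check whether skeleta up to dimension $\lceil d/2\rceil$ suffice. The key observation is that a Radon partition of $d+2$ vertices always has one part of size at most $\lceil (d+2)/2\rceil$, and its complement has size at least $\lfloor (d+2)/2\rfloor$. I would aim to show that the proof of Lemma~\ref{lem:homRad} (a $\Z_2$ intersection-number/degree argument on the deleted join, or equivalently a van Kampen-type obstruction) only involves pairs $(\sigma,\tau)$ with $\dim\sigma+\dim\tau=d$, and that among these one can choose pairs with both dimensions at most $\lceil d/2\rceil$. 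Concretely, I would take the $\Z_2$ van Kampen obstruction of the complex $\Delta_{d+1}^{(\lceil d/2\rceil)}$, or of the join of two suitable skeleta, and verify that it is nonzero in the needed cohomology of the deleted product. If that fails, the fallback is to extend $\gamma$ to higher dimensions by working only with faces of small dimension, using the $\Z_2$-cobordism/parity argument of \cite{hb17}.

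Once the minor is known to be forbidden, \cite[Lemma 3.5]{bettiRadon} gives the Radon partition $A,B$ of $P$. Hence $r(\F)\le d+2$.
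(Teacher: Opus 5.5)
Your skeleton-by-skeleton construction of a chain map constrained by $\Phi(\sigma)=\sigma$ is correct. It is exactly how the $b=0$ case falls out of the constrained-chain-map proof, with no Ramsey step needed. The gap is the step you flag, and it cannot be closed: for every $d\ge 2$ the complex $\Delta_{d+1}^{(\lceil d/2\rceil)}$ \emph{is} a homological minor of $\R^d$.

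To see this, put $v_1,\dots,v_{d+1}$ at the vertices of a $d$-simplex and $v_{d+2}$ at its barycenter. Coning the boundary of the simplex from $v_{d+2}$ (a Schlegel diagram) realizes the whole $(d-1)$-skeleton of $\Delta_{d+1}$ linearly and injectively in $\R^d$, and $\lceil d/2\rceil\le d-1$. An embedding induces a non-trivial chain map sending disjoint faces to non-overlapping chains, so any van Kampen-type obstruction or parity count for this complex vanishes. Your remark about part sizes misses exactly this configuration: its only Radon partition is $\{v_{d+2}\}$ versus $\{v_1,\dots,v_{d+1}\}$, which needs a $d$-face.

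Worse, the statement as printed is false for $d\ge 2$. Take $P=\{v_1,\dots,v_{d+2}\}$ as above. For $T\subseteq P$ let $G_T$ be the image of the $(d-1)$-skeleton of the simplex on $T$, and let $\F$ consist of all $G_T$, all singletons and $\R^d$. Since the realization is an embedding, $G_T\cap G_{T'}=G_{T\cap T'}$. Hence for $|S|\ge 2$ the hull $\conv_\F S$ is either $\R^d$ or some $G_T$ with $|T|\ge 2$. Such a $G_T$ is a simplex, a $(d-1)$-sphere or $\Delta_{d+1}^{(d-1)}$, in all cases $(d-2)$-connected. So the hypothesis holds, even for all $i\le d-2$ and all finite $S$. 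Yet for disjoint $A,B\subseteq P$ we get $\conv_\F A\cap\conv_\F B=G_{A\cap B}=\emptyset$, i.e.\ $r(\F)\ge d+3$. For $d=2$ this is just the plane drawing of $K_4$ with one vertex inside the triangle.

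The printed $d+2$ must be a misprint for $d+3$. This matches the paper's remark just before the statement that for $b=0$ the Radon number is at most $d+3$ and this is sharp. Your argument proves the corrected bound verbatim if you take $|P|=d+3$ and $K=\Delta_{d+2}^{(\lceil d/2\rceil)}$:
\begin{itemize}
\item The usual van Kampen--Flores theorem concerns $\Delta_{2k+2}^{(k)}$ and $\R^{2k}$, so it needs $d+3$ vertices, not $2d+3$.
\item Its homological version makes $K$ a forbidden homological minor of $\R^d$ for even $d$.
\item Odd $d=2k-1$ reduces to the case $\R^{2k}$ by coning the chain map from a point off a hyperplane.
\end{itemize}
Alternatively, the bound $d+2$ is correct if the hypothesis is imposed for all $i\le d-1$, using the homological Radon lemma on all of $\Delta_{d+1}$.
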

The statement can be also generalized for $b>0$,
however, the dependence of the cardinality of $|S|$ on $i$ becomes more complicated.

\subsection*{Further results.} Theorem \ref{t:bettiRadon} is formulated for the Euclidean space $\mathbb R^d$, however, the constrained chain map method, as we have seen above, works for any topological space with a forbidden homological minor.  We formulate it as a theorem.

\begin{theorem}\label{t:radonGen}
There is a number $r_K(b)$
 with the following property. Let $K$ be a finite simplicial complex and $\mathbf R$ a topological space such that $K$ is a forbidden homological minor of $\mathbf R$. Let $\mathcal F$ be a family of sets in $\mathbf R$ such that 
 for every finite set $S\subseteq \mathbf R$,
 \[\widetilde{\beta}_i(\operatorname{conv}_{\mathcal F}S;\Z_2)\leq b\quad \text{for all }i< \dim K.\]
 Then the Radon number of $\mathcal F$ is at most $r_K(b)$.
\end{theorem}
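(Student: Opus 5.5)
The plan is to combine Proposition~\ref{p:chain_map} with the dichotomy for constrained chain maps, \cite[Lemma 3.5]{bettiRadon}, in the same way Theorem~\ref{t:bettiRadon} was obtained from Proposition~\ref{p:chain_map} and Theorem~\ref{t:hom_vank_Kampes}.

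First, set $r_K(b)$ to be the constant $r$ that Proposition~\ref{p:chain_map} provides for the pair $(K,b)$. This constant depends only on $K$ and $b$: in the inductive construction it is determined by the number of vertices of $K$, its dimension, and the Ramsey numbers $N_{t+1}(n,m,c)$ with $c\le 2^b$. None of these involve $\mathbf R$ or $\mathcal F$.

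Next, fix $\mathbf R$ with $K$ a forbidden homological minor of $\mathbf R$, a family $\mathcal F$ satisfying the Betti number hypothesis, and an arbitrary set $P\subseteq\mathbf R$ with $|P|=r_K(b)$. If $|P|$ exceeds $r_K(b)$, pass to a subset of that size. Proposition~\ref{p:chain_map} then gives a non-trivial chain map $\gamma_\bullet\colon \widetilde C_\bullet(K;\Z_2)\to\widetilde C_\bullet(\mathbf R;\Z_2)$ constrained by $(\mathcal F,P)$, witnessed by some $\Phi\colon K\to 2^P$.

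Then apply the dichotomy. Since $K$ is forbidden, $\gamma_\bullet$ cannot send all pairs of disjoint simplices to non-overlapping chains. So there are disjoint simplices $\sigma,\tau\in K$ such that $\gamma(\sigma)$ and $\gamma(\tau)$ overlap. We need to turn this into a Radon partition:
\begin{enumerate}
\item Condition~(\ref{it:first}) gives $\Phi(\sigma)\cap\Phi(\tau)=\Phi(\emptyset)=\emptyset$, so $A=\Phi(\sigma)$ and $B=\Phi(\tau)$ are disjoint subsets of $P$.
\item By condition~(\ref{it:second}), the supports of $\gamma(\sigma)$ and $\gamma(\tau)$ lie in $\operatorname{conv}_{\mathcal F}A$ and $\operatorname{conv}_{\mathcal F}B$ respectively.
\item Overlap means some singular simplex in the first support meets some singular simplex in the second. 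Hence $\operatorname{conv}_{\mathcal F}A\cap\operatorname{conv}_{\mathcal F}B\neq\emptyset$.
\end{enumerate}
This shows $r(\mathcal F)\le r_K(b)$.

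The main obstacle is checking that Proposition~\ref{p:chain_map} really holds for an arbitrary topological space $\mathbf R$ and not just $\R^d$, and that its constant is uniform. Both should follow from the proof sketch: the induction on $\dim K$, the subdivision map $\alpha$, the Ramsey selection of Proposition~\ref{p:ramsey_selection} and the parity argument all use only $\Z_2$-homology of the sets $\operatorname{conv}_{\mathcal F}S$. Nothing Euclidean enters. The one point to watch is the hypothesis in the inductive step: it uses bounds on $\widetilde\beta_{t-1}$ for the $(t-1)$-skeleton. Because the hypothesis of Theorem~\ref{t:radonGen} bounds $\widetilde\beta_i$ for every $i<\dim K$, it covers all the skeleta used. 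Whether the skeleta are themselves forbidden minors does not matter, since the induction only builds chain maps and never invokes forbiddenness for them.
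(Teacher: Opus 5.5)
Your proposal is correct and takes essentially the paper's route: the paper likewise sets $r_K(b)$ to be the constant of Proposition~\ref{p:chain_map} and concludes via the dichotomy \cite[Lemma 3.5]{bettiRadon}, which you reprove inline from conditions (\ref{it:first}) and (\ref{it:second}) of Definition~\ref{def:constrained_chain_map}. Proposition~\ref{p:chain_map} is already stated for an arbitrary space $\mathbf R$, so your worry about extending it beyond $\R^d$ is settled there. Your observation that forbiddenness is used only in the final step, and never inside the inductive construction, is consistent with the paper's sketch.
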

For the proof see \cite[Prop. 3.6 and 3.7]{bettiRadon} and the text above.

Notice that the theorem above holds for manifolds as long as we know that there is a forbidden homological minor for them. It is an open question whether for each real manifold there is a forbidden homological minor. 
Similarly, the statement holds for (finite) $k$-dimensional simplicial complexes as each of these complexes embeds in $\R^{2k+1}$. 

Another direction of related results is provided by a beautiful result of Holmsen and Lee \cite{boundedRadon_fractHelly} stating that whenever we have a space $X$ with a closure operator $\conv_\mathcal C$ and a Radon number $r$, the fractional Helly number of $\mathcal C$ is bounded above by some large constant $f(r)$ that depends solely on $r$ and not on $\mathcal C$ itself. For that we note that a family $\mathcal C$ satisfies a \emph{fractional Helly property of order $m$} if there exists
an integer $m$ and a function $\beta: (0,1] \to (0,1]$ for which
the following holds:
Every finite subfamily $\mathcal G \subseteq \mathcal C$ that contains at least $\alpha \binom{|\mathcal G|}{m}$ intersecting $m$ -tuples
contains an intersecting subfamily of size at least $\beta(\alpha)|\mathcal G|$. The smallest integer $m$ such that $\mathcal C$ satisfies the fractional Helly property of order $m$ is called
the \emph{fractional Helly number} for $\mathcal C$. For more details on Helly-type problems in convexity spaces, see a recent survey by Holmsen \cite{holmsen-survey}.

By a standard method of Alon, Kalai, Matoušek and Meshulam \cite{transversal-hypergraph}, bounded fractional Helly number guarantees a $(p,q)$-theorem, and existence of $\varepsilon$-nets, see also  \cite{boundedRadon_fractHelly}. As Theorem~\ref{t:bettiRadon} basically says that under the assumptions on constant homology of intersections we have a bounded Radon number; the above-mentioned theorems extend to this setting as well \cite{boundedRadon_fractHelly, bettiRadon}.  
The main drawback is that the fractional Helly number is very large. Patáková \cite{bettiRadon} reduced it in $\R^2$ to a linear function in $b$ with an optimal value for $b=0$. Further significant improvement was obtained by Goaoc, Holmsen, Patáková \cite{goaoc2024intersectionpatternsspacesforbidden} who focused on the fractional Helly and $(p,q)$-theorems. In fact, they showed that the fractional Helly number is $d+1$ and that the $(p,q)$-theorem holds for $p\geq q\geq d+1$ under the assumptions of bounded homology of intersections.
Their proof is very involved. They introduce a colorful analogue of the "homology" condition and prove a weak colorful Helly theorem. In order to deal with the $m$-partite structure of the colorful classes, they work with cubical complexes instead of simplicial ones, define chain maps inspired by stair-case convexity, and the Ramsey-type statement is replaced by a subgrid lemma \cite[Prop. 4.1]{goaoc2024intersectionpatternsspacesforbidden}, which can be of independent interest.

Furthermore, the following result indicates why $\alpha$ in  fractional Helly-type theorems is assumed to be a fixed constant.
More precisely, if we have a family $\F$ of $n$ sets in the plane such that for each subfamily $\mathcal G\subseteq\F$ its intersection is either empty or path-connected, then by the previous discussion, the fractional Helly number of $\F$ is at most $3$, see \cite{goaoc2024intersectionpatternsspacesforbidden}.
This means that if cubically many triples of sets from $\F$ intersect, a non-trivial fraction of all sets intersect.
The following theorem shows, in a strong form, that assuming only quadratically many intersections of triples is insufficient.
\begin{theorem}[\cite{KalPat}]
For any $n\geq 6$, there is a family $\F$ of $n$ sets in $\R^2$ such that the intersection of every subfamily is either empty or path-connected and for which the following holds:
\begin{itemize}
\item For any $\mathcal G\subseteq \F$ with $|\mathcal G|\geq 4$, $\bigcap \mathcal G=\emptyset$.
\item There are at least $\binom{n-1}{2}+\frac{1}{3}\left( n^2-6n+3\right)$ families $\mathcal G \subseteq \F$
with $|\mathcal G|=3$ and $\bigcap\mathcal G\neq \emptyset$.
\end{itemize}
\end{theorem}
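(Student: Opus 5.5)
The plan is an explicit construction. The statement is an existence result, so it suffices to build, for each $n\geq 6$, sets $F_1,\ldots,F_n\subseteq\R^2$ and check the three requirements: intersections are empty or path-connected, no four sets meet, and the number of intersecting triples is large.

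\textbf{Step 1: the first $n-1$ sets.} The term $\binom{n-1}{2}$ suggests first arranging $F_1,\ldots,F_{n-1}$ so that many triples $\{i,j,k\}\subseteq[n-1]$ meet in a single point $p_{ijk}$, with all these points distinct. The distinctness rules out four-fold intersections among the first $n-1$ sets. Each set $F_i$ is a thin region obtained by thickening a union of arcs. I aim for the following pattern.
\begin{itemize}
\item Each pairwise intersection $F_i\cap F_j$ is a single arc $e_{ij}$, which is automatically path-connected.
\item The arcs $e_{ij}$, $e_{ik}$, $e_{jk}$ are concurrent exactly at $p_{ijk}$.
\item Arcs with disjoint index pairs are disjoint.
\end{itemize}
Concretely, I put the $n-1$ sets as "combs" along a convex curve. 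Each $F_i$ is a horizontal strip with vertical teeth, and the teeth of $F_i$ and $F_j$ overlap in one region, so that the overlap pattern realises one triangle per intended triple. The goal is the exact count $\binom{n-1}{2}$; I would try to get it from the structure of a planar ordering, meeting triples of consecutive type $\{1,j,j+1\}$-like families, rather than from all $\binom{n-1}{3}$ triples. Planarity obstructs realising all $\binom{n-1}{3}$ triples, since the arcs $e_{ij}$ with disjoint indices must not cross.

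\textbf{Step 2: the last set.} The term $\tfrac13(n^2-6n+3)$ should come from $F_n$ meeting roughly $n^2/3$ of the pairwise regions $F_i\cap F_j$ away from all triple points.

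The main difficulty is that $F_n\cap F_i$ must remain path-connected. Any path inside $F_i$ linking two pieces of $F_n$ would tend to pass through some $p_{ijk}$ and create a four-fold intersection. To avoid this, I would make $F_n$ itself thick: a disk, or an annulus cut along a radius. Its intersection with each $F_i$ is then a connected sub-strip of the comb $F_i$, and that sub-strip crosses many teeth of other combs but avoids every triangle region.

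The count is then $\sum_i(\text{number of } j \text{ whose comb meets } F_i \text{ inside } F_n)$, halved. I expect to need the teeth distributed in three interleaved groups, which is where a factor $1/3$ naturally appears.

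\textbf{Step 3: verification.} I would check path-connectedness of $F_i\cap F_j\cap F_k$ (a point or empty), of $F_i\cap F_j$ (a strip), and of $F_n\cap F_i$ (a sub-strip). I would then confirm that every four-fold intersection is empty, because triple intersections are small disjoint triangles placed outside $F_n$. Finally I count the triples.

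The main obstacle is Step 2: reconciling connectivity of all pairwise intersections with the absence of four-fold intersections while still collecting quadratically many triples through one set. I would first attempt the interleaved three-group comb layout, and adjust the lower-order terms ($-6n+3$) by discarding boundary pairs.
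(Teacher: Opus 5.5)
The gap is the construction itself: what you have is a plan, not a proof. The statement is a pure existence claim (the survey only quotes it from \cite{KalPat}), so a proof consists of an explicit family together with a check of three things: path-connectedness, the absence of fourfold intersections, and the count. None of these is carried out, and the specific things you do commit to do not deliver the numbers. In Step~1 the only mechanism you name, triples ``of type $\{1,j,j+1\}$'', gives $O(n)$ intersecting triples, not $\binom{n-1}{2}$. Worse, $\binom{m}{2}$ intersecting triples among $m=n-1$ sets with no four meeting already exceeds the value $\binom{m-1}{2}$ that the upper bound theorem allows for convex sets. For $n=6$ you are asking for five sets all ten of whose triples meet while no four do. So Step~1 is not a warm-up: it is a non-convex phenomenon of the same kind as the theorem and needs its own explicit construction. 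The factor $\frac13$ and the term $-6n+3$ are fitted to the answer (``three interleaved groups'', ``discarding boundary pairs''), not derived.

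Step~2, which you flag yourself, fails as described. Your Step~1 sets (each $F_i\cap F_j$ an arc $e_{ij}$, meeting the other arcs of $F_i$ only at the points $p_{ijk}$) and your Step~2 ``combs'' are different objects, and the difference is the whole difficulty. Suppose $F_i$ were just the union of its arcs. Then a path in $F_i$ from a point of $e_{ij}$ to a point off $e_{ij}$ must pass through some $p_{ijl}$, which is not in $F_n$. So connectedness of $F_n\cap F_i$ would confine it to the single arc $e_{ij}$. Then $i$ lies in at most one triple with $n$, and $F_n$ lies in at most $\lfloor (n-1)/2\rfloor$ intersecting triples.

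Each $F_i$ therefore needs private material inside $F_n$ joining many of its arcs. That material must never meet an arc $e_{kl}$ with $i\notin\{k,l\}$, since this would create a triple inside $F_n$ and hence a fourfold intersection. It must also never add a new component to some $F_i\cap F_k$. Taking $F_n$ ``thick'' does not provide this: a disk can meet a comb in several teeth and miss its spine. So your claim that $F_n\cap F_i$ is a connected sub-strip is unsupported.

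As a hint, $\binom{n-1}{2}$ is exactly the number of triples $\{F_i,F_j,F_n\}$ for $n-1$ lines in general position together with the whole plane, which is the extremal convex configuration. This suggests the opposite split: every pair of the first $n-1$ sets meets inside $F_n$, and the surplus $\frac13(n^2-6n+3)$ consists of triples of the first $n-1$ sets meeting outside $F_n$. A local gadget is then three sets meeting in a single point $p\notin F_n$ whose pairwise intersections are arcs running from $p$ into $F_n$. One of the three closes up around the other two, so that every trace on $F_n$ stays connected. For $n=6$, one such gadget, with $F_6$ a disk and two further sets drawn inside it, already gives the required $11$ triples. What a proof must supply is a way for quadratically many such gadgets to share the sets consistently, and the resulting count.
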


\section{Open problems}\label{sec:open}
We conclude this survey by listing some questions and problems that are closely related to the presented theorems. Some of these questions and many more are mentioned in \cite{Barany_Kalai, bettiRadon, goaoc2024intersectionpatternsspacesforbidden, kalai_conjectures, gil_imre_birthday}. 
\begin{enumerate}
\item In Theorem~\ref{t:bettiRadon}, the upper bound on $r(\mathcal F)$ in terms of $b$ and $d$ is very large. On the other hand, the best known lower bound is $(b+1)(d+2)+1$ as follows from \cite[Example 2]{hb17}. Improve upper and/or lower bounds on the Radon number.
\item Find good lower and upper bounds for the constant $N_k$ in Proposition~\ref{p:ramsey_selection}.
\item The applications of Theorem~\ref{t:radonGen} are hindered by the fact that forbidden homological minors are only known for $\R^d$. 
Find forbidden homological minors for other topological spaces. 

\item Suppose that $\F$ is a family of sets in $\R^d$ such that for any non-empty finite subfamily $\mathcal G$ and every $i=0,1,\ldots, \lceil d/2\rceil -1$, $\widetilde{\beta}_i(\bigcap \mathcal G;R)\leq p(|\mathcal G|)$, where $p$ is some function. 
Does $\F$ enjoy the fractional Helly property of some order which depends only on $p$ and~$d$?
Conjecture 17 in~\cite{gil_imre_birthday} predicts that this is true for $p(|\mathcal G|)$ of the form $\gamma\cdot|\mathcal G|^{d+1}$,
where $\gamma>0$ is a constant. 
The first step was performed in \cite{goaoc2024intersectionpatternsspacesforbidden}, where the conjecture was confirmed for $p$ being a constant function. Subsequently, it was refined for $p$ growing extremely slowly, see ~\cite{bin2024fractionalhellytheoremset}.
\end{enumerate}

\bibliographystyle{alpha}
\bibliography{survey}

\end{document}